\newcommand{\nocontentsline}[3]{}
\newcommand{\tocless}[2]{\bgroup\let\addcontentsline=\nocontentsline#1{#2}\egroup}
\newcommand{\IfLabelExistsTF}[3]{\@ifundefined{r@#1}{#3}{#2}}
\renewcommand{\xRightarrow}[2][]{\ext@arrow 03{10}{10}\Rightarrowfill@{#1}{#2}}
\theoremstyle{claimstyle}
\newtheorem{fact}[claim]{Fact}
	\Crefname{fact}{Fact}{Facts}
	\crefname{fact}{fact}{facts}
	\Crefname{fact}{Fact}{Facts}
\tikzset{
	nat/.style={double,double equal sign distance},
	nat>/.style={nat,-implies},
	<nat/.style={nat,implies-},
	descr/.style={anchor=center,fill=white},
	crossing/.style={preaction={draw=white,-,line width=#1}},
	crossing/.default=2pt,
	extended/.style={shorten >=-#1, shorten <=-#1},
	extended/.default=2pt,
	text decoration line/.style={
		line width=.1ex,solid,
		rounded corners=0,
		round cap-round cap},
	capped line/.style={round cap-round cap},
	diagram/.style={
		font=\small,auto,scale=1.5,-latex,
		baseline=(current bounding box.center)}
}
\newcommand{\customlabel}[4][0]{%
	\protected@write\@auxout{}{\lstring\newlabel{#3}{{#4}{\thepage}{#4}{#3}{}}}%
	\protected@write\@auxout{}{\lstring\newlabel{#3@cref}{{[#2][#1][#1]#4}{\thepage}}}%
}
\newcommand{\cat}[1]{\textnormal{\textsf{#1}}\xspace}
\newcommand{\Set}{\cat{Set}}
\newcommand{\kl}{\mathcal{K}\mspace{-1mu}l}
\newcommand{\Sup}{\cat{Sup}}
\DeclareMathOperator*\supp{\rm{supp}}
\newcommand{\defeq}{\triangleq}
\newcommand{\rightdcirc}{\makebox[1.1\width][l]{\ensuremath{%
\longrightarrow%
\makebox{$\mkern-24mu\color{white}{\bullet}\mkern+12mu$}%
\makebox{$\mkern-21mu\circ\mkern+10mu$}%
\ignorespacesafterend}}}%
\title{Two modes of recognition: algebra, coalgebra, and languages}
\titlerunning{Two modes of recognition}
\author{Tomasz Brengos}
	{Faculty of Mathematics and Information Science, Warsaw University of Technology, ul. Koszykowa 75 00-662 Warszawa, Poland }
	{t.brengos@mini.pw.edu.pl}
	{} 
	{Supported by the grant of Warsaw University of Technology no.~504M for young researchers.}
\author{Marco Peressotti}
	{Department of Mathematics and Computer Science, University of Southern Denmark, Campusvej 55, DK-5230 Odense M, Denmark}
	{peressotti@imada.sdu.dk}
	{https://orcid.org/0000-0002-0243-0480}
  	{Partially supported by the Independent Research Fund Denmark, Natural Sciences, grant DFF-7014-00041.}
\authorrunning{T.~Brengos and M.~Peressotti}
\keywords{Duality, Lawvere theory, Kleeny theorem, Coalgebraic saturation}
\begin{document}

\maketitle
\begin{abstract}\looseness=-1
The aim of the paper is to build a connection between two approaches towards categorical language theory: the coalgebraic and algebraic language theory for monads. For a pair of monads modelling the branching and the linear type we defined regular maps that generalize regular languages known in classical non-deterministic automata theory. These maps are behaviours of certain automata (\ie they possess a coalgebraic nature), yet they arise from Eilenberg-Moore algebras and their homomorphisms (by exploiting duality between the category of Eilenberg-Moore algebras and saturated coalgebras).
 
Given some additional assumptions, we show that regular maps form a certain subcategory of the Kleisli category for the monad which is the composition of the branching and linear type. Moreover, we state a Kleene-like theorem characterising the regular morphisms category in terms of the smallest subcategory closed under certain operations. Additionally, whenever the branching type monad is taken to be the powerset monad, we show that regular maps are described as maps recognized by certain functors whose codomains are categories with all finite hom-sets. 

We instantiate our framework on classical non-deterministic automata, tree automata, fuzzy automata and weighted automata.

\end{abstract}

\section{Introduction}
\label{section:introduction}
\looseness=-1
Automata theory is one of the core branches of theoretical computer science and formal language theory. One of the most fundamental state-based structures considered in the literature is a non-deterministic automaton and its relation with languages. Non-deterministic automata with a finite state-space are known to accept  \emph{regular languages}, characterized as subsets of words over a fixed finite alphabet that can be obtained from  the languages consisting of words of length less than or equal to one via a finite number of applications of three types of operations:  union, concatenation and the Kleene star operation \cite{Hopcroft:2000:IAT:557657}. This result is known under the name of \emph{Kleene theorem for regular languages}. It readily generalizes to automata accepting other types of input with more general versions of this theorem stated in the category-theoretic setting in the context of coalgebras and Lawvere theories \cite{esik2011,esik2013,esikhajgato2009:ai,brengos2018:concur,pin:automata}.
Coalgebraic language theory is based on a unifying theory of different types of automata and has been part of the focus of the coalgebraic community in recent years (\eg \cite{rutten:universal,jacobs08:cmcs,jacobssilvasokolova2012:cmcs,bonchi2015killing}). Our paper puts the main emphasis on a part of this research which describes a general theory of systems with internal transitions \cite{silva2013:calco,brengos2014:cmcs,brengos2015:lmcs,bonchi2015killing,brengos2016:concur,brengos2015:jlamp,mp2013:weak-arxiv}.
Intuitively, these systems have a special computation branch that is silent. This special branch, usually denoted by the letter $\tau$ or $\varepsilon$, is allowed to take several steps and in some sense remain neutral to the structure of a process. These systems arise in a natural manner in many branches of theoretical computer science, among which are process calculi \cite{milner:cc} (labelled transition systems with $\tau$-moves and their weak bisimulation) or automata theory (automata with $\varepsilon$-moves), to name only two. The approach from \cite{brengos2015:lmcs,brengos2015:jlamp} suggests that these systems should be defined as coalgebras whose type is a monad.  This treatment allows for an elegant modelling of weak behavioural equivalences \cite{brengos2015:jlamp,brengos2016:concur,brengos2019:lmcs} among which we find Milner's weak bisimulation  \cite{milner:cc}. Each coalgebra $\alpha\colon X\to TX$ becomes an endomorphism $\alpha\colon X\rightdcirc X$ in the Kleisli category for the monad $T$ and Milner's weak bisimulation on a labelled transition system $\alpha$ can be defined to be a strong bisimulation on its \emph{saturation} $\alpha^\ast$ which is the smallest LTS over the same state space satisfying $\alpha\leq \alpha^\ast$,  $\mathsf{id}\leq \alpha^\ast \text{ and }\alpha^\ast \cdot \alpha^\ast \leq \alpha^\ast$ (where the composition and the order are given in the Kleisli category for the LTS monad) \cite{brengos2015:lmcs}.
Hence, intuitively, $\alpha^\ast$ is the reflexive and transitive closure of $\alpha$.

Saturation $\alpha\mapsto \alpha^\ast$ can also be used as one of the main components of the coalgebraic language theory.  Indeed, the language accepted by an automaton whose transition map is modelled by $\alpha$ can be defined in terms of a simple expression involving its saturation $\alpha^\ast\colon X\to TX$ calculated in the Kleisli category for the monad $T$ (see \cite{bloomesik:93,esik2013,brengos2018:concur}). \emph{Regular} languages, \ie languages accepted by automata with finite carriers for carefully chosen transition $\alpha$ form a subclass of the class of all languages accepted by automata of type $T$.

\looseness=-1
Languages have also been studied from the algebraic perspective (\eg \cite{weil2004,Pin:1986:VFL:576708,eilenberg,Gradel:2002:ALI:938135,Wilke93,pin:automata}) with a general approach presented on the categorical level in the context of Eilenberg-Moore algebras for a monad in \cite{bojanczyk2015}.  For set-based algebras, a language (\ie a subset of the carrier of a given algebra) is said to be \emph{recognizable} if it is a preimage of a subset of a finite algebra under an algebra homomorphism. Using this approach one may \eg characterize regular languages for non-deterministic automata as recognizable languages for the monoid of words $(\Sigma^\ast,\cdot,\varepsilon)$. An algebraic characterization of classical regular languages is one of several examples of a similar phenomenon, where regular and recognizable languages meet (see \loccit).

\subparagraph{Contributions}\looseness=-1
We show existence of a general coincidence between an algebraic and coalgebraic approach towards defining languages stated on a categorical level by building on the duality between Eilenberg-Moore algebras and saturated coalgebras. In this setting, we define regular languages as a class of morphisms (herein, \emph{regular morphisms}) arising from automata whose coalgebra structure is saturated and is dual to an Eilenberg-Moore algebra. As we put our emphasis on automata with finite carriers, it is natural to consider Lawvere theories since a Lawvere theory for a monad is, roughly speaking, the part of its Kleisli category which is suitable to model morphisms with finite domains and codomains only \cite{lawvere:1963,hyland:power:2007}.  Lawvere theories become our natural habitat where  we provide Kleene-like theorem at the level of regular morphisms.
Additionally, in the case of generalized non-deterministic automata we show that regular languages (with variables) which are modelled by arrows in one Lawvere theory are essentially subsets of arrows of another Lawvere theory recognized by Lawvere theory morphisms whose targets are finitary theories. Hence, we obtain a general algebraic characterisation of such languages.

\section{Basic notions}\label{section:basic_notions}

We assume the reader is familiar with basic category theory concepts like a functor, a monad $(T,\mu,\eta)$, an adjunction, a Lawvere theory, a Kleisli category $\kl(T)$ and an Eilenberg-Moore category $\mathcal{EM}(T)$ for a monad $T$, a distributive law $\lambda\colon ST\implies TS$ of a monad $(S,m,e)$ over a monad $(T,\mu,\eta)$, a lifting of a monad $(S,m,e)$ to a monad $(\overline{S},\overline{m},\overline{e})$ on $\kl(T)$ and the fact that if $(S,m,e)$ lifts to $(\overline{S},\overline{m},\overline{e})$ on $\kl(T)$ then it yields a monadic structure on $TS$ whose Kleisli category satisfies $\kl(TS)=\kl(\overline{S})$ (see \eg \cite{maclane:cwm,barrwells:ttt,mulry:mfps1993} for details).

The most important example of a monad used throughout the paper is the \emph{powerset monad} $(\mathcal{P}\colon \Set\to \Set,\bigcup,\{-\})$. Moreover, we also consider the following running example.

\begin{example}\label{example:monoid_monad} \label{example:kleisli_powerset}\label{example:lts_monoid_lifting} \label{example:eilenberg_moore_algebras_for_monads}
\looseness=-1
Let $M=(M,\cdot,1)$ be any monoid. The functor $M\times \mathcal{I}d\colon \Set\to \Set$ carries a monadic $(M\times \mathcal{I}d,m,e)$ with $m_X\colon  M\times M\times X\to M\times X; (m,n,x)\mapsto (m\cdot n,x)$ and $e_X\colon X\to M\times X; x\mapsto (1,x)$.
The most often used example in our paper is the monad $\Sigma^\ast\times \mathcal{I}d$, where $\Sigma^\ast$ is the free monoid over a set $\Sigma$.
Eilenberg-Moore algebras for the monad $M\times \mathcal{I}d\colon\Set\to \Set$ consist of algebras $a\colon M\times X\to X$ satisfying $a(1,x) = x$ and $a(m\cdot n,x)=a(m,a(n,x))$ for any $x\in X$.
The monad $M\times \mathcal{I}d$ lifts to a monad $\overline{M}\colon\kl(\mathcal{P})\to \kl(\mathcal{P})$ via the distributive law $\theta\colon M\times \mathcal{P}\to \mathcal{P}(M\times \mathcal{I}d)$ given $\theta_X\colon M\times \mathcal{P}X\to \mathcal{P}(M\times X); (m,Y) \mapsto \{(m,y) \mid x \in Y\}$ \cite{brengos2015:lmcs,brengos2019:lmcs}. The monad $\overline{M}$ maps any object $X$ in $\kl(\mathcal{P})$ onto $\overline{M}X = M\times X$ and any map $f\colon X\to \mathcal{P}Y$ between $X$ and $Y$ in $\kl(\mathcal{P})$ onto $\overline{M}f = M\times X\stackrel{M\times f }{\to} M\times \mathcal{P}Y\stackrel{\theta_Y}{\to}\mathcal{P}(M\times Y)$. Its multiplication and unit are $\overline{m}=\{-\}\circ m$ and $\overline{e}=\{-\}\circ e$ respectively. The Kleisli category $\kl(\overline{M})$ has sets as objects and maps $X\to \mathcal{P}(M\times Y)$ as morphisms from $X$ to $Y$. The composition in the Kleisli category $\kl(\overline{M})$ is given for any $f\colon X\to \mathcal{P}(M\times Y)$ and $g\colon Y\to \mathcal{P}(M\times Z)$ as $(g\cdot f)(x) = \{(m_1\cdot m_2,z) \mid (m_2,z)\in g(y) \text{ and }(m_1,y)\in g(x) \}$.
Identity morphisms are the maps $x\mapsto \{(1,x)\}$. The lifting $\overline{M}$ of the monad $M\times \mathcal{I}d$ yields a monadic structure on the functor $\mathcal{P}(M\times \mathcal{I}d)$. For $M=\Sigma^\ast$, this monad $\mathcal{P}(\Sigma^\ast\times \mathcal{I}d)$ is called \emph{LTS monad} \cite{brengos2015:lmcs}.
Eilenberg-Moore algebras for the lifting $\overline{M}\colon \kl(\mathcal{P})\to \kl(\mathcal{P})$ of $M\times \mathcal{I}d$ to $\kl(\mathcal{P})$ are algebras whose underlying morphism is $a\colon \overline{M}X\rightdcirc X=M\times X\to \mathcal{P}X$,\footnote{In order to distinguish morphisms from the Kleisli category $\kl(T)$ and the base category $\mathsf{C}$ we often denote the former by $\rightdcirc$ and the latter by $\to$. Hence, $X\rightdcirc Y = X\to TY$ for any two objects.} where $a(1,x) = \{x\}$ and $a(m\cdot n,x) = \bigcup\{ a(n,y) \mid y \in  a(m,x)\}$.
\end{example}

\subparagraph{Lawvere theories}  The primary interest of the theory of automata and formal languages focuses on  automata over a \emph{finite} state space. Hence, since, as stated in the introduction, we are interested in systems with internal moves (\ie coalgebras $X\to TX$ for a monad $T$), without any loss of generality we may focus our attention on coalgebras of the form $n\to Tn$, where $n\defeq \{1,\ldots,n\}$ with $n=0,1,\ldots$ for a $\Set$-monad $T$. These morphisms are endomorphisms in a full subcategory of the Kleisli category for $T$ whose objects are $n$ for $n=0,1\ldots$ which is known under the name of \emph{(Lawvere) theory} and is denoted by $\mathbb{T}_T$. That is why we will often restrict the setting of this paper to Lawvere theories. Because we are interested in the coalgebraic essence of a Lawvere theory, we adopt the definition which is dual to the classical notion \cite{lawvere:1963}.

\subparagraph{Coalgebras and saturation}
\looseness=-1
Saturated coalgebras were introduced in \cite{brengos2014:cmcs,brengos2015:lmcs} in the context of coalgebraic weak bisimulation. As noticed in \emph{loc. cit.}  the concept of a saturated map can be given in any order enriched category\footnote{A category is \emph{order enriched} if each hom-set is a poset with the order being preserved by the composition.} $\mathsf{K}$: we say that an endomorphism $\alpha\colon X\to X$ in $\mathsf{K}$ is \emph{saturated} if $\mathsf{id}\leq \alpha\text{ and }\alpha\circ \alpha\leq \alpha$. Whenever $S=(S,m,e)$ is a monad then $\alpha\colon X\to SX$ is \emph{saturated} if the endomorphism $\alpha\colon X\rightdcirc X$ is saturated in the order enriched category $\kl(S)$. If we assume $(S,m,e)$ is a monad on an order enriched category $\mathsf{K}$ and $S$ is \emph{monotonic}\footnote{ $f\leq g \implies Sf\leq Sg$ for any pair of morphisms in $\mathsf{K}$ with a common domain and codomain.} then we can introduce an order on the category $\kl(S)$ which arises from the order enrichment of the base category $\mathsf{K}$ in an obvious way. In this case, the inequalities that define a saturated endomorphism can be translated into the language of  $\mathsf{K}$ by:
$e\leq \alpha \text{ and }m \circ S\alpha \circ \alpha \leq \alpha$. These two axioms bear resemblance to the axioms that define Eilenberg-Moore algebras for $S$. The purpose of Section~\ref{section:duality} is to elaborate more on this connection.

Let $\kl(S)$ be order enriched. By $\mathsf{Sat}(S)$ we denote the category whose objects are saturated $S$-coalgebras and morphisms are maps $f\colon X\to Y\in \mathsf{K}$ between the carriers of $\alpha\colon X\to SX$ and $\beta\colon Y\to SY$ which satisfy $Sf\circ \alpha \leq \beta \circ f$. Following \cite{brengos2015:lmcs,brengos2015:jlamp} we say that the monad $S$ \emph{admits saturation} if for any $S$-coalgebra $\alpha\colon X\to SX$ there is $\alpha^\ast\colon X\to SX\in \mathsf{Sat}(S)$ such that $\alpha^\ast$ is the smallest saturated coalgebra which satisfies $\alpha\leq \alpha^\ast$ and  $f\circ \alpha \Box \beta \circ Sf \implies f\circ \alpha^\ast \Box \beta^\ast \circ Sf$ for $\Box\in \{\leq ,\geq \}$ and any $f\colon X\to Y\in \mathsf{K}$.

\begin{example}\label{example:saturation}\label{example:saturation_for_LTS}
The monad $\overline{M}\colon \kl(\mathcal{P})\to \kl(\mathcal{P})$ from Ex. \ref{example:monoid_monad} admits saturation \cite{brengos2015:lmcs}. Given any $\alpha\colon X\rightdcirc \overline{M}X = X\to \mathcal{P}(M\times X)$ the saturated map $\alpha^\ast\colon X\to\mathcal{P}(M\times X)$ satisfies:
$x\stackrel{1}{\to}_{\alpha^\ast} x$  for any $x\in X$ and $x\stackrel{m_1\cdot \ldots \cdot m_k}{\longrightarrow}_{\alpha^\ast} x' \iff x\stackrel{m_1}{\to}_\alpha x_1\stackrel{m_1}{\to}_\alpha \ldots \stackrel{m_k}{\to}_\alpha x_k = x'$.
\end{example}

\section{Classical automata and regular languages, revisited}\label{section:classical_regular_revisited}

The main purpose of the section is to restate the basic properties and definitions from non-deterministic automata theory in the  language of category theory. We will elaborate more on the (co)algebraic characterisation of classical regular languages from this perspective. This section should serve as a more detailed introduction to the remaining part of the paper.

 A (finite non-deterministic) \emph{automaton} \cite{Hopcroft:2000:IAT:557657}  is a tuple $\mathcal{A}=(X,\delta \subseteq X\times \Sigma \times X,  \mathcal{F}\subseteq X)$, where $X$ is a finite set called the \emph{set of states}, $\delta$ is the \emph{transition} and $\mathcal{F}$ is the set of \emph{final states}.
 The language $L(\mathcal{A},x)$ of a state $x\in X$ in the automaton $\mathcal{A}$ is defined to be the set of words $\{w\in \Sigma^\ast \mid x\stackrel{w}{\to} x'\in \mathcal{F}\}$, where $x\stackrel{\varepsilon}{\to} x'$ iff $x=x'$ and  $x\stackrel{w}{\to}x' \stackrel{\Delta}{\iff} x\stackrel{a_1}{\to} x_1 \stackrel{a_2}{\to} x_2\ldots \stackrel{a_n}{\to} x_n=x'$ for $w=a_1\ldots a_n$ and $y\stackrel{a}{\to}y' \stackrel{\Delta}{\iff} (y,a,y')\in \delta$ for $y,y'\in X$ and $a\in \Sigma$ \footnote{Note that the textbook definition of an automaton usually includes the specification of the so-called \emph{initial state}  (see \eg \cite{Hopcroft:2000:IAT:557657}). Then the language of an automaton is defined to be the language of its initial state.}. Note that since $X$ is finite we can assume without any loss of generality that $X=n$ for some positive integer $n$. We can see that $\delta$ can be encoded by a map $\alpha\colon n\to \mathcal{P}(\Sigma\times n); i\mapsto \{(a,j)\mid i\stackrel{a}{\to}j \text{ in }\mathcal{A}\}$. Hence, the automaton $\mathcal{A}$ can be viewed as a pair $(\alpha\colon n\to \mathcal{P}(\Sigma\times n),\mathcal{F}\subseteq n).$

\subparagraph{Automata in categories} We will now focus on the categorical perspective on non-deterministic automata and their languages.  First, we introduce basic players of this paragraph and establish the notation. Here we work with two main categories, namely: $\kl(\mathcal{P})$ and $\Set$. 
These two categories share the class of objects: all sets. What is different is the morphisms and the compositions: we denote the morphisms from $\kl(\mathcal{P})$ by $\rightdcirc$ and the maps from $\Set$  by $\to$. 
\begin{wrapfigure}[4]{r}{0.2\textwidth}%
\vspace{-1.4ex}
\small$\infer{\infer{{\alpha\colon n\rightdcirc \overline{\Sigma^\ast} n}}{\alpha\colon n\to \mathcal{P}(\Sigma^\ast \times n)}}{\alpha\colon n\to \mathcal{P}(\Sigma\times n)}$%
\end{wrapfigure}
Hence, $X\rightdcirc Y = X\to \mathcal{P}Y$. Considering the fact that the monad   $(\Sigma^\ast\times \mathcal{I}d,m,e)$ lifts to the monad $(\overline{\Sigma^\ast},\overline{m},\overline{e})=(\overline{\Sigma^\ast},\{-\}\circ m,\{-\}\circ e)$ on $\kl(\mathcal{P})$ (as in Ex. \ref{example:monoid_monad}) the codomain of the transition map $\alpha$ of $\mathcal{A}$ changes depending on which category it is considered in---as summarised aside. Since the monad $\overline{\Sigma^\ast}$ admits saturation we also have the map $\alpha^\ast\colon n\rightdcirc \overline{\Sigma^\ast}n =n\to \mathcal{P}(\Sigma^\ast\times n)$ given by (\cf \cref{example:saturation}):
$
\alpha^\ast(i)= \{(\varepsilon,i)\}\cup \{(a_1\ldots a_k,j)\mid i\stackrel{a_1}{\to}_\alpha \ldots \stackrel{a_k}{\to}_\alpha i_k=j\}.
$
Note that there is an obvious bijection between the set of all languages $L\subseteq \Sigma^\ast$ and maps $1\to \mathcal{P}(\Sigma^\ast\times 1)$. This allows us to represent the language $L(\mathcal{A},i)$ of a state $i$ in the automaton $\mathcal{A}$ in terms of a morphism  $L(\alpha,\mathcal{F},i)\colon 1\to \mathcal{P}(\Sigma^\ast\times 1) = 1\rightdcirc \overline{\Sigma^\ast} 1$, which maps the unique element of $1$ onto $\{(w,1) \mid w\in L(\mathcal{A},i)\}$. It is easy to see that this language morphism can be expressed in terms of  a composition of maps calculated in $\kl(\mathcal{P})$ (\emph{conf.} Table~\ref{table:reg}).
\begin{table}
\resizebox{1\textwidth}{!}{
\begin{tabular}{|cc|c|}
& $L(\alpha,\mathcal{F},i)=$ &\emph{where} \\
 \hline
   &  & $\chi_\mathcal{F}\colon n\rightdcirc 1=n\to \mathcal{P}1,\  \chi_\mathcal{F}(i) = \left \{ \begin{array}{cc} \{1\} & \text{ if }i\in \mathcal{F}, \\ \varnothing  & \text{ otherwise} \end{array}\right.$\\
  & $1\stackrel{i_n}{\rightdcirc } n \stackrel{\alpha^\ast} {\rightdcirc } \overline{\Sigma^\ast}n \stackrel{\overline{\Sigma^\ast} \chi_{\mathcal{F}}}{\rightdcirc }\overline{\Sigma^\ast}1$ & $\overline{\Sigma^\ast}\chi_\mathcal{F}\colon \overline{\Sigma^\ast}n\rightdcirc \overline{\Sigma^\ast}1=\Sigma^\ast\times n\to \mathcal{P}(\Sigma^\ast\times 1),\  \overline{\Sigma^\ast}\chi_\mathcal{F}(w,i) = \left \{ \begin{array}{cc} \{(w,1)\} & \text{ if }i\in \mathcal{F}, \\ \varnothing  & \text{ otherwise}  \end{array}\right.$ \\ & & $i_n\colon 1\rightdcirc n=1\to \mathcal{P}n;1\mapsto \{i\}$.  \\ & & \\
  \hline
\end{tabular}
}
\\
\caption{ Languages of $(\alpha,\mathcal{F})$ expressed in  $\kl(\mathcal{P})$} \label{table:reg}
\vspace{-0.3cm}
\end{table}
\subparagraph{Algebra-coalgebra language coincidence} The entry in the first column of Table~\ref{table:reg} may be viewed as a coalgebraic (automata) definition of regular languages stated in the category $\kl(\mathcal{P})$.  Interestingly, it immediately allows us to see the dual, \emph{algebraic}, characterisation of these languages. Indeed, the category $\kl(\mathcal{P})$ comes with $(-)_-\colon \kl(\mathcal{P})\to \kl(\mathcal{P})^{op}$ mapping any object onto itself and any map $f\colon X\rightdcirc Y=X\to \mathcal{P}Y$ onto
\begin{align*}
f_-\colon Y\rightdcirc X = Y\to \mathcal{P}X; y\mapsto \{x\in X\mid y\in f(x)\}.\tag{OP}\label{eq:f_op}
\end{align*}
Additionally, it can be shown that  the functor $\overline{\Sigma^\ast}$ on $\kl(\mathcal{P})$ commutes with $(-)_-$, \ie $(\overline{\Sigma^\ast}f)_-= \overline{\Sigma^\ast}f_-$ for any $f\colon X\rightdcirc Y\in \kl(\mathcal{P})$. It turns out that $(\alpha^\ast)_-\colon \overline{\Sigma^\ast}n\rightdcirc  n$ in $\kl(\mathcal{P})$ is an Eilenberg-Moore algebra for the monad $(\overline{\Sigma^\ast},\overline{m},\overline{e})$ from Example \ref{example:kleisli_powerset}. Moreover, the $\kl(\mathcal{P})$-morphism $L(\alpha,\mathcal{F})=n \stackrel{\alpha^\ast} {\rightdcirc } \overline{\Sigma^\ast}n \stackrel{\overline{\Sigma^\ast} \chi_{\mathcal{F}}}{\rightdcirc }\overline{\Sigma^\ast}1=n\to \mathcal{P}(\Sigma^\ast\times 1)$
which maps $i$ to its language $L(\alpha,\mathcal{F},i)(1)=\{(w,1)\mid w\in L(\mathcal{A},i)\}$ satisfies the following statement.

\begin{fact}\label{fact:duality_regular}
The map $L(\alpha,\mathcal{F})_-\colon  \overline{\Sigma^\ast}1 \rightdcirc n=\Sigma^\ast\times 1\to \mathcal{P}n$, which maps a pair $(w,1)$ to the set of states of $(\alpha,\mathcal{F})$ that accept $w$, is an algebra homomorphism from the free Eilenberg-Moore algebra $\overline{m}_1\colon \overline{\Sigma^\ast}\overline{\Sigma^\ast}1\rightdcirc \overline{\Sigma^\ast} 1$ to the algebra $(\alpha^\ast)_-\colon \overline{\Sigma^\ast}n\rightdcirc n$. Additionally, any homomorphism from $\overline{m}_1$ to $(\alpha^\ast)_-$  in $\mathcal{EM}(\overline{\Sigma^\ast})$ is of the form $L(\alpha,\mathcal{F})_-$ for some $\mathcal{F}\subseteq n$.
\end{fact}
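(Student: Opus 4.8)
The plan is to obtain both assertions from the universal property of the free Eilenberg-Moore algebra on the object $1$ in $\kl(\mathcal{P})$. Since $(\overline{\Sigma^\ast},\overline{m},\overline{e})$ is a monad on $\kl(\mathcal{P})$, this free algebra is $(\overline{\Sigma^\ast}1,\overline{m}_1)$ with unit $\overline{e}_1\colon 1\rightdcirc\overline{\Sigma^\ast}1$, and for any $\overline{\Sigma^\ast}$-algebra $(B,b)$ the assignment $h\mapsto h\cdot\overline{e}_1$ is a bijection from the homomorphisms $(\overline{\Sigma^\ast}1,\overline{m}_1)\to(B,b)$ onto the $\kl(\mathcal{P})$-morphisms $1\rightdcirc B$, with inverse $g\mapsto g^\sharp\defeq b\cdot\overline{\Sigma^\ast}g$. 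I would apply this with $(B,b)=(n,(\alpha^\ast)_-)$, using three facts already established before the statement: $(\alpha^\ast)_-$ is an Eilenberg-Moore algebra for $(\overline{\Sigma^\ast},\overline{m},\overline{e})$; the contravariant functor $(-)_-$ on $\kl(\mathcal{P})$ satisfies $(\overline{\Sigma^\ast}f)_-=\overline{\Sigma^\ast}(f_-)$; and $L(\alpha,\mathcal{F})=\overline{\Sigma^\ast}\chi_\mathcal{F}\cdot\alpha^\ast$ in $\kl(\mathcal{P})$.

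For the first assertion I would apply $(-)_-$ to the identity $L(\alpha,\mathcal{F})=\overline{\Sigma^\ast}\chi_\mathcal{F}\cdot\alpha^\ast$; contravariance together with the commutation with $\overline{\Sigma^\ast}$ gives
\begin{align*}
L(\alpha,\mathcal{F})_-=(\alpha^\ast)_-\cdot(\overline{\Sigma^\ast}\chi_\mathcal{F})_-=(\alpha^\ast)_-\cdot\overline{\Sigma^\ast}\bigl((\chi_\mathcal{F})_-\bigr),
\end{align*}
and the right-hand side is exactly the free-algebra extension $\bigl((\chi_\mathcal{F})_-\bigr)^\sharp$ of the morphism $(\chi_\mathcal{F})_-\colon 1\rightdcirc n$, hence a homomorphism $(\overline{\Sigma^\ast}1,\overline{m}_1)\to(n,(\alpha^\ast)_-)$. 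For the second assertion I would run the bijection backwards: an arbitrary homomorphism $h\colon(\overline{\Sigma^\ast}1,\overline{m}_1)\to(n,(\alpha^\ast)_-)$ equals $g^\sharp$ for $g\defeq h\cdot\overline{e}_1\colon 1\rightdcirc n$; but a $\kl(\mathcal{P})$-morphism $1\rightdcirc n$ is just a map $1\to\mathcal{P}n$, hence a subset $\mathcal{F}\defeq g(1)\subseteq n$, and under this identification $g$ coincides with $(\chi_\mathcal{F})_-$ because $(\chi_\mathcal{F})_-(1)=\{i\in n\mid 1\in\chi_\mathcal{F}(i)\}=\mathcal{F}$; combining with the first part yields $h=g^\sharp=\bigl((\chi_\mathcal{F})_-\bigr)^\sharp=L(\alpha,\mathcal{F})_-$.

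I do not anticipate a real obstacle: the three recalled facts do the work, so the only care needed is the bookkeeping of the composition and order conventions in $\kl(\mathcal{P})$ versus $\kl(\overline{\Sigma^\ast})$. Should one prefer an elementary argument bypassing the free-algebra adjunction, the homomorphism law $h\cdot\overline{m}_1=(\alpha^\ast)_-\cdot\overline{\Sigma^\ast}h$ can be verified by evaluating both sides at a generic $(v,w,1)\in\Sigma^\ast\times\Sigma^\ast\times 1$: the left-hand side returns $h(vw,1)$ and the right-hand side returns $\{i\mid\exists j\in h(w,1),\ i\stackrel{v}{\to}_\alpha j\}$, using the explicit form of $\alpha^\ast$ from \cref{example:saturation}. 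For $h=L(\alpha,\mathcal{F})_-$ the resulting equality is precisely the factorisation of a $vw$-labelled path in $\mathcal{A}$ into a $v$-labelled prefix and a $w$-labelled suffix, while instantiating the law at $w=\varepsilon$ recovers $h(v,1)$ from $\mathcal{F}\defeq h(\varepsilon,1)$ as the set of states of $(\alpha,\mathcal{F})$ accepting $v$. This path-factorisation identity is the only genuinely computational step, and it is routine.
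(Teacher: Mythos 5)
Your argument is correct and is in substance the paper's own: the paper obtains this fact as the special case $S=\overline{\Sigma^\ast}$, $X=1$, $Y=n$, $b=(\alpha^\ast)_-$ of \cref{lemma:preimage_homomorphism}, whose proof is exactly the dualised form of the free-algebra universal property you invoke (homomorphisms out of $\overline{m}_1$ are precisely the extensions $(\alpha^\ast)_-\cdot\overline{\Sigma^\ast}g$ of morphisms $g\colon 1\rightdcirc n$, and such $g$ are exactly the maps $(\chi_\mathcal{F})_-$ for $\mathcal{F}=g(1)$). The only difference is cosmetic: you argue concretely in $\kl(\mathcal{P})$, using the facts recalled just before the statement, instead of first proving the general theorem under assumptions (A)--(D) and then specialising.
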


The above statement is a consequence of a more general Theorem \ref{lemma:preimage_homomorphism} stated in the next section. Since, as we will show in the remaining part of the paper, any Eilenberg-Moore algebra in $\mathcal{EM}(\overline{\Sigma^\ast})$ is of the form $(\alpha^\ast)_-$ for some morphism $\alpha\colon X\to \mathcal{P}(\Sigma\times X)$, the above fact can be read as follows: a language map $L\colon n\to \mathcal{P}(\Sigma^\ast\times 1)=n\rightdcirc \overline{\Sigma^\ast}1$ is regular (\ie $L=L(\alpha,\mathcal{F})$ for some automaton $(\alpha,\mathcal{F})$) if and only if its dual $L_-\colon \overline{\Sigma^\ast}1\rightdcirc n=\Sigma^\ast \times 1 \to \mathcal{P}n$ is an algebra homomorphism from $\overline{m_1}\colon \overline{\Sigma^\ast}\overline{\Sigma^\ast}1\rightdcirc \overline{\Sigma^\ast}1$ to an Eilenberg-Moore algebra over a finite carrier. Interestingly, this characterization leads us to the following result (see \cref{lemma:from_regular_to_morphisms} for a more general version).

\begin{fact}\label{fact:characterisation_regular_classically_morphisms} For $L\subseteq \mathbb{T}_{\Sigma^\ast\times \mathcal{I}d}(1,1)$ the map $\widehat{L}\colon 1\rightdcirc \overline{\Sigma^\ast}1=1\to \mathcal{P}(\Sigma^\ast\times 1); 1\mapsto \{l(1)\mid l\in L \}$ is regular  if and only if there is a Lawvere theory morphism $h\colon  \mathbb{T}_{\Sigma^\ast\times \mathcal{I}d}\to \mathbb{T}'$ into a finitary\footnote{A \emph{theory morphism} $h\colon \mathbb{T}\to \mathbb{T}'$ is a functor which maps $n$ onto itself. A theory $\mathbb{T}'$ is \emph{finitary} if all hom-sets $\mathbb{T}'(m,n)$ are finite.} Lawvere theory $\mathbb{T}'$ such that $L = h^{-1}(T')$ for some $T'\subseteq \mathbb{T}'(1,1)$.
\end{fact}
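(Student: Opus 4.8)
The statement is an "if and only if" characterizing regularity of a language $\widehat{L}$ (with $L \subseteq \mathbb{T}_{\Sigma^\ast \times \mathcal{I}d}(1,1)$) in terms of recognition by a finitary Lawvere theory morphism. My strategy is to route everything through Fact~\ref{fact:duality_regular} (equivalently the forthcoming Theorem~\ref{lemma:preimage_homomorphism}) and the duality $(-)_-\colon \kl(\mathcal{P}) \to \kl(\mathcal{P})^{op}$, using the observation (announced in the text) that every Eilenberg--Moore algebra for $\overline{\Sigma^\ast}$ is of the form $(\alpha^\ast)_-$ for some $\alpha\colon X \to \mathcal{P}(\Sigma \times X)$. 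The key intuition is that a finitary Lawvere theory quotient of $\mathbb{T}_{\Sigma^\ast \times \mathcal{I}d}$ on which $\widehat{L}$ is recognized is precisely the dual incarnation of a finite Eilenberg--Moore algebra together with a homomorphism from the free algebra, i.e. exactly the data $(\alpha^\ast)_-$ and $L(\alpha,\mathcal{F})_-$ from Fact~\ref{fact:duality_regular}.

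\textbf{Forward direction.} Assume $\widehat{L}$ is regular, say $\widehat{L} = L(\alpha,\mathcal{F})$ for an automaton $(\alpha\colon n \to \mathcal{P}(\Sigma \times n),\mathcal{F} \subseteq n)$. I would build the finitary theory $\mathbb{T}'$ as the image of $\mathbb{T}_{\Sigma^\ast \times \mathcal{I}d}$ under a suitable functor: concretely, take the Eilenberg--Moore algebra $(\alpha^\ast)_-\colon \overline{\Sigma^\ast}n \rightdcirc n$ provided by Fact~\ref{fact:duality_regular}, and use the standard fact that a single Eilenberg--Moore algebra $A$ induces a theory morphism $h\colon \mathbb{T}_{\Sigma^\ast \times \mathcal{I}d} \to \mathbb{T}'$ where $\mathbb{T}'(m,k)$ is the set of $A$-algebra-homomorphism-induced maps $A^m \to A^k$ (equivalently, $h$ sends a Kleisli arrow $l\colon m \rightdcirc \overline{\Sigma^\ast}k$ to the map it induces between free-algebra-powers after dualizing). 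Since $A$ has finite carrier $n$, all hom-sets of $\mathbb{T}'$ are finite. It then remains to check that $L = h^{-1}(T')$ for a suitable $T' \subseteq \mathbb{T}'(1,1)$: this is where Fact~\ref{fact:duality_regular} does the work — $l \in L$ iff the homomorphism $L(\alpha,\mathcal{F})_-$ factors compatibly through $l$, which unwinds to the statement that $h(l) \in T'$ where $T'$ is the image under $h$ of (the free-algebra representative of) $L$. I would define $T'$ directly as $h(L)$ and then verify $h^{-1}(h(L)) = L$ using that distinct elements of $\mathbb{T}_{\Sigma^\ast \times \mathcal{I}d}(1,1)$ outside $L$ are separated by the finite algebra $(\alpha^\ast)_-$ — precisely the content of Fact~\ref{fact:duality_regular}'s assertion that every homomorphism $\overline{m}_1 \to (\alpha^\ast)_-$ is of the form $L(\alpha,\mathcal{F})_-$.

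\textbf{Converse direction.} Assume $h\colon \mathbb{T}_{\Sigma^\ast \times \mathcal{I}d} \to \mathbb{T}'$ is a theory morphism into a finitary $\mathbb{T}'$ and $L = h^{-1}(T')$ for some $T' \subseteq \mathbb{T}'(1,1)$. Here I would reconstruct an automaton from $\mathbb{T}'$. Restricting $\mathbb{T}'$ to the object $1$ gives a finite monoid-like gadget; more precisely, composing $h$ with the evaluation at $1$ exhibits $\mathbb{T}'(-,1)$ as an Eilenberg--Moore algebra $A$ for $\overline{\Sigma^\ast}$ over the finite set $\mathbb{T}'(1,1)$, and $h$ becomes the induced homomorphism from the free algebra $\overline{m}_1$ to $A$. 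Dualizing via $(-)_-$ (using that $\overline{\Sigma^\ast}$ commutes with $(-)_-$, as noted in the excerpt), $A = (\alpha^\ast)_-$ for some $\alpha\colon \mathbb{T}'(1,1) \to \mathcal{P}(\Sigma \times \mathbb{T}'(1,1))$, and $T' \subseteq \mathbb{T}'(1,1)$ determines a set of final states $\mathcal{F}$. Then Fact~\ref{fact:duality_regular} identifies $L(\alpha,\mathcal{F})_-$ with (the dual of) the unique homomorphism $\overline{m}_1 \to (\alpha^\ast)_-$ picking out $\mathcal{F}$, and chasing preimages back gives $\widehat{L} = L(\alpha,\mathcal{F})$, i.e. $\widehat{L}$ is regular.

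\textbf{Main obstacle.} The routine categorical bookkeeping (lifting a single algebra to a theory morphism, finiteness of hom-sets, functoriality of $(-)_-$) is standard. The delicate point is the precise matching of the two notions of "recognition": showing that $h^{-1}(T') = L$ in the theory $\mathbb{T}'$ corresponds exactly, under duality, to the preimage characterization of regular languages via algebra homomorphisms in Fact~\ref{fact:duality_regular} — in particular, that one may always take the recognizing theory $\mathbb{T}'$ to be generated by its action on the single object $1$ (so that the finite algebra structure is genuinely an $\overline{\Sigma^\ast}$-algebra on $\mathbb{T}'(1,1)$) and that no information is lost in passing between "subsets of $\mathbb{T}'(1,1)$" and "sets of final states". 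I expect this to require a careful verification that the assignment $\mathbb{T}' \leftrightarrow (\text{finite } \overline{\Sigma^\ast}\text{-algebra})$ is compatible with the preimage operations on both sides, which is exactly the general statement deferred to Fact~\ref{fact:characterisation_regular_classically_morphisms}'s generalization \cref{lemma:from_regular_to_morphisms}; I would either invoke that lemma or inline its one-object special case.
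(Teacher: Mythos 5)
Your forward direction is essentially the route the paper takes in its proof of the general statement (\cref{lemma:from_regular_to_morphisms}): from the automaton one passes to a finite Eilenberg--Moore algebra and takes as $\mathbb{T}'$ the theory of maps it induces on finite powers, with $T'$ cut out by an evaluation condition. Two small imprecisions there: the paper's finitary theory $\mathbb{T}_{F_A}$ is built from the \emph{$\Set$-based} algebra $|\mathsf{Alg}(\alpha)|$ on the carrier $\mathcal{P}n$ (obtained from the Kleisli algebra $\alpha_-$ via \cref{theorem:EM_on_kleisli_to_EM_on_base}), not from an algebra on $n$ itself, and is made precise via the variety generated by the corresponding model and its free algebras; and your verification of $h^{-1}(h(L))=L$ should not be attributed to the classification clause of \cref{fact:duality_regular} but to the observation that $h(w)=w_A$ determines the set of states accepting $w$ (evaluate $w_A$ at $\chi_-=\mathcal{F}$ and test membership of the initial state), which is exactly how the paper defines its recognizing set $T_\chi$. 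These are fixable presentation issues, not gaps.

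The converse direction, however, contains a genuine error. After building the finite $\overline{\Sigma^\ast}$-algebra on $Q=\mathbb{T}'(1,1)$ from $h$ and dualizing to a saturated coalgebra $\alpha$, you declare that $T'$ ``determines a set of final states $\mathcal{F}$'' and that chasing preimages gives $\widehat{L}=L(\alpha,\mathcal{F})$. Under the duality $(-)_-$ the roles of initial and final data are swapped: the canonical algebra structure on $Q$ is $(w,q)\mapsto q\circ h(w)$, so in its dual automaton the behaviour of the \emph{state} $t\in Q$ with final structure given by the indicator of the identity $h(\varepsilon)$ is exactly $h^{-1}(t)$; taking instead $\mathcal{F}=T'$ and a single initial state yields $\{w\mid \exists q\in T'\colon q\circ h(w)=i\}$, which is not $h^{-1}(T')$ in general (it fails already for non-cancellative $\mathbb{T}'(1,1)$). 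This is precisely why the paper's proof of \cref{lemma:from_regular_to_morphisms} first uses \cref{lemma:preimage_homomorphism} to exhibit each $h^{-1}(t)$, $t\in T'$, as a regular map with $t$ as the initial state, and then invokes finiteness of $T'$ together with closure of regular maps under finite unions (assumptions (\ref{assumption:reg2}), (\ref{assumption:reg3}), \ie the subtheory result \cref{theorem:regular_maps_form_subtheory}; classically, just the closure of regular languages under finite union). Your sketch never confronts this union step, because the incorrect identification $\mathcal{F}=T'$ hides it; with that correction the argument goes through and coincides with the paper's.
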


Since the restriction of $h$ to hom-sets: $\mathbb{T}_{\Sigma^\ast\times \mathcal{I}d}(1,1)$ and $\mathbb{T}'(1,1)$, is a monoid homomorphism from the monoid $(\mathbb{T}_{\Sigma^\ast\times \mathcal{I}d}(1,1),\circ, \mathsf{id})$ to the monoid $(\mathbb{T}'(1,1),\circ,\mathsf{id})$, the above statement may be viewed as a Lawvere theory generalization of the classical characterisation of regular languages as languages recognized by monoid homomorphisms.

The aim of the remaining part of the paper is to generalize these observations to arbitrary $\Set$-based monads (modulo some extra assumptions).

\subparagraph{Final remarks}
\label{remark:on_final_states}
Predominantly, in the coalgebraic literature finite behaviour (language) of systems is introduced in terms of the finite trace \cite{silva2013:calco,bonchi2015killing,jacobssilvasokolova2012:cmcs}. In the order enriched setting for which the type monad encodes terminal states, the finite trace is given by $\alpha^\dagger = \mu x.x\cdot \alpha$ \cite{brengos2014:cmcs}. However, in our setting the final states are not part of the transition and the language is defined via saturation. Although, as noted in \cite{brengos2018:concur,esikhajgato2009:ai} these two approaches are equivalent we choose our approach since it shows a more evident connection between the algebraic and coalgebraic frameworks for defining languages emphasizing the duality between Eilenberg-Moore algebras and (a subcategory of) saturated coalgebras. At this point the reader may also wonder why we choose Lawvere theories as the setting for our algebraic characterisation of languages (akin to Fact \ref{fact:characterisation_regular_classically_morphisms}). Indeed, such a treatment seems to be a redundant overcomplication in the light of a simple, monoid homomorphism characterisation. However, non-deterministic automata and regular languages in the classical sense revolve around sequential data.
If we move away from sequential data and deal with \eg trees then we need to be able to simultaneously consider  terms with more (but a finite number of) variables. We refer the reader to \eg \cite{brengos2018:concur} where a simple example to understand this phenomenon has been described in the context of  regular tree languages and an analogue of the Kleene theorem for trees.

\section{On algebra-coalgebra duality}
\label{section:duality}
The purpose of this section is to build a framework to reason about an algebra-coalgebra duality akin to Fact \ref{fact:duality_regular} which will allow us, in some cases, to state a general version of Fact~\ref{fact:characterisation_regular_classically_morphisms}. Given a monad  $(S,m,e)$ on an order enriched category $\mathsf{K}$, we first elaborate  more on a functor from the dual of the category $\mathcal{EM}(S)$ to the category $\mathsf{Sat}(S)$.

In what follows, we assume that for the order enriched category $\mathsf{K}$ we have:
\begin{enumerate}[(A)]
\item  a subcategory $\mathsf{J}$  of $\mathsf{K}$ with all objects from $\mathsf{K}$,\label{assumption:1}
\item an identity on objects functor $(-)_-\colon \mathsf{K}\to \mathsf{K}^{op}$ which preserves the order, \ie  $f\leq g\implies f_-\leq g_-$,\label{assumption:2}
\item  for any $f\colon X\to Y\in \mathsf{J}$ the map $f_-\colon Y\to X\in \mathsf{K}$ is its right adjoint in the poset $\mathsf{K}(X,Y)$.\label{assumption:3}
\end{enumerate}
The last item reworded, means that for any $f\colon X\to Y\in \mathsf{J}$ the map $f_-\colon Y\to X$ satisfies $f_-\circ f\geq \mathsf{id}$ and $f\circ f_-\leq \mathsf{id}$. Moreover, we assume that $(S,m,e)$ is a monad on $\mathsf{K}$ such that:
\begin{enumerate}[(A)]
\setcounter{enumi}{3}
\item $S$ is  monotonic, $m_X\colon S^2X\to SX,e_X\colon X\to SX\in \mathsf{J}$ for any object $X$ and $S(f_-) = (Sf)_-$ for any morphism $f\colon X\to Y\in \mathsf{K}$. \label{assumption:4}
\end{enumerate}

\begin{example}
Our prototypical example of $\mathsf{J}$ and $\mathsf{K}$ are $\Set$ and $\kl(\mathcal{P})$ respectively, with the inclusion functor given by $\Set\to \kl(\mathcal{P})$ taking any set to itself and any map $f\colon X\to Y$ to $\{-\}\circ f\colon X\to \mathcal{P}Y;x\mapsto \{f(x)\}$. The order on $\kl(\mathcal{P})$ is defined in a natural manner by $f\colon X\to \mathcal{P}Y\leq g\colon X\to \mathcal{P}Y \iff f(x)\subseteq g(x)$ for any $x\in X$. The category $\kl(\mathcal{P})$ is equipped with a functor $(-)_-\colon \kl(\mathcal{P})\to \kl(\mathcal{P})^{op}$ which assigns to any object itself and to any morphism $f\colon X\to \mathcal{P}Y$ the map $f_-$ given in (\ref{eq:f_op}). It is easy to verify that (\ref{assumption:1})-(\ref{assumption:3})  hold for this choice of categories. Now if we take $S$ to be the lifting $(\overline{\Sigma^\ast},\{-\}\circ m, \{-\}\circ e)$ of the monad $(\Sigma^\ast\times \mathcal{I}d,m,e)$ to $\kl(\mathcal{P})$ then it satisfies (\ref{assumption:4}). In Section \ref{section:examples} we will see other examples of $\mathsf{J}$, $\mathsf{K}$ and $S$ that meet the above requirements.
\end{example}

\begin{wrapfigure}[3]{r}{0.25\textwidth}%
\vspace{-5ex}\hfill\hspace{-2ex}
\begin{tikzpicture}[font=\footnotesize,xscale=1.5,yscale=.8]
\node(1x1) at (0,.5) {$X$};
\node(x1) at (1,1) {$X$};
\node (a1) at (2,1) {$SX$};
\node(fx1) at (1,0) {$SX$};
\node (fa1) at (2,0) {$SSX$};
\draw[->] (x1) -- (a1) node[pos=.5,above] {$a_-$};
\draw[->] (a1) -- (fa1) node[pos=.5,left] {$S(a_-)$};
\draw[->] (x1) -- (fx1) node[pos=.5,left] {$a_-$};
\draw[->] (fx1) -- (fa1) node[pos=.5,below] {$m_-$};
\draw[->] (fx1) -- (1x1) node[pos=.5,below] {$e_-$};
\draw[->] (x1) -- (1x1) node[pos=.5,above] {$\mathsf{id}$};
\end{tikzpicture}
\end{wrapfigure}
\noindent Let us now recall that an Eilenberg-Moore algebra $a\colon S X\to~X$ makes the standard EM-diagrams commute for $(S,m,e)$. By applying $(-)_-$ to these diagrams and by (\ref{assumption:4}) we get commutativity of the diagrams on the right. Hence $S a_- \cdot a_- = m_- \circ a_-$ and   $e_-\circ a_- = \mathsf{id}$. This means that
$m\circ S a_- \circ a_- =m\circ m_-\circ a_-$ together with $e \circ e_-\circ a_- = e.$ By (\ref{assumption:3}) and (\ref{assumption:4}), this finally means that:
$
m \circ S a_- \circ a_- \leq   a_- \text{ and }  e\leq a_-
$.
These are the inequalities that define a saturated $S$-coalgebra. Moreover, if $h\colon X\to Y$ is a homomorphism in $\mathcal{EM}(S)$ between algebras $a\colon SX\to X$ and $b\colon SY\to Y$ then $a_- \circ h_- =  Sh_- \circ b_-$. Hence, $h_-\colon Y\to X$ is a morphism from $b_-$ to $a_-$ in $\mathsf{Sat}(S)$. The above remark allows us to define a functor
$
\mathsf{CoAlg}\colon \mathcal{EM}(S)^{op}\to \mathsf{Sat}(S),
$
which assigns to any algebra $a\colon SX\to X$ the coalgebra $a_-\colon X\to SX$ and to an algebra homomorphism $h\colon X\to Y$ from $a\colon SX\to X$ to $b\colon SY\to Y$ the $S$-coalgebra map $h_-\colon Y\to X$.

\begin{remark}\label{remark:strict_homo_preservation}
The functor $\mathsf{CoAlg}$ maps a homomorphism $h$ between algebras $a$ and $b$ onto a \emph{strict} homomorphism $h_-$ between coalgebras $b_-$ and $a_-$.
\end{remark}

\begin{theorem}\label{lemma:preimage_homomorphism}
Let $b\colon SY\to Y$ be an Eilenberg-Moore. The morphism  $h_-\colon Y\to SX$ is the opposite of a homomorphism $h\colon SX\to Y$ from the Eilenberg-Moore algebra $m_X\colon S^2X\to SX$ to $b\colon SY\to Y$ iff $h_-= Sf\circ b_-$ for some $f\colon Y\to X$.
\end{theorem}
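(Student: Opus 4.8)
The plan is to recognise the statement as the image, under the functor $(-)_-$ (equivalently, the functor $\mathsf{CoAlg}$ introduced above), of the ordinary free/forgetful adjunction between $\mathsf{K}$ and $\mathcal{EM}(S)$. The algebra $m_X\colon S^2X\to SX$ is the free Eilenberg--Moore algebra on $X$, so homomorphisms $h\colon(SX,m_X)\to(Y,b)$ are in bijection with morphisms $k\colon X\to Y$ of $\mathsf{K}$, via $k\mapsto\widehat{k}\defeq b\circ Sk$ and $h\mapsto h\circ e_X$; the verification that $\widehat{k}$ is a homomorphism uses only naturality of $m$ and the two axioms of $b$, and the two assignments are mutually inverse by the unit law of $b$, naturality of $e$, and the monad identity $m_X\circ Se_X=\mathsf{id}$. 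I would combine this with the contravariant functoriality $(g\circ k)_-=k_-\circ g_-$ of $(-)_-$ and with assumption~(\ref{assumption:4}), which supplies $(Sk)_-=S(k_-)$.

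For the forward implication, suppose $h\colon SX\to Y$ is a homomorphism from $m_X$ to $b$. Then $h=b\circ Sk$ with $k\defeq h\circ e_X\colon X\to Y$, hence $h_-=(b\circ Sk)_-=(Sk)_-\circ b_-=S(k_-)\circ b_-$, and putting $f\defeq k_-\colon Y\to X$ gives $h_-=Sf\circ b_-$. For the converse, suppose $h_-=Sf\circ b_-$ for some $f\colon Y\to X$. Here I would invoke that $(-)_-$ is a self-inverse bijection on hom-sets (as it is in the prototypical case $\mathsf{K}=\kl(\mathcal{P})$), so that applying it once more yields $h=(Sf\circ b_-)_-=(b_-)_-\circ(Sf)_-=b\circ S(f_-)$, and $b\circ S(f_-)=\widehat{f_-}$ is a homomorphism from $m_X$ to $b$ by the adjunction; alternatively, pick $k\in\mathsf{K}(X,Y)$ with $k_-=f$, set $h'\defeq b\circ Sk$, observe $h'_-=S(k_-)\circ b_-=Sf\circ b_-=h_-$ exactly as above, and conclude $h=h'$ from injectivity of $(-)_-$ on morphisms. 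It is worth noting that, by the discussion preceding the statement, $b_-$ is automatically a saturated $S$-coalgebra and $h_-$ a strict coalgebra morphism from $b_-$ to $(m_X)_-$, so the theorem merely expresses the property of $(m_X)_-$ dual to the freeness of $(SX,m_X)$.

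The computations are routine bookkeeping once the variance of $(-)_-$ is tracked carefully, so I expect no serious obstacle in the forward direction. The one point that requires care is the converse: it relies on $(-)_-$ being invertible on hom-sets, which is precisely what lets an \emph{arbitrary} $f\colon Y\to X$ be transported back to a morphism $X\to Y$ and hence to a genuine Eilenberg--Moore homomorphism; without this, the right-hand side would capture only those $f$ lying in the image of $(-)_-$.
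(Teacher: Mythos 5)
Your proposal is correct and follows essentially the same route as the paper: the forward direction is the paper's own computation (dualising $h = h\circ m_X\circ Se_X = b\circ S(h\circ e_X)$ using (\ref{assumption:2}) and (\ref{assumption:4})), and your converse produces exactly the paper's witness $h = b\circ S(f_-)$, shown to be a homomorphism with $h_- = Sf\circ b_-$. Your closing caveat is apt but not a divergence: the paper's converse likewise tacitly uses that $(-)_-$ is involutive on hom-sets (it needs $(Sf_-)_- = Sf$ to conclude $h_- = Sf\circ b_-$), a property holding in all the intended instances though not literally listed among (\ref{assumption:1})--(\ref{assumption:4}).
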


Theorem \ref{lemma:preimage_homomorphism} is a generalisation of \Cref{fact:duality_regular} and provides us with the foundation for generalising the notion of regular maps for non-deterministic automata.

\subsection{Duality}\label{subsection:the_duality}
Let us now denote by $\mathcal{SAT}(S)$ a subcategory of $\mathsf{Sat}(S)$ consisting of saturated $S$-coalgebras whose duals are Eilenberg-Moore algebras and strict homomorphisms between them. By Remark \ref{remark:strict_homo_preservation} we have a category isomorphism
\begin{align}
\mathcal{EM}(S)^{op}\cong \mathcal{SAT}(S) \tag{DUAL}\label{equi:duality}
\end{align}
Note that in the above duality we do not have to assume that the monad $S$ admits saturation. However, if it does then sometimes it is possible to describe members of $\mathcal{SAT}(S)$ (and hence also of $\mathcal{EM}(S)$) in terms of $\alpha^\ast\colon X\to SX$ for a \emph{certain} choice of maps $\alpha\colon X\to SX$. One example of this phenomenon is described below, where for a free monad $F^\ast$ over a functor $F$ the class of objects of $\mathcal{SAT}(F^\ast)$ is (modulo some additional requirements) is given by saturating $F$-coalgebras only.

\subparagraph{Duality for free monads}
Let $F\colon \mathsf{K}\to \mathsf{K}$ be a functor and let $(F^\ast,m,e)$ be the free monad over $F$ together with the transformation $\nu\colon F\implies F^\ast$. Assume that (\ref{assumption:1})-(\ref{assumption:4}) hold for $(F^\ast,m,e)$ on $\mathsf{K}$ with $F^\ast$ admitting saturation. If $\mathsf{K}$ has binary coproducts then the object $F^\ast X$ is the carrier of the initial $F(-)+X$-algebra $i_X\colon F F^\ast X + X\to F^\ast X$ \cite{barrwells:ttt}.  Any Eilenberg-Moore algebra $a\colon F^\ast X\to X$ for the monad $F^\ast$ is uniquely determined by $\underline{a}\defeq FX\stackrel{\nu_X}{\to}F^\ast X\stackrel{a}{\to} X$ as $a\colon F^\ast X \to X$ can be recovered from $\underline{a}$ in terms of a unique homomorphism
between the $F(-)+X$-algebras $i_X\colon FF^\ast X + X \to F^\ast X$ and $[\underline{a},id_X]\colon FX+X\to X$. In this case, if we assume the dual of the saturated map $(X\stackrel{\alpha}{\to} FX\stackrel{\nu_X}{\to} F^\ast X)^\ast$ is an Eilenberg-Moore algebra for the monad $F^\ast$ for any $\alpha\colon X\to FX$ and that for any Eilenberg-Moore algebra $a\colon F^\ast X \to X$, the map $a$ is the least EM-algebra satisfying $F^\ast X \stackrel{\nu_-}{\to} FX\stackrel{\underline{a}}{\to} X \leq F^\ast X \stackrel{a}{\to} X$ then we have the following statement.
\begin{proposition}\label{proposition:duality_free_monads}
Any Eilenberg-Moore algebra for  $F^\ast$ is of the form $\left\{(X\stackrel{\alpha}{\to} FX\stackrel{\nu_X}{\to} F^\ast X)^\ast\right\}_-$ for an $F$-coalgebra $\alpha\colon X\to FX$.
\end{proposition}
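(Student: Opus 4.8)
The plan is to show that the assignment $\alpha \mapsto \{(X \xrightarrow{\alpha} FX \xrightarrow{\nu_X} F^\ast X)^\ast\}_-$ hits every Eilenberg--Moore algebra for $F^\ast$, using the two standing hypotheses introduced just before the statement: (i) for every $\alpha\colon X\to FX$ the dual of the saturation $(\nu_X\circ\alpha)^\ast$ is an $F^\ast$-algebra, and (ii) every $F^\ast$-algebra $a\colon F^\ast X\to X$ is the \emph{least} EM-algebra $b$ satisfying $\underline{b}\leq a$ where $\underline{b}\defeq b\circ\nu_X$ (equivalently $F^\ast X\xrightarrow{\nu_-}FX\xrightarrow{\underline a}X\leq F^\ast X\xrightarrow{a}X$, reading the displayed inequality through $(-)_-$). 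So fix an arbitrary EM-algebra $a\colon F^\ast X\to X$ and set $\alpha\defeq \underline{a}_-\colon X\to FX$, the coalgebra obtained by dualising the structure map $\underline a\colon FX\to X$ (here I use assumption~(\ref{assumption:4}), which gives $F(f_-)=(Ff)_-$, to know $\alpha$ lands where it should). The goal is then $\{(\nu_X\circ\alpha)^\ast\}_- = a$.

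The first step is to name the candidate algebra: by hypothesis (i), $b\defeq \{(\nu_X\circ\alpha)^\ast\}_-$ is an Eilenberg--Moore algebra for $F^\ast$. The second step is to identify its ``generating'' part $\underline b = b\circ\nu_X$. Here I would unwind the duality: since $(-)_-$ is identity-on-objects and, by assumption~(\ref{assumption:4}), commutes with $F^\ast$ (hence with $F$ and $\nu$ via naturality), dualising $\underline b = b\circ\nu_X$ turns the equation into a statement about $(\nu_X\circ\alpha)^\ast$ precomposed with $\nu_X$ in $\mathsf{K}$, i.e.\ about the ``one $F$-step fragment'' of the saturated coalgebra. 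Using Example~\ref{example:saturation}-style reasoning (the saturation adds only the identity and closes under composition, so a single $\nu_X$-labelled step of $(\nu_X\circ\alpha)^\ast$ is exactly a single step of $\nu_X\circ\alpha$), this should yield $\nu_- \circ (\nu_X\circ\alpha)^\ast \circ \text{(appropriate unit)}$ collapsing so that $\underline b = \underline a$ — concretely, $b\circ\nu_X = (\underline a_-)_- = \underline a$ up to the adjunction inequalities of assumption~(\ref{assumption:3}). I expect this to pin down $\underline b$ as $\underline a$ exactly when $\alpha$ arises from a $\mathsf{J}$-map's dual, which is why the construction uses $\alpha=\underline a_-$ rather than an arbitrary coalgebra.

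The third step is to invoke minimality. Both $a$ and $b$ are EM-algebras with $\underline a = \underline b$; by hypothesis (ii) applied to $a$, $a$ is the least EM-algebra $c$ with $\underline c\leq a$ (in the $\mathsf{K}$-order transported appropriately), and applied to $b$, $b$ is the least EM-algebra with $\underline c\leq b$. Since $\underline b = \underline a$ and both are least with respect to the same kind of inequality against themselves, a short squeeze argument gives $a\leq b$ and $b\leq a$, hence $a = b$. One has to be slightly careful that the two ``least'' statements are compared against the right object, but because $a$ and $b$ share the same generating map $\underline a = \underline b$, the defining inequality $\underline a \leq a$ forces $b\leq a$ (as $b$ is the least EM-algebra dominating $\underline b=\underline a$), and symmetrically $\underline b\leq b$ forces $a\leq b$.

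The main obstacle I anticipate is the bookkeeping in step two: correctly translating the displayed inequality $F^\ast X\xrightarrow{\nu_-}FX\xrightarrow{\underline a}X\leq F^\ast X\xrightarrow{a}X$ and its $b$-counterpart through the four axioms, and verifying that ``one $\nu$-step of the saturation equals one step of $\nu\circ\alpha$'' at the level of $\mathsf{K}$ rather than just in the concrete powerset example — this is where the $F^\ast X = $ initial $F(-)+X$-algebra description and the factorisation of $a$ as the unique $F(-)+X$-algebra morphism $i_X\to[\underline a,\mathsf{id}_X]$ will be needed to make the argument structural. Everything else (that $b$ is an EM-algebra, that the functors commute with $(-)_-$, the final squeeze) is either hypothesis or a routine diagram chase.
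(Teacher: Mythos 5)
Your skeleton matches the paper's proof: given an arbitrary EM-algebra $a\colon F^\ast X\to X$, set $\alpha\defeq\underline{a}_-$, invoke the first standing hypothesis to see that $b\defeq\left\{(\nu_X\circ\alpha)^\ast\right\}_-$ is again an Eilenberg-Moore algebra, and finish via the minimality hypothesis. The genuine gap is your second step, the identification $\underline{b}=\underline{a}$, which you support only by the slogan that ``one $\nu_X$-labelled step of $(\nu_X\circ\alpha)^\ast$ is exactly one step of $\nu_X\circ\alpha$''. That is an artefact of the graded, word-length situation of \cref{example:saturation}; in the abstract setting of (\ref{assumption:1})--(\ref{assumption:4}) saturation is characterised solely by a universal property (least saturated endomorphism above the given one), so there is no ``single-step fragment'' to read off, and no derivation of $\underline{b}\leq\underline{a}$ from the axioms is offered --- you flag this yourself as the main obstacle and leave it open, so the proof is incomplete exactly at its central point. (The easy direction $\underline{b}\geq\underline{a}$ does follow from $(\nu_X\circ\alpha)^\ast\geq\nu_X\circ\alpha$ and the adjunction inequalities, but the hard direction is equivalent to the inequality you never establish.)

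The missing idea, and what the paper's proof does, is to use that $a_-$ is itself \emph{saturated} (this is the computation at the start of \cref{section:duality}: apply $(-)_-$ to the EM-diagrams of $a$) together with the defining property of $(-)^\ast$: from $\nu_X\circ\alpha=\nu_X\circ(\nu_X)_-\circ a_-\leq a_-$ one gets $(\nu_X\circ\alpha)^\ast\leq(a_-)^\ast=a_-$, hence $b\leq a$ after dualising; conversely $b\geq(\nu_X\circ\alpha)_-=\underline{a}\circ(\nu_X)_-$, so minimality of $a$ among EM-algebras above $\underline{a}\circ(\nu_X)_-$ gives $a\leq b$. Note that this finish needs no identification of $\underline{b}$ with $\underline{a}$ at all; and if you prefer your symmetric squeeze (which additionally applies the minimality hypothesis to $b$ --- legitimate, since it is stated for every EM-algebra), the inequality $\underline{b}\leq\underline{a}$ it requires is itself a consequence of $(\nu_X\circ\alpha)^\ast\leq a_-$, so the chain above cannot be bypassed. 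A small further slip: the defining inequality is $\underline{a}\circ(\nu_X)_-\leq a$, not $\underline{a}\leq a$, and it rests on $\nu_X\circ(\nu_X)_-\leq\mathsf{id}$, the same implicit assumption the paper uses.
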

Hence, we immediately get the isomorphism between
$\mathcal{EM}(F^\ast)^{op}$ and the subcategory of $\mathsf{Sat}(F^\ast)$ consisting of $\alpha^\ast\colon X\to F^\ast X$ for $\alpha\colon X\to FX \stackrel{\nu_X}{\to} F^\ast X$ as objects and strict (coalgebra) homomorphisms as morphisms. \begin{example}\label{example:duality_classical_automata_free_monad}
The $\Set$-endofunctor $\Sigma^\ast\times \mathcal{I}d$ is a free monad over  $\Sigma\times \mathcal{I}d\colon \Set\to \Set$ with the canonical embedding transformation $\Sigma\times \mathcal{I}d\implies \Sigma^\ast \times \mathcal{I}d$.  As shown in \eg \cite{brengos2015:lmcs} this means that the lifting $\overline{\Sigma^\ast}$ to $\kl(\mathcal{P})$ from Example~\ref{example:monoid_monad} is a free monad over the lifting of the functor $\Sigma\times \mathcal{I}d$  to $\kl(\mathcal{P})$. Moreover, it is not hard to verify that $\overline{\Sigma^\ast}$ satisfies the requirements of this subsection. Hence, Proposition \ref{proposition:duality_free_monads} holds. This precisely means that every Eilenberg-Moore algebra for the monad $\overline{\Sigma^\ast}$ is obtained by taking the duals of saturations of maps of the form $X\to \mathcal{P}(\Sigma\times X)\hookrightarrow \mathcal{P}(\Sigma^\ast \times X)$.
\end{example}

\subsection{Regular behaviours and two modes of recognition}\label{section:two_modes}
The purpose of  this subsection is to generalize the notion of regular language for classical non-deterministic automata from Section \ref{section:classical_regular_revisited} to our more general setting.  Taking into account Theorem \ref{lemma:preimage_homomorphism} (generalizing Fact \ref{fact:duality_regular}) and Table \ref{table:reg} we obtain what follows.

Assume $(T,\mu,\eta)$ and $(S,m,e)$ are monads on $\Set$ and that $(S,m,e)$ lifts to a monad $(\overline{S},\overline{m},\overline{e})$ on $\kl(T)$ via a distributive law $\lambda\colon ST\implies TS$. This yields a monadic structure on the composition $TS$ such that $\kl(TS)=\kl(\overline{S})$. Moreover, assume that (\ref{assumption:1})-(\ref{assumption:4}) are met for $\mathsf{J}=\Set$, $\mathsf{K}=\kl(T)$ and the monad $(\overline{S},\overline{m},\overline{e})$. Arrows between two objects $X,Y$ in $\kl(T)$ will be denoted as before by $X\rightdcirc Y$, \ie $X\rightdcirc Y = X\to TY$. 
Akin to \cite{DBLP:conf/calco/ColcombetP17,urabe_et_al:LIPIcs:2016:6186,brengos2018:concur}, we model automata (with branching type $T$ and linear type $S$) as follows:

\begin{definition}\label{definition:automaton_generelized}
A \emph{$(T,S)$-automaton} is a pair $(\alpha,\chi)$, where $\alpha\colon n\rightdcirc \overline{S}n = n\to TSn$ is in $\mathcal{SAT}(\overline{S})$ and $\chi\colon n\rightdcirc 1 = n\to T1$ is an arbitrary map in $\kl(T)$. By \emph{behaviour} (or \emph{language}) of a state $i\in n$ in $\mathcal{A}$ we mean map $1\to TS1 = 1\rightdcirc \overline{S}1$ given by:
\[L(\mathcal{A},i) = 1\stackrel{i_n}{\rightdcirc} n\stackrel{\alpha}{\rightdcirc} \overline{S}n\stackrel{\overline{S}\chi}{\rightdcirc} \overline{S}1\]
where the assignment $i_n\colon 1\rightdcirc n=1\to Tn$ maps $1$ onto $\eta(i)$ for the unit $\eta_n\colon n\to Tn$.
\end{definition}

\begin{example}
If $T=\mathcal{P}$ and $S=\Sigma^\ast\times \mathcal{I}d$ then a $(\mathcal{P},\Sigma^\ast\times \mathcal{I}d)$-automaton is a pair $(\alpha\colon n\to \mathcal{P}(\Sigma^\ast \times n), \chi\colon n\to \mathcal{P}1)$ where $\alpha$ is a saturated morphism of a map $n\to \mathcal{P}(\Sigma\times n)$ (\emph{conf.} Example \ref{example:duality_classical_automata_free_monad}) and $\chi$ is uniquely determined by the set $\mathcal{F}=\{i\in n\mid \chi(i) \neq \varnothing\}$. Hence, (up to the fact that we replace the original transition of a classical non-deterministic automaton with its saturated version and we replace the set of terminal states with its characteristic function) we obtain the known non-deterministic automaton. Additionally, by Table \ref{table:reg} the above definition of the language coincides with the classical one. 
\end{example}

We are now ready to introduce the notion of regular behaviour: a map $1\to TS1 = 1\rightdcirc \overline{S}1$ in $\kl(TS)=\kl(\overline{S})$ is \emph{regular} if it is a behaviour of a state in a $(T,S)$-automaton. However, as mentioned in the final remarks of Section \ref{section:classical_regular_revisited}, in the case of non-sequential data we need to be able to cover regular morphisms with more than one variable (see also \cite{brengos2018:concur}). Hence, we introduce the following concept. A morphism $1\rightdcirc \overline{S}p=1\to TS p$ in $\kl(\overline{S})=\kl(TS)$ is called \emph{regular} if it is of the form
\begin{equation}
1\stackrel{i_n}{\rightdcirc } n \stackrel{\alpha} {\rightdcirc } \overline{S}n \stackrel{\overline{S} \chi}{\rightdcirc }\overline{S}p.\tag{REG} \label{exp:regular}
\end{equation}
for a $\mathcal{SAT}(\overline{S})$-coalgebra $\alpha$  and  $\chi\colon n\rightdcirc p=n\to Tp$. By Th. \ref{lemma:preimage_homomorphism} the map $ n \stackrel{\alpha} {\rightdcirc } \overline{S}n \stackrel{\overline{S} \chi}{\rightdcirc }\overline{S}p$ is the dual to an Eilenberg-Moore algebra homomorphism from (the free $\overline{S}$-algebra at $p$) $\overline{m}_p\colon \overline{S}^2 p \rightdcirc \overline{S}p$ to (the dual of the saturated coalgebra $\alpha$) $\alpha_-\colon \overline{S}n\rightdcirc n$.

The above definition of regular maps (\ref{exp:regular}) easily extends to morphisms $p\to TSq$ coordinate-wise. 

\subparagraph{Theory of regular behaviours}
As it turns out below (given some extra assumptions) the family of regular maps $p\to TSq$ contains the unit of the monad $TS$, is closed under cotupling and $\kl(TS)$-composition (and hence forms a subtheory of the theory $\mathbb{T}_{TS}$).  
We assume that 
\begin{enumerate}[(I)]
\item For any $n<\omega$ the map $\overline{e}_n\colon  n\rightdcirc \overline{S}n$ is  $\mathcal{SAT}(\overline{S})$,\label{assumption:reg2}
\item for any two $\alpha\colon m\rightdcirc \overline{S}m$ and $\beta\colon n\rightdcirc \overline{S}n$ which are members of $\mathcal{SAT}(\overline{S})$ the map $m+n\stackrel{\alpha+\beta}{\rightdcirc}\overline{S}m+\overline{S}n\stackrel{[\overline{S}\mathsf{inl},\overline{S}\mathsf{inr}]}{\rightdcirc} \overline{S}(m+n)$ is in $\mathcal{SAT}( \overline{S})$\label{assumption:reg2.5}, 
\item if $f\colon \overline{S}k\rightdcirc n$ and $g\colon \overline{S}q\rightdcirc k$ are EM-homomorphisms from $\overline{m}_k\colon \overline{S}^2 k\rightdcirc \overline{S}k$ to $a\colon \overline{S}n\rightdcirc n$ and from $\overline{m}_q\colon \overline{S}^2 q \rightdcirc \overline{S}q$ to $b\colon \overline{S}k\rightdcirc k$ respectively then there is an Eilenberg-Moore algebra $c\colon \overline{S}l\rightdcirc l$ for the monad $\overline{S}$, a  homomorphism $h\colon \overline{S}q\rightdcirc l$ from $\overline{m}_q\colon \overline{S}^2q\rightdcirc \overline{S}q$ to $c\colon \overline{S}l\rightdcirc l$ and $j\in l$ s.t.: $\overline{S}q\stackrel{h}{\rightdcirc} l \stackrel{(j_l)_-}{\rightdcirc}1 = \overline{S}q\stackrel{(\overline{m}_q)_-}{\rightdcirc} \overline{S}^2q\stackrel{\overline{S}g}{\rightdcirc }\overline{S}k\stackrel{f}{\rightdcirc }n\stackrel{(i_n)_-}{\rightdcirc}1.$
\label{assumption:reg3}
\end{enumerate}

\begin{theorem}\label{theorem:regular_maps_form_subtheory}
The collection of objects $p$ for $p<\omega$ and regular maps $p\to TSq$ as morphisms form $p$ to $q$ forms a subtheory of the theory $\mathbb{T}_{TS}$.
\end{theorem}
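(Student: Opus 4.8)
The plan is to verify the three defining properties of a subtheory of $\mathbb{T}_{TS}$ directly, one at a time: (i) that the identity morphism $p \to TSp$ (i.e.\ the unit $\overline{e}_p$ of the monad $\overline{S}$, equivalently $\eta_{Sp}\circ e_p$) is regular; (ii) that the regular maps are closed under $\kl(TS)$-composition; and (iii) that they are closed under cotupling (so that projections and the coproduct structure carry over). Throughout, I would exploit the characterisation supplied right before the theorem: a map $n\stackrel{\alpha}{\rightdcirc}\overline{S}n\stackrel{\overline{S}\chi}{\rightdcirc}\overline{S}p$ with $\alpha\in\mathcal{SAT}(\overline{S})$ and $\chi\colon n\to Tp$ is exactly the dual of an $\mathcal{EM}(\overline{S})$-homomorphism from the free algebra $\overline{m}_p\colon\overline{S}^2p\rightdcirc\overline{S}p$ to $\alpha_-\colon\overline{S}n\rightdcirc n$. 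So I would freely translate between ``regular map $1\rightdcirc\overline{S}p$ of the form (REG)'' and ``dual of an EM-homomorphism out of $\overline{m}_p$ into (the dual of) a $\mathcal{SAT}(\overline{S})$-coalgebra'', using Theorem~\ref{lemma:preimage_homomorphism} and the duality (DUAL).

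For the unit: by assumption (\ref{assumption:reg2}) the map $\overline{e}_n\colon n\rightdcirc\overline{S}n$ lies in $\mathcal{SAT}(\overline{S})$, and taking $n=p$, $\alpha=\overline{e}_p$ and $\chi=\overline{e}_p\cdot\text{(something)}$ appropriately — more precisely $\alpha=\overline{e}_p$, $\chi=\overline{e}_p$ in the sense of $\kl(T)$, so that $i_p$ followed by $\overline{e}_p$ followed by $\overline{S}(\text{id})$ collapses — one checks that the composite $1\stackrel{i_p}{\rightdcirc}p\stackrel{\overline{e}_p}{\rightdcirc}\overline{S}p\stackrel{\overline{S}\,\mathsf{id}}{\rightdcirc}\overline{S}p$ is exactly the coordinate $1\to TSp$ of $\overline{e}_p$. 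Since (REG) extends coordinatewise, the whole of $\overline{e}_p\colon p\to TSp$ is regular, giving the identity of $\mathbb{T}_{TS}$. For closure under cotupling: given regular $f\colon 1\to TSp$ presented via $(\alpha,\chi_f)$ on carrier $m$, and regular $g\colon 1\to TSp$ presented via $(\beta,\chi_g)$ on carrier $n$, assumption (\ref{assumption:reg2.5}) tells us that the ``coproduct'' coalgebra $\alpha\oplus\beta\colon m+n\rightdcirc\overline{S}(m+n)$ (built from $\alpha+\beta$ followed by $[\overline{S}\mathsf{inl},\overline{S}\mathsf{inr}]$) is again in $\mathcal{SAT}(\overline{S})$; running it with the cotupled final-state map $[\chi_f,\chi_g]$ and with $i_{m+n}$ chosen to hit the relevant copy of $1$ yields $[f,g]$, and since the state-selection $i_n$ is arbitrary this handles cotupling of arbitrarily many regular maps $1\to TSp$, hence cotupling of regular maps $p'\to TSq$ coordinatewise.

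The main obstacle — and the place where assumption (\ref{assumption:reg3}) does all the work — is closure under composition. Given regular $f\colon \overline{S}k\rightdcirc n$ and $g\colon\overline{S}q\rightdcirc k$ that are EM-homomorphisms out of $\overline{m}_k$ and $\overline{m}_q$ respectively, what I want is that the $\kl(TS)$-composite of the corresponding regular morphisms $1\to TSk$ and $k$-indexed family $\to TSq$ is again regular. In Kleisli/algebraic terms this composite is captured by $\overline{S}q\stackrel{(\overline{m}_q)_-}{\rightdcirc}\overline{S}^2q\stackrel{\overline{S}g}{\rightdcirc}\overline{S}k\stackrel{f}{\rightdcirc}n$, post-composed with the relevant coprojection duals; the difficulty is that this composite need not itself be an EM-homomorphism out of $\overline{m}_q$ into the dual of a $\mathcal{SAT}(\overline{S})$-coalgebra over a \emph{finite} carrier — precisely the determinisation/subset-construction step that in the classical case produces a new finite automaton. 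Assumption (\ref{assumption:reg3}) postulates exactly the needed finite EM-algebra $c\colon\overline{S}l\rightdcirc l$, homomorphism $h\colon\overline{S}q\rightdcirc l$ out of $\overline{m}_q$, and index $j\in l$ making the composite factor correctly; dualising via (DUAL) turns $(c,h,j)$ into a $\mathcal{SAT}(\overline{S})$-coalgebra $c_-$ on $l$, a state selector $i_l=j_l$, and a final-state map $\chi=h_-$ (modulo the bookkeeping for the codomain $q$ rather than $1$), which is literally the data of a $(T,S)$-automaton whose behaviour is the composite. Assembling these three items and checking the theory axioms (associativity and unit laws are inherited from $\mathbb{T}_{TS}$, since regular maps form a \emph{subcollection} closed under the operations) completes the argument; I expect the bulk of the written proof to be the careful translation of (\ref{assumption:reg3}) into the automaton picture and the verification that $i_l$, $c_-$, $h_-$ indeed witness regularity of the composite.
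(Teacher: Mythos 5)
Your proposal is correct and follows essentially the same route as the paper's proof: the identity is handled via assumption (\ref{assumption:reg2}), cotupling via the coproduct automaton construction licensed by assumption (\ref{assumption:reg2.5}) (this is exactly the paper's Lemma~\ref{lemma:cotupling_regular}), and closure under $\kl(TS)$-composition by translating, through Theorem~\ref{lemma:preimage_homomorphism} and (\ref{equi:duality}), into the EM-homomorphism picture where assumption (\ref{assumption:reg3}) supplies precisely the witnessing algebra, homomorphism and state. The minor wobble in your choice of $\chi$ for the unit case (it should simply be the identity of $\kl(T)$, i.e.\ $\eta_p$) does not affect the argument.
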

As a direct corollary of the above theorem we get a Kleene-like theorem characterisation of the regular map theory. 
Indeed, the subtheory of $\mathbb{T}_{TS}$ consisting of regular maps as morphisms is the smallest subtheory which contains all maps of the form $p \rightdcirc q\stackrel{\overline{e}_q}{\rightdcirc }\overline{S}q$ ($\dagger$) and all duals to Eilenberg-Moore algebras $\overline{S}n\rightdcirc n$.  So, if there is a class $\mathcal{B}$ of regular morphisms in $\mathbb{T}_{TS}$ with a common domain and codomain for which $\bigcup_{r\in \mathcal{B}} r^{\mathsf{sat}}$ contains \emph{all} members of $\mathcal{SAT}(\overline{S})$ over a finite state space, where $r^{\mathsf{sat}} = \{ \alpha \mid r\leq \alpha \text{ and }\alpha \in \mathcal{SAT}(\overline{S}) \}$, then the theory of regular morphisms is the smallest subtheory containing all maps from $\mathcal{B}$, all maps ($\dagger$) and being closed under $(-)^\mathsf{sat}$.
 
\begin{example}
The above statement is true for our running example of regular maps for $\mathcal{P}(\Sigma^\ast\times \mathcal{I}d)$. In this case the theory of regular maps is given as the smallest theory containing all maps $m\to \mathcal{P}(\Sigma\times n)$ and being closed under finite unions and Kleene star closure (i.e. saturation) \cite{brengos2018:concur,esik2011,esik2013}. 
\end{example}

\subparagraph{Lawvere theory morphism recognition}
Finally, we point out that in the case when $T=\mathcal{P}$ a natural algebraic characterisation of regular maps holds. Indeed, if the monad $S$ is finitary\footnote{A $\Set$-based monad is called \emph{finitary} if for any $X$ and $x\in SX$ there is a finite subset $X_0\subseteq X$ such that $x\in SX_0$. This assumption about $S$ is technical and related to the fact that in this case the theory $\mathbb{T}_S$ associated with $S$ uniquely determines the  monad $S$ \cite{hyland:power:2007}.} then we can characterize regular morphisms $1\rightdcirc \overline{S}p= 1\to \mathcal{P}Sp$ in terms of preimages of Lawvere theory morphisms as follows.

\begin{definition}
A subset $L\subseteq \mathbb{T}_S(1,p)$ is \emph{recognizable} if there is a Lawvere theory morphism $h\colon\mathbb{T}_S\to \mathbb{T}'$ whose target is finitary, and a subset $T'\subseteq \mathbb{T}'(1,p)$ s.t.~$L=h^{-1}(T')$. 
\end{definition}

\begin{theorem}\label{lemma:from_regular_to_morphisms} \label{theorem:main} The map $L\colon 1\rightdcirc \overline{S} p=1\to \mathcal{P}Sp\in \mathbb{T}_{\mathcal{P}S}(1,p)$ is regular if and only if the set $\{f\colon 1\to Sp\in \mathbb{T}_S(1,p) \mid f(1) \subseteq L(1)\}$ is recognizable.
\end{theorem}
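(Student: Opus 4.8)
The plan is to establish the two implications by moving back and forth across the duality \eqref{equi:duality} and the correspondence from \Cref{lemma:preimage_homomorphism}, using that $S$ finitary makes the theory $\mathbb{T}_S$ an honest presentation of $S$. Throughout, write $T'$ for the subset $\{f\colon 1\to Sp \mid f(1)\subseteq L(1)\}$ of $\mathbb{T}_S(1,p)$; note that the assignment $L\mapsto T'$ and the assignment $T'\mapsto \widehat{T'}$ (where $\widehat{T'}(1)=\bigcup_{f\in T'} f(1)$) are mutually inverse bijections between downward-closed subsets of $\mathbb{T}_S(1,p)$ and maps $1\to\mathcal{P}Sp$, exactly as in \Cref{fact:characterisation_regular_classically_morphisms}. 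So the content is to match regularity of $L$ with recognizability of $T'$.

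For the forward direction, suppose $L$ is regular, so by \eqref{exp:regular} we have $L = \overline{S}\chi\cdot\alpha\cdot i_n$ for some $\mathcal{SAT}(\overline{S})$-coalgebra $\alpha\colon n\rightdcirc\overline{S}n$ and $\chi\colon n\to\mathcal{P}p$. By \Cref{lemma:preimage_homomorphism} (its specialisation discussed right after Definition of regular maps), the map $\overline{S}\chi\cdot\alpha\colon n\rightdcirc\overline{S}p$ is the dual of an $\mathcal{EM}(\overline{S})$-homomorphism $h\colon \overline{S}p\rightdcirc n$ from the free algebra $\overline{m}_p$ to $\alpha_-\colon\overline{S}n\rightdcirc n$; here $\alpha_-$ is an Eilenberg--Moore algebra on the finite carrier $n$. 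Now I would use finitarity of $S$: the EM-algebra $\alpha_-\colon Sn\to\mathcal{P}n$ on a finite set, together with a choice of finite presentation, induces a Lawvere theory morphism $k\colon\mathbb{T}_S\to\mathbb{T}'$ into the finitary theory $\mathbb{T}'$ whose hom-sets are built from the finite algebra structure — concretely one can take $\mathbb{T}'$ to be the theory of the finite $\overline{S}$-algebra $\alpha_-$ (a full subcategory of $\kl(\mathcal{P})$ on the finite powers of $n$), and $k$ the classifying morphism sending an operation to its interpretation. Then the homomorphism $h$ composed with evaluation at the chosen point $i_n\in n$ expresses $L(1)$-membership as a preimage condition: one checks $f\in T'$ iff $k(f)$ lands in the subset $T'':=\{g\in\mathbb{T}'(1,p)\mid (i_n)_-\cdot g = \text{``accepts''}\}$, so $T' = k^{-1}(T'')$ and $T'$ is recognizable.

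For the converse, suppose $T' = h^{-1}(T'')$ for a Lawvere theory morphism $h\colon\mathbb{T}_S\to\mathbb{T}'$ with $\mathbb{T}'$ finitary and $T''\subseteq\mathbb{T}'(1,p)$. The finitary theory $\mathbb{T}'$ corresponds to a finitary monad $S'$ on $\Set$, and $h$ to a monad morphism $S\To S'$; pulling the free $S'$-algebra at $p$ back along $h$ gives an $\overline{S}$-algebra structure on $S'p$, which however need not be finite. The fix is that only finitely much of it matters: $L(1)\subseteq Sp$ is a union of finitely many $S$-orbits of generators in $\mathbb{T}_S(1,p)$ (again by finitarity one reduces to finitely many variables), and the relevant quotient of $\mathbb{T}_S(1,p)$ under "$h$ identifies them" is finite. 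Concretely, one forms the finite Eilenberg--Moore algebra $c\colon\overline{S}n\rightdcirc n$ on the set $n = h(\mathbb{T}_S(1,p))\subseteq\mathbb{T}'(1,p)$ with the restricted $S$-action, takes $\alpha = c_- \in\mathcal{SAT}(\overline{S})$ via \eqref{equi:duality}, lets $\chi\colon n\to\mathcal{P}p$ encode the relationship of $h$ with the generators, and lets $i_n$ pick out $h(\mathsf{id})$; then \Cref{lemma:preimage_homomorphism} identifies $\overline{S}\chi\cdot\alpha$ as the dual of the homomorphism induced by $h$, and unwinding shows $\overline{S}\chi\cdot\alpha\cdot i_n = L$. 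Hence $L$ is regular.

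The main obstacle is the converse direction — specifically, turning an abstract Lawvere theory morphism $h$ into a genuinely \emph{finite} Eilenberg--Moore algebra for $\overline{S}$ whose dual is a legitimate object of $\mathcal{SAT}(\overline{S})$. One must verify that the restricted $S$-action on the finite set $n=h(\mathbb{T}_S(1,p))$ is well-defined (closure under the operations, which needs that $h$ is a theory morphism, not merely a function on hom-sets) and that the resulting $\overline{S}$-coalgebra is saturated and has an EM-algebra as dual — this is where assumptions \eqref{assumption:1}--\eqref{assumption:4} and the finitarity of $S$ (ensuring $\mathbb{T}_S$ determines $S$, so that "theory morphism" and "monad morphism" coincide, and that finitely many generators suffice) all get used. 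Matching up the two points $i_n$ and the auxiliary map $\chi$ correctly, so that the composite \eqref{exp:regular} reproduces $L$ on the nose rather than up to the $L\leftrightarrow T'$ translation, is the fiddly bookkeeping; it mirrors the proof of \Cref{fact:characterisation_regular_classically_morphisms} but must be carried out for a general finitary $S$ in place of $\Sigma^\ast\times\mathcal{I}d$.
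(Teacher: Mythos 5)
Your forward direction is essentially the paper's route: from $\alpha\in\mathcal{SAT}(\overline{S})$ one passes to the dual algebra $\alpha_-\colon Sn\to\mathcal{P}n$, then (via \Cref{theorem:EM_on_kleisli_to_EM_on_base}) to an honest $S$-algebra on the \emph{finite} carrier $\mathcal{P}n$, and the recognizing finitary theory is the theory of operations this algebra interprets, with the classifying morphism $f\mapsto f_A$ and the accepting set cut out by evaluating at $\chi_-$ and comparing with $i_n$ (the paper's $T_\chi=\{g_A\mid g_A(\chi_-)=i_n\}$, built through the model $A$, the variety $\mathsf{V}(A)$ and the free algebras $F_A(m)$, \Cref{theorem:theory_from_free_algebra}). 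Your concrete identification of that theory with ``the full subcategory of $\kl(\mathcal{P})$ on the finite powers of $n$'' is not right as stated (its hom-sets are not interpretations of $\mathbb{T}_S$-operations and it does not receive an identity-on-objects functor from $\mathbb{T}_S$ in the way you need), but the surrounding idea matches the paper and the slip is repairable; the real work, which you only gesture at with ``one checks'', is the chain of (in)equalities using (\ref{assumption:2})--(\ref{assumption:4}) that converts $f_A(\chi_-)=i_n$ into $f^\sharp\leq L$.

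The converse is where the genuine gaps are. First, your premise that the pulled-back free algebra on $M_{\mathbb{T}'}p=\mathbb{T}'(1,p)$ ``need not be finite'' is false: by the paper's definition a finitary theory has \emph{all} hom-sets finite, so $\mathbb{T}'(1,p)$ is already a finite carrier, and the paper uses exactly the algebra $a=m'_p\circ\overline{h}_{M_{\mathbb{T}'}p}$ on it; your restriction to the image $h(\mathbb{T}_S(1,p))$ is unnecessary and, as you yourself note, its closure under the induced $S$-action is left unverified. Second, and more seriously, your proposed automaton never mentions $T''$: you take $i_n$ to pick out $h(\mathsf{id})$ (which for $p\neq 1$ does not even lie in $\mathbb{T}'(1,p)$) and let $\chi$ encode only how $h$ acts on generators, so the construction is the same for every choice of $T''$ and the claimed identity $\overline{S}\chi\cdot\alpha\cdot i_n=L$ cannot hold in general. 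In the paper the accepting set enters as the \emph{initial states}: for each $t\in T''$ the map $1\stackrel{t^\sharp_{\mathbb{T}'}}{\rightdcirc}\mathbb{T}'(1,p)\stackrel{\mathsf{CoAlg}(a^\sharp)}{\rightdcirc}\overline{S}\,\mathbb{T}'(1,p)\stackrel{\overline{S}(e^\sharp_-\bullet h^\sharp_-)}{\rightdcirc}\overline{S}p$ is shown, via \Cref{lemma:preimage_homomorphism} applied to the homomorphism $h^\sharp$ from the free algebra to $a^\sharp$, to represent $h^{-1}(t)$; then finiteness of $T''$ (again finitarity of $\mathbb{T}'$) and closure of regular maps under finite unions --- which itself relies on the assumptions (\ref{assumption:reg2})--(\ref{assumption:reg3}) underlying \Cref{theorem:regular_maps_form_subtheory} --- give regularity of $L$. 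Your one-shot construction skips both the role of $T''$ and the union argument, so the converse implication is not established as written.
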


\section{Beyond non-deterministic automata}\label{section:examples}
In this section we illustrate the generality of our results by listing some representative examples of models fitting our framework besides our running example of classical non-deterministic automata and their languages (details are in \cref{section:appendix_examples}).

\subparagraph{Tree automata}
\looseness=-1
For a non-empty set $\Sigma$, let $T_\Sigma$ be the free monad for the endofunctor $\mathcal{I}d\times \Sigma\times \mathcal{I}d$ over \Set:
$T_\Sigma X$ is the set of binary trees with $\Sigma$-labelled nodes and leaves in the set $X$,
$T_\Sigma f$ is the function that replaces leaves according to the function $f$.
Monadic multiplication is tree grafting, and
monadic unit is the embedding into trees with one leaf (and no internal nodes).
The monad $T_\Sigma$ lifts to a monad $\overline{T_\Sigma}$ on $\kl(\mathcal{P})$ by the unique extension of the distributive law of the functor $\mathcal{I}d\times \Sigma\times \mathcal{I}d$ over the monad $\mathcal{P}$ given by the assignment $(L \times \sigma \times R) \mapsto \{(l,\sigma,r) \mid l \in L, r \in R\}$ \cite{brengos2015:lmcs}. The monad $\overline{T_\Sigma}$ coincides with the free monad for the (canonical) lifting of $\mathcal{I}d\times \Sigma\times \mathcal{I}d$ to $\kl(\mathcal{P})$ \cite{brengos2015:lmcs}. 
Additionally, the monad $\overline{T_\Sigma}$ admits saturation \cite{brengos2018:concur}. Indeed, if $\alpha\colon X\rightdcirc \overline{T_\Sigma}X = X\to \mathcal{P}T_\Sigma X$ then $\alpha^\ast(x)= \bigcup_{n<\omega} (\mathsf{id}\cup \alpha)^n (x)$, where $(\mathsf{id}\cup \alpha)^0(x) = \{x\}$, $(\mathsf{id}\cup \alpha)^1(x)=\{x\}\cup \alpha(x)$ and if $t\in (\mathsf{id}\cup \alpha)(x)$ is with leaves in $\{x_1,\ldots,x_k\}$ then a tree which is obtained from $t$ by replacing any occurrence of $x_i$ by some tree from $(\mathsf{id}\cup \alpha)^{n}(x_i)$ is in $(\mathsf{id}\cup \alpha)^{n+1}(x)$. It can be checked that $\overline{T_\Sigma}$ satisfies the requirements of \cref{subsection:the_duality} and hence that \cref{proposition:duality_free_monads} holds. As a consequence, Eilenberg-Moore algebras for $\overline{T_\Sigma}$ are dual to the saturations of morphism of form  $X\to \mathcal{P}(X\times \Sigma \times X)\hookrightarrow \mathcal{P}T_\Sigma X$.

Let $\mathcal{A}=(\alpha\colon n\to \mathcal{P}T_\Sigma n, \chi\colon n\to \mathcal{P}1)$ be a $(\mathcal{P},T_\Sigma)$-automaton. It follows from the above remark and \cref{subsection:the_duality} that the transition map $\alpha \in \mathcal{SAT}(\overline{T_\Sigma})$ of $\mathcal{A}$ is equivalent a map $\hat{\alpha}\colon n\to \mathcal{P}(n\times \Sigma \times n)$ (\ie $\alpha = (\nu_n \circ \hat{\alpha})^\ast$) and that the language accepted by a state $i \in n$ is characterised as:
$
  L(\mathcal{A},i) = 1\rightdcirc n \stackrel{(\nu_X \circ \hat{\alpha})^\ast}{\rightdcirc } \overline{T_\Sigma}n \stackrel{\overline{T_\Sigma}\chi}{\rightdcirc }\overline{T_\Sigma}1,
$
where $\nu$ is the canonical embedding of $\mathcal{P}(\mathcal{I}d \times \Sigma \times \mathcal{I}d)$ into $\mathcal{P}T_\Sigma$.
By comparing this observation with the classical definition of a tree automaton  and its language\footnote{A \emph{tree automaton} is a pair $(\hat{\alpha}\colon n\to \mathcal{P}(n\times \Sigma \times n),\mathcal{F}\subseteq n)$. Intuitively, the language of a state $i$ in a tree automaton is given by the set of finite tree traces whose leaves are all in $\mathcal{F}$. More formally, this concept can be put into our setting by translating the classical definition of the behaviour accepted by the state $i\in n$ into the language of the Kleisli category for the monad $\mathcal{P}$ and is the following: $1\stackrel{i_n}{\rightdcirc} n\stackrel{\hat{\alpha}^\ast}{\rightdcirc} \overline{T_\Sigma}n\stackrel{\overline{T_\Sigma}\chi_\mathcal{F}}{\rightdcirc} \overline{T_\Sigma}1$, where $\chi_\mathcal{F}\colon n\rightdcirc 1=n\to \mathcal{P}1$ is as in \cref{section:classical_regular_revisited}. At this point it is also worth noting that originally non-deterministic tree automata did not have an evident coalgebraic transition map. However, our approach to defining these objects is equivalent to the original. See \eg \cite{pin:automata,brengos2018:concur} for details.} (see \eg \cite{pin:automata}) we immediately get the following correspondence.

\begin{theorem}
Regular maps $1\to \mathcal{P}T_\Sigma 1$ for $\mathcal{P}T_\Sigma$ coincide with languages recognised by tree automata in the classical sense.
\end{theorem}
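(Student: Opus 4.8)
The plan is to unfold both sides into the same Kleisli-category expression. By definition a regular map $1\to\mathcal{P}T_\Sigma 1$ is the behaviour $L(\mathcal{A},i)$ of a state $i$ in a $(\mathcal{P},T_\Sigma)$-automaton $\mathcal{A}=(\alpha,\chi)$ in the sense of \cref{definition:automaton_generelized}, \ie\ $\alpha\colon n\rightdcirc\overline{T_\Sigma}n$ lies in $\mathcal{SAT}(\overline{T_\Sigma})$ and $\chi\colon n\rightdcirc 1$ is arbitrary. Since $\overline{T_\Sigma}$ is the free monad over the lifting of $\mathcal{I}d\times\Sigma\times\mathcal{I}d$ to $\kl(\mathcal{P})$, admits saturation, and meets the hypotheses of \cref{subsection:the_duality}, \cref{proposition:duality_free_monads} applies and tells us that the objects of $\mathcal{SAT}(\overline{T_\Sigma})$ over a finite carrier are exactly the maps $(\nu_n\circ\hat{\alpha})^\ast$ for $\hat{\alpha}\colon n\to\mathcal{P}(n\times\Sigma\times n)$. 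Moreover, as $\mathcal{P}1$ is two-element, $\chi$ is equivalently the characteristic map $\chi_{\mathcal{F}}$ of $\mathcal{F}\defeq\{j\in n\mid\chi(j)\neq\varnothing\}$. Hence $L(\mathcal{A},i)$ equals $1\stackrel{i_n}{\rightdcirc}n\stackrel{(\nu_n\circ\hat{\alpha})^\ast}{\rightdcirc}\overline{T_\Sigma}n\stackrel{\overline{T_\Sigma}\chi_{\mathcal{F}}}{\rightdcirc}\overline{T_\Sigma}1$, which is precisely the $\kl(\mathcal{P})$-rendering of the language of state $i$ in the classical tree automaton $(\hat{\alpha},\mathcal{F})$ recalled above; so every regular map is a tree-automaton language.

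Conversely, given a classical tree automaton $(\hat{\alpha}\colon n\to\mathcal{P}(n\times\Sigma\times n),\mathcal{F}\subseteq n)$ and a state $i\in n$, put $\alpha\defeq(\nu_n\circ\hat{\alpha})^\ast$ and $\chi\defeq\chi_{\mathcal{F}}$. By \cref{proposition:duality_free_monads} the map $\alpha$ belongs to $\mathcal{SAT}(\overline{T_\Sigma})$, and $\chi_{\mathcal{F}}\colon n\rightdcirc 1$ is an admissible choice, so $(\alpha,\chi)$ is a $(\mathcal{P},T_\Sigma)$-automaton whose behaviour at $i$ is the same expression $1\stackrel{i_n}{\rightdcirc}n\stackrel{\alpha}{\rightdcirc}\overline{T_\Sigma}n\stackrel{\overline{T_\Sigma}\chi_{\mathcal{F}}}{\rightdcirc}\overline{T_\Sigma}1$; this is regular by definition and is the classical language of $(\hat{\alpha},\mathcal{F})$ at $i$. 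Both inclusions together give the claimed coincidence.

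The only step that is more than bookkeeping is the identification of the abstract saturation with the classical semantics: one must check that $(\nu_n\circ\hat{\alpha})^\ast(i)=\bigcup_{k<\omega}(\mathsf{id}\cup\nu_n\circ\hat{\alpha})^k(i)$ is exactly the set of finite trees obtained by repeatedly rewriting leaves according to $\hat{\alpha}$, and that post-composition with $\overline{T_\Sigma}\chi_{\mathcal{F}}$ retains precisely those trees all of whose leaves lie in $\mathcal{F}$ (relabelled to the point of $1$) and discards the rest. This is a routine induction on $k$ using the explicit description of $T_\Sigma$ (binary $\Sigma$-labelled trees, grafting, leaf relabelling) and of the lifted distributive law; it is the content of the remark on tree saturation and the footnote on classical tree automata above, and I expect it to be the main, if mild, obstacle — everything else is unwinding \cref{definition:automaton_generelized} and \cref{proposition:duality_free_monads}.
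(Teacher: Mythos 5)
Your proposal is correct and follows essentially the same route as the paper: identify $\mathcal{SAT}(\overline{T_\Sigma})$-transitions over finite carriers with saturations $(\nu_n\circ\hat{\alpha})^\ast$ of maps $\hat{\alpha}\colon n\to\mathcal{P}(n\times\Sigma\times n)$ via \cref{proposition:duality_free_monads}, identify $\chi$ with $\chi_{\mathcal{F}}$, and observe that both notions of language then unfold to the same composite in $\kl(\mathcal{P})$. The only difference is that the step you flag as the mild obstacle---checking that this Kleisli composite computes the classical run-based tree language---is handled in the paper by appeal to \cite{brengos2018:concur} rather than by the direct induction you sketch, which is an acceptable substitute.
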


Theorem  \ref{lemma:from_regular_to_morphisms} instantly provides us with an algebraic characterisation of regular languages for tree automata we will now instantiate on a simple example. Let $\Sigma = \{a,b\}$ and consider a two state tree automaton whose transition morphism $\alpha\colon 2\to \mathcal{P}(2\times\Sigma \times 2)$ is defined by $1\mapsto \varnothing,  2\mapsto \{(2,a,2),(1,b,1)\}$
and $\mathcal{F}=\{1\}$. The language accepted by the state $2$ consists of binary trees of height $>0$ whose nodes preceding the leaf nodes are all $b$ and the remaining non-leaf nodes are $a$. By following the guidelines of the proof of Theorem \ref{lemma:from_regular_to_morphisms} we build a theory morphism $h\colon \mathbb{T}_{T_\Sigma}\to \mathbb{T}'$ which recognizes the language of the state $2$ in the above automaton. The finitary theory $\mathbb{T}'$ and the theory morphism $h$ are determined by the automaton $(\alpha,\mathcal{F})$ with the hom-set $\mathbb{T}'(1,1)$ of  $\mathbb{T}'$ consisting of $4$ elements which are assignments  $\mathcal{P}2\to \mathcal{P}2$ given in the table below. The composition in $\mathbb{T}'$ of morphisms $\mathbb{T}'(1,1)$ is the ordinary assignment composition of maps $\mathcal{P}2\to \mathcal{P}2$ given in the reversed order.
\begin{wrapfigure}[4]{l}{0.42\textwidth}
\vspace{-0.4cm}
\resizebox{0.42\textwidth}{!}{
$\begin{array}{c|c|c|c|c|}
 & \mathsf{id} & a_A=a_A^2 & b_A=a_A\circ b_A  & b_A^2=b_A\circ a_A\\\hline
\varnothing &\varnothing &\varnothing  &\varnothing &\varnothing\\ \hline
\{1\} & \{1\} & \varnothing & \{2\} &  \varnothing  \\ \hline
\{2\} & \{2\} & \{2\} & \varnothing  & \varnothing  \\ \hline
\{1,2\} & \{1,2\} & \{2\} & \{2\} &\varnothing  \\ \hline
\end{array}$
}
\end{wrapfigure}
The morphism $h\colon \mathbb{T}_{T_\Sigma}\to \mathbb{T}'$, if restricted to $\mathbb{T}_{T_\Sigma}(1,1)$, maps the tree $1$ to $\mathsf{id}$, any tree $t$ such that after the composition with the tree $(1,b,1)$ in $\mathbb{T}_{T_\Sigma}$ the result is in the language of the state $2$ to $a_A$, any tree from the language accepted by the state $2$ onto $b_A$ and any other to $b^2_A$. We see that our language of the state $2$ is given in terms of $h^{-1}(\{b_A\})$.

\subparagraph{Weighted automata}
For $(\mathcal{S},+,0,\cdot,1,\leq)$ a positive $\omega$-semiring\footnote{
A positive $\omega$-semiring is relational structure $(\mathcal{S},+,0,\cdot,1,\leq)$ with the property that 
$(\mathcal{S},+,0,\cdot,1)$ is a semiring with countable sums,
$(\mathcal{S},\leq,0)$ is an $\omega$-complete partial order,
$+$ and $\cdot$ are $\omega$-continuous in both arguments.
}
(\eg the set of non-negative reals extended with positive infinity $[0,+\infty]$), an $\mathcal{S}$-multiset is a pair $(X,\phi)$ where $X$ is a set and $\phi\colon X \to \mathcal{S}$ is a function such that the set $\supp(\phi) = \{x \mid \phi(x) > 0 \}$ (called \emph{support}) is countable. We write $\mathcal{S}$-multiset as formal sums. The $\mathcal{S}$-multiset functor $\mathcal{M_S}\colon \Set \to \Set$ assigns to every set $X$ the set $\mathcal{M_S}X$ of $\mathcal{S}$-multisets with universe $X$, and to every function $f\colon X \to Y$ the function mapping each $(X,\phi)$ to $(X, \sum_{x \in \supp(\phi)} \phi(x) \bullet f(x))$. 
This functor carries a monad structure $(\mathcal{M_S},\mu,\eta)$ whose multiplication $\mu$ and unit $\eta$ are given on each set $X$ by the mappings: $(\mathcal{M_S}X,\psi) \mapsto \left(X,\sum (\phi(x) \cdot \psi(\phi)) \bullet x\right)$ and $x \mapsto (X,x \bullet 1)$.
The probability distribution monad $\mathcal{D}$ is a submonad of $\mathcal{M}_{[0,\infty]}$ \cite{brengos2015:jlamp}.

\looseness=-1
The free monad $\Sigma^\ast$ lifts to a monad $\overline{\Sigma^\ast}$ on $\kl(\mathcal{M_S})$ by the unique extension of the distributive law of the functor $\Sigma\times \mathcal{I}d$ over the monad $\mathcal{M_S}$ given by the assignment $\Sigma\times (X,\phi) \mapsto (\Sigma \times X, \sum\phi(x)\bullet(\sigma,x))$ \cite{brengos2015:jlamp}.
The main problem with this choice of monads is that they do not fit our setting from \cref{section:duality} directly. Indeed, $\kl(\mathcal{M_S})$ is not self-dual (due to  the limited size of the cardinality of the support of functions in $\mathcal{M_S}$). There are two workarounds to this problem: one is to extend the definition of $\mathcal{M_S}$ to cover functions of arbitrary supports, the other is to realize that when dealing with regular behaviours as in (\ref{exp:regular}) we actually focus on systems over a finite state space. 
Hence, we restrict w.l.o.g.~to the subcategory identified by finite sets. In this case, if $\Sigma^\ast$ is countable then any Eilenberg-Moore algebra $a\colon \overline{\Sigma^\ast} n \rightdcirc n=\Sigma^\ast\times n \to \mathcal{M_S}n$ on $\kl(\mathcal{M_S})$ yields a saturated system $a_-\colon  n\rightdcirc \overline{\Sigma^\ast} n = n\to \mathcal{M_S}({\Sigma^\ast \times n})$ by simple currying and uncurrying. Since $\overline{\Sigma^\ast}$ is the free monad over $\overline{\Sigma}$ on $\kl(\mathcal{M_S})$ \cite{brengos2015:lmcs}, the Eilenberg-Moore algebra $a$ is uniquely determined by $\underline{a}\colon \Sigma\times n \to \mathcal{M_S}n$ (\emph{conf} Subsec. \ref{subsection:the_duality}). Hence, so is its dual $a_-$.

Let $\mathcal{A} = (\alpha\colon n\to \mathcal{M_S}(\Sigma^\ast \times n), \chi\colon n\to \mathcal{M_S}1)$ be a $(\mathcal{M_S},\Sigma^\ast)$-automaton with the behaviour of a state $i\in n$ given by 
\vspace{-.4ex}\[\vspace{-.4ex}
L(\mathcal{A},i) = 1\rightdcirc n \stackrel{a}{\rightdcirc } \overline{\Sigma^\ast}n \stackrel{\overline{\Sigma^\ast}\chi}{\rightdcirc }\overline{\Sigma^\ast}1\]
where $\nu$ is the canonical embedding of $\mathcal{M_S}(\Sigma\times \mathcal{I}d)$ into $\mathcal{M_S}(\Sigma^\ast\times \mathcal{I}d)$.
This is essentially the classical presentation of an automaton weighted over $\mathcal{S}$\footnote{A \emph{weighted automaton} is a pair $(\hat{\alpha}\colon n\to \mathcal{M_S}(\Sigma \times n),\chi\colon n \to \mathcal{S})$ and the language of a state $i \in n$ is the $\mathcal{S}$-multisetset $\left(\Sigma^\ast,\lambda\sigma_1\dots\sigma_k. \sum \left\{\chi(j_k)\cdot \prod_{p< k} \hat\alpha(j_{p})(\sigma_{p+1},j_{p+1}) \,\middle|\, j_0 = i, j_1,\dots,j_k \in n \right\}\right)$.} \cite{DBLP:journals/iandc/Schutzenberger61b,DBLP:journals/iandc/BonchiBBRS12}.

\begin{theorem}
Regular maps $1\rightdcirc\overline{\Sigma^\ast} 1 =1 \to  \mathcal{M}_\mathcal{S}(\Sigma^\ast \times 1)$ for $\mathcal{M}_\mathcal{S}({\Sigma^\ast}\times \mathcal{I}d)$ coincide with languages recognised by weighted automata in the classical sense.
\end{theorem}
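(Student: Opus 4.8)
The plan is to mirror the two-directional argument used for classical non-deterministic automata (the translation underlying Table~\ref{table:reg} and the discussion after Example~\ref{example:duality_classical_automata_free_monad}) but now applied to the linear type $S=\Sigma^\ast$ over the branching type $T=\mathcal{M}_\mathcal{S}$. First I would recall that, by the analysis just above the theorem, a $(\mathcal{M}_\mathcal{S},\Sigma^\ast)$-automaton in the sense of Definition~\ref{definition:automaton_generelized} is determined by an Eilenberg-Moore algebra $a\colon \overline{\Sigma^\ast}n\rightdcirc n$, which in turn (since $\overline{\Sigma^\ast}$ is the free monad over $\overline{\Sigma}$ on $\kl(\mathcal{M}_\mathcal{S})$) is uniquely determined by its generator $\underline{a}\colon \Sigma\times n\to \mathcal{M}_\mathcal{S}n$, together with a termination map $\chi\colon n\to \mathcal{M}_\mathcal{S}1$, \ie a function $n\to\mathcal{S}$. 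This is exactly the data $(\hat\alpha,\chi)$ of a classical weighted automaton: $\hat\alpha(\sigma)(i)=\underline{a}(\sigma,i)$. So the objects on the two sides of the claimed coincidence are in bijection; only the behaviours need to be matched.

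The core of the proof is then the computation of the behaviour $L(\mathcal{A},i)=1\stackrel{i_n}{\rightdcirc}n\stackrel{a}{\rightdcirc}\overline{\Sigma^\ast}n\stackrel{\overline{\Sigma^\ast}\chi}{\rightdcirc}\overline{\Sigma^\ast}1$ in $\kl(\overline{\Sigma^\ast})=\kl(\mathcal{M}_\mathcal{S}(\Sigma^\ast\times\mathcal{I}d))$. I would unfold this composite using the Kleisli composition formula for $\overline{\Sigma^\ast}$ (the analogue of the composition in $\kl(\overline{M})$ spelled out in Example~\ref{example:monoid_monad}, but with $\mathcal{P}$ replaced by $\mathcal{M}_\mathcal{S}$ and set-union replaced by the $\omega$-summation of $\mathcal{S}$). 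Since $a$ is an EM-algebra it satisfies $a(1,x)=x\bullet 1$ and $a(m\cdot n,x)=\mu$-combination of $a$ applied coordinatewise, so the value $a$ assigns to a word $\sigma_1\cdots\sigma_k$ at state $i$ is the matrix-product weight $\sum\{\prod_{p<k}\hat\alpha(\sigma_{p+1})(j_p)(j_{p+1})\mid j_0=i,\dots\}\bullet(\text{leaf }j_k)$; postcomposing with $\overline{\Sigma^\ast}\chi$ multiplies in $\chi(j_k)$ and collapses the state coordinate to $1$. This yields precisely the $\mathcal{S}$-multiset over $\Sigma^\ast$ displayed in the footnote defining a weighted automaton, namely $\lambda\sigma_1\cdots\sigma_k.\sum\{\chi(j_k)\cdot\prod_{p<k}\hat\alpha(j_p)(\sigma_{p+1},j_{p+1})\}$. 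Running this equality in both directions — every regular map arises this way, and every classical weighted-automaton language is such a composite — gives the coincidence. One should also observe that the $\omega$-semiring hypotheses (countable sums, $\omega$-continuity of $+$ and $\cdot$) are exactly what makes these possibly infinite sums well-defined and makes the EM-algebra structure agree with the classical inductive definition, and that $\Sigma^\ast$ being countable keeps all supports within the allowed cardinality, so the restriction to finite state spaces noted in the text is harmless.

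The main obstacle I anticipate is purely bookkeeping rather than conceptual: carefully verifying that the Kleisli composition in $\kl(\overline{\Sigma^\ast})$ over $\mathcal{M}_\mathcal{S}$, iterated $k$ times along a word, really does reproduce the weighted-automaton path-sum — in particular that the order of multiplication, the handling of the monoid unit $\varepsilon$, and the interaction between the monad multiplication $\mu$ of $\mathcal{M}_\mathcal{S}$ and the concatenation in $\Sigma^\ast$ all line up with the classical formula. There is also a small subtlety in checking that the dual/currying passage $a\mapsto a_-$ used to present $\alpha\in\mathcal{SAT}(\overline{\Sigma^\ast})$ does not lose or permute any weight data. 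Both are routine once the composition formula is written out explicitly, and I would relegate the detailed calculation to \cref{section:appendix_examples} as the text already indicates.
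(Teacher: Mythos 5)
Your proposal is correct and follows essentially the same route as the paper: identify the data of a $(\mathcal{M_S},\Sigma^\ast)$-automaton with a classical weighted automaton via the free-monad generator $\underline{a}\colon\Sigma\times n\to\mathcal{M_S}n$ and the currying/duality passage $a\mapsto a_-$, then match the Kleisli composite $\overline{\Sigma^\ast}\chi\bullet\alpha\bullet i_n$ with the classical path-sum formula. In fact the paper leaves the unfolding of that composite implicit (it simply asserts the coincidence after setting up the correspondence in \cref{section:appendix_examples_weighted}), so your explicit sketch of the matrix-product computation is, if anything, slightly more detailed than the paper's own treatment.
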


Regular morphisms form a subtheory of $\mathbb{T}_{\mathcal{M}_\mathcal{S}({\Sigma^\ast}\times \mathcal{I}d)}$, as weighted languages enjoy Kleene Theorem \cite{DBLP:journals/iandc/SilvaBBR11}.

Weighted tree automata and their languages are captured by our framework as well since the monad $T_\Sigma$ lifts to $\kl(\mathcal{M_S})$.

\subparagraph{Fuzzy automata} 
\looseness=-1
For $(\mathcal{Q},\cdot ,1,\leq)$ a unital quantale (\eg the real unit interval $([0,1],\cdot,1,\leq)$), let $(\mathcal{P_Q},\bigcup_\mathcal{Q},\{-\}_\mathcal{Q})$ be the \emph{$\mathcal{Q}$-fuzzy powerset monad}
and observe that the free monad $\Sigma^\ast$ lifts to $\kl(\mathcal{P_Q})$ via the unique extension of the distributive law $\lambda\colon \Sigma\times \mathcal{I}d \to \mathcal{P_Q}$ given by $(\Sigma \times (X,\phi)) \mapsto (\Sigma \times X, \lambda (\sigma,x).\phi(x))$,
\Cref{proposition:duality_free_monads} holds for $\overline{\Sigma^\ast}$ since $\kl(\mathcal{P_Q})$ admits saturation \cite{brengos2019:lmcs} and meets the requirements of \cref{subsection:the_duality}. As a consequence,
Eilenberg-Moore algebras for $\overline{\Sigma^\ast}$ are dual to the saturations of morphism of the form  $X\to \mathcal{P_Q}(\Sigma \times X)\hookrightarrow \mathcal{P_Q}(\Sigma^\ast \times X)$.

Let $\mathcal{A} = (\alpha\colon n\to \mathcal{P_Q}(\Sigma^\ast \times n), \chi\colon n\to \mathcal{P_Q}1)$ be a $(\mathcal{P_Q},\Sigma^\ast)$-automaton.
The transition map $\alpha \in \mathcal{SAT}(\overline{\Sigma^\ast})$ of $\mathcal{A}$ is equivalently defined (by saturation and construction of $\Sigma^\ast$) as a map $\hat{\alpha}\colon n\to \mathcal{P_Q}(\Sigma \times n)$ and the language accepted by a state $i \in n$ as 
$
  L(\mathcal{A},i) = 1\rightdcirc n \stackrel{(\nu_X \circ \hat{\alpha})^\ast}{\rightdcirc } \overline{\Sigma^\ast}n \stackrel{\overline{\Sigma^\ast}\chi}{\rightdcirc }\overline{\Sigma^\ast}1
$. If we now recall the classical definition of a fuzzy automaton and its language\footnote{A \emph{fuzzy automaton} is a pair $(\hat{\alpha}\colon n\to \mathcal{P_Q}(\Sigma \times n),(n,\chi))$ and the language of a state $i \in n$ is the fuzzy set $\left(\Sigma^\ast,\lambda \sigma_1\dots\sigma_k. \bigvee \left\{\chi(j_k)\cdot \prod_{p< k} \hat\alpha(j_{p})(\sigma_{p+1},j_{p+1}) \,\middle|\, j_0 = i, j_1,\dots,j_k \in n \right\}\right)$.}
(\eg from \cite{mordeson:2002fuzzy,doostfatemeh:2005fuzzy}) we immediately get the following coincidence.

\begin{theorem}
Regular maps $1\rightdcirc\overline{\Sigma^\ast} 1 =1 \to  \mathcal{P_Q}(\Sigma^\ast \times 1)$ for $\mathcal{P_Q}({\Sigma^\ast}\times \mathcal{I}d)$ coincide with languages recognised by fuzzy automata in the classical sense.
\end{theorem}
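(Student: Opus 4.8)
The plan is to mirror the argument already carried out for tree automata, replacing the tree monad $T_\Sigma$ by the free monoid monad $\Sigma^\ast$ on $\kl(\mathcal{P_Q})$ and noting that the only input-specific ingredient is the identification of $\mathcal{SAT}(\overline{\Sigma^\ast})$-coalgebras with maps $\hat\alpha\colon n\to\mathcal{P_Q}(\Sigma\times n)$. First I would unwind both sides of the claimed equality. On the coalgebraic side, a regular map $1\rightdcirc\overline{\Sigma^\ast}1 = 1\to\mathcal{P_Q}(\Sigma^\ast\times 1)$ is, by Definition~\ref{definition:automaton_generelized} and (\ref{exp:regular}), exactly a composite $1\stackrel{i_n}{\rightdcirc}n\stackrel{\alpha}{\rightdcirc}\overline{\Sigma^\ast}n\stackrel{\overline{\Sigma^\ast}\chi}{\rightdcirc}\overline{\Sigma^\ast}1$ with $\alpha\in\mathcal{SAT}(\overline{\Sigma^\ast})$ and $\chi\colon n\to\mathcal{P_Q}1$ arbitrary. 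Using that $\kl(\mathcal{P_Q})$ admits saturation, that $\overline{\Sigma^\ast}$ is free over $\overline{\Sigma}$, and that \Cref{proposition:duality_free_monads} applies (all verified in the preceding paragraph), every such $\alpha$ is of the form $(\nu_n\circ\hat\alpha)^\ast$ for a unique $\hat\alpha\colon n\to\mathcal{P_Q}(\Sigma\times n)$; the data $(\hat\alpha,\chi)$ is precisely a fuzzy automaton, with the unique element of $1$ playing the role of the initial state $i$.

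Next I would compute the composite explicitly. Unfolding Kleisli composition for $\overline{\Sigma^\ast}$ over $\mathcal{P_Q}$ — i.e.\ using the description of $\overline{\Sigma^\ast}f$ via the lifted distributive law and the saturation formula of \cref{example:saturation} transported along the semantics of $\mathcal{P_Q}$ — the value $L(\mathcal{A},i)(1)$ at a word $\sigma_1\cdots\sigma_k$ is the quantale join over all runs $j_0=i, j_1,\dots,j_k$ of the product $\chi(j_k)\cdot\prod_{p<k}\hat\alpha(j_p)(\sigma_{p+1},j_{p+1})$. That is exactly the fuzzy language of the state $i$ in $(\hat\alpha,\chi)$ as recalled in the footnote to the theorem (after the routine identification of $\chi\colon n\to\mathcal{P_Q}1$ with a map $n\to\mathcal{Q}$, i.e.\ a fuzzy subset $(n,\chi)$ of terminal states). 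Conversely, any classical fuzzy automaton $(\hat\alpha,(n,\chi))$ gives rise to $\alpha=(\nu_n\circ\hat\alpha)^\ast\in\mathcal{SAT}(\overline{\Sigma^\ast})$ and a $\chi\colon n\to\mathcal{P_Q}1$, and by the same computation its language is the regular map $1\stackrel{i_n}{\rightdcirc}n\stackrel{\alpha}{\rightdcirc}\overline{\Sigma^\ast}n\stackrel{\overline{\Sigma^\ast}\chi}{\rightdcirc}\overline{\Sigma^\ast}1$. Running the bijection in both directions establishes the coincidence.

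The main obstacle is purely bookkeeping rather than conceptual: one must check that the join-and-product formula produced by unfolding the Kleisli composition in $\kl(\mathcal{P_Q})$ genuinely matches the classical run-based definition, i.e.\ that the saturation $(\nu_n\circ\hat\alpha)^\ast$ really enumerates all finite runs weighted multiplicatively and that $\overline{\Sigma^\ast}\chi$ followed by the final collapse performs the join over runs ending in a given state. This is where the quantale structure ($\cdot$ for sequencing, $\bigvee_\mathcal{Q}$ for nondeterministic choice, with $1$ as the weight of the empty run) has to be traced carefully through the lifted distributive law $\lambda\colon\Sigma\times\mathcal{I}d\to\mathcal{P_Q}$; once this is done the equality of the two descriptions of $L(\mathcal{A},i)$ is immediate, and the statement follows. (This verification is entirely analogous to the tree-automaton case treated above and, at a higher level, to the weighted-automaton case, so I would present it compactly, pointing to \cite{mordeson:2002fuzzy,doostfatemeh:2005fuzzy} for the classical side and to \cite{brengos2019:lmcs} for the saturation computation in $\kl(\mathcal{P_Q})$.)
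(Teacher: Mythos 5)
Your proposal follows the same route as the paper: use the freeness of $\overline{\Sigma^\ast}$ over $\overline{\Sigma}$ in $\kl(\mathcal{P_Q})$ together with \cref{proposition:duality_free_monads} to identify $\mathcal{SAT}(\overline{\Sigma^\ast})$-coalgebras $\alpha=(\nu_n\circ\hat\alpha)^\ast$ with classical fuzzy transition maps $\hat\alpha\colon n\to\mathcal{P_Q}(\Sigma\times n)$, identify $\chi\colon n\to\mathcal{P_Q}1$ with the fuzzy set of final states, and then unfold the Kleisli composite to recover the classical join-over-runs formula for $L(\mathcal{A},i)$. This matches the paper's argument, so your proof is correct and essentially identical in approach.
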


Regular maps form a subtheory of $\mathbb{T}_{\mathcal{P_Q}({\Sigma^\ast}\times \mathcal{I}d)}$, as fuzzy languages enjoy Kleene Theorem as they are closed under composition, sums, and saturation \cite{mordeson:2002fuzzy,DBLP:journals/fss/StamenkovicC12}.

Fuzzy tree automata and their languages are similarly captured by our framework since $T_\Sigma$ lifts to $\kl(\mathcal{P_Q})$ and the resulting monad meets the hypotheses of \cref{proposition:duality_free_monads}.

\section{Conclusion} \label{section:summary}

The paper's goal was to build a connection between two approaches towards categorical language theory: the coalgebraic and algebraic language theory for monads. For a pair of $\mathsf{Set}$-based monads $T$ and $S$ (with $T$ modelling the branching type and $S$ the linear type) which admit a monadic structure on their composition $TS$ we defined regular maps $p\to TSq$ that generalize regular languages known in classical non-deterministic automata theory. Although these maps are of coalgebraic nature (as they are, roughly speaking, behaviours of finite $(T,S)$-automata), they arise as duals of certain Eilenberg-Moore algebra homomorphisms. The key ingredient to all our results was the Eilenberg-Moore algebra-saturated coalgebra duality based on the self-duality of  (a certain subcategory of) $\kl(T)$.

We showed that, given some extra assumptions, regular maps form a subtheory of the Lawvere theory associated with the monad $TS$. Moreover, we stated a Kleene-like theorem saying that the Lawvere theory of regular morphisms is the smallest subtheory containing all branching type maps and duals of Eilenberg-Moore algebras of a lifting of $S$ to $\kl(T)$.

Additionally, whenever $T=\mathcal{P}$ we showed that regular maps of type $\mathcal{P}S$ are characterised as maps recognized by Lawvere theory morphisms whose codomains are finitary theories. 

Although, our running example were classical non-deterministic automata and regular languages we instantiated the theory presented in this paper to tree automata, fuzzy automata and weighted automata.

\subparagraph{Related work}
We build on the coalgebraic language theory from \cite{bloomesik:93,urabe_et_al:LIPIcs:2016:6186,esik2011,esik2013,brengos2018:concur}, the algebraic language theory stated in the context of Eilenberg-Moore algebras in \cite{bojanczyk2015} and some classical results from \cite{pin:automata,Wilke93,eilenberg}. We are unaware of any research which exploits the Eilenberg-Moore algebra-saturated coalgebra duality on the categorical level to show an equivalence between regular and recognizable languages akin to our approach. The closest are \cite{DBLP:journals/tocl/AdamekMMU19} and \cite{DBLP:journals/corr/Salamanca17}, both study Eilenberg-type dualities:
The first work characterises deterministic word-automata in a locally finite variety whereas our work applies also \eg to tree automata.
The second work provides a duality result between algebras for a monad and coalgebras for a comonad whereas we investigate dualities between algebras and saturated coalgebras for the same type. Moreover, we present a Kleene-like theorem for regular maps which (up to our knowledge) has not been stated at this level of generality.

\subparagraph{Future work}
This paper provides evidence that Lawvere theory morphism recognition is a natural context to investigate whether concepts and properties known in the algebraic language theory (syntactic algebra, star-free language characterization, and more \cf \cite{weil2004}) can be stated at this level of generality.
A key result in this direction would be the extension of the notion of morphism recognition and our results beyond non-determinism ($T = \mathcal{P}$). 
We see the recent enriched view on the extended finitary monad-Lawvere theory correspondence \cite{DBLP:journals/lmcs/GarnerP18} as a helpful stepping stone towards this goal.

\bibliography{biblio}
\clearpage

\appendix
\section{Basic notions (extended)}\label{section:appendix_basic_notions}

\subsection{Algebras and coalgebras}
Let $F:\mathsf{C}\to \mathsf{C}$ be a functor. An \emph{$F$-coalgebra} (\emph{$F$-algebra}) is a morphism $\alpha:A\to FA$ (resp. $a:FA\to A$). The object $A$ is called a \emph{carrier} of the underlying $F$-(co)algebra. Given two coalgebras $\alpha:A\to FA$ and $\beta:B\to FB$ a morphism $h:A\to B$ is \emph{homomorphism} from $\alpha$ to $\beta$ provided that $\beta\circ h= F(h)\circ \alpha$. For two algebras $a:FA\to A$ and $b:FB\to B$ a morphism $h:A\to B$ is called \emph{homomorphism} from $a$ to $b$ if $b\circ F(h) = h\circ a$. The category of all $F$-coalgebras ($F$-algebras) and homomorphisms between them is denoted by $\mathsf{CoAlg}(F)$ (resp. $\mathsf{Alg}(F)$).
Let $\Sigma$ be a set of \emph{labels}.

\subsection{Monads}
A \emph{monad} on $\mathsf{C}$ is a triple $(T,\mu,\eta)$, where $T:\mathsf{C}\to \mathsf{C}$ is an endofunctor and $\mu:T^2\implies T$, $\eta:\mathcal{I}d\implies T$ are two natural transformations for which the following diagrams commute:
$$
\begin{tikzpicture}
\node (t2) at (0,0) {$T^2$};
\node (t) at (2,0) {$T$};
\node (t3) at (0,1) {$T^3$};
\node (t22) at (2,1) {$T^2$};

\draw[->] (t3) -- (t22) node[pos=.5,above] {$\mu$};
\draw[->] (t3) -- (t2) node[pos=.5,left] {$T\mu$ };
\draw[->] (t2) -- (t) node[pos=.5,below] {$\mu$};
\draw[->] (t22) -- (t) node[pos=.5,right] {$\mu$};

\node (at2) at (4,0) {$T^2$};
\node (at) at (6,0) {$T$};
\node (at3) at (4,1) {$T$};
\node (at22) at (6,1) {$T^2$};

\draw[->] (at3) -- (at22) node[pos=.5,above] {$T\eta$};
\draw[->] (at3) -- (at2) node[pos=.5,left] {$\eta_T$ };
\draw[->] (at3) -- (at) node[pos=.3,below] {$\mathsf{id}$ };
\draw[->] (at2) -- (at) node[pos=.5,below] {$\mu$};
\draw[->] (at22) -- (at) node[pos=.5,right] {$\mu$};
\end{tikzpicture}
$$
The transformation $\mu$ is called  \emph{multiplication} and $\eta$ \emph{unit}.

\subsection{Kleisli category} We start this section by recalling the notion of Kleisli category for a monad and listing basic examples of monads and their Kleisli categories we will work with throughout this paper.

Any monad $(T:\mathsf{C}\to \mathsf{C},\mu,\eta)$ gives rise to the \emph{Klesli category} $\kl(T)$ for $T$: it has the class of objects equal to the class of objects of $\mathsf{C}$ and for two objects $X,Y$ in $\kl(T)$ we have ${\kl(T)}(X,Y) = {\mathsf{C}}(X,TY)$
with the composition $\bullet$ in $\kl(T)$ defined between two morphisms $f:X\to TY$ and $g:Y\to TZ$ by
$g\bullet f := \mu_Z \circ T(g) \circ f$. In order to emphasize the distinction between morphisms in $\mathsf{C}$ and in $\kl(T)$ any morphism between two objects $X,Y$ will be denoted by $X\to Y$ if it is a morphism in $\mathsf{C}$ and $X\rightdcirc Y$ if it is a morphism in $\kl(T)$. Hence, $X\rightdcirc Y = X\to TY$ and $X\stackrel{f}{\rightdcirc } Y \stackrel{g}{\rightdcirc }Z =X\stackrel{f}{\to } TY \stackrel{Tg}{\to }T^2Z\stackrel{\mu_Z}{\to} TZ.$
 The category $\mathsf{C}$ is a subcategory of $\kl(T)$ where the inclusion functor $(-)^{\sharp}$ sends each object $X\in \mathsf{C}$ to itself and each map $f:X\to Y$ in $\mathsf{C}$ to the morphism
$
f^{\sharp}:X\to TY; f^{\sharp} \defeq \eta_Y\circ f.
$

\begin{example}\label{example:powerset_monad_kleisli}
The powerset endofunctor $\mathcal{P}:\Set\to \Set$ is a monad whose multiplication $\bigcup:\mathcal{P}^2\implies\mathcal{P}$ and unit $\{-\}:\mathcal{I}d\implies \mathcal{P}$ are given by
$\bigcup:\mathcal{P}\mathcal{P}X\to \mathcal{P}X; S \mapsto \bigcup S$ and $\{-\}_X:X\to \mathcal{P}X; x\mapsto \{x\}$.
The Kleisli category  $\kl(\mathcal{P})$  consists of sets as objects and maps $f:X\to \mathcal{P}Y$ and $g:Y\to \mathcal{P}Z$ with the composition $g\bullet f:X\to \mathcal{P}Z$ defined as follows: $g\bullet f(x) = \{z\in Z \mid z\in \bigcup g(f(x))\}.$ It is a simple exercise to prove that this category is isomorphic to the category $\mathsf{Rel}$ of sets as objects and binary relations with standard relation composition as morphisms and their composition.
\end{example}

\subsubsection{Distributive laws and liftings}

Let $(T,\mu,\eta)$ be a monad on $\mathsf{C}$ and $S$ a $\mathsf{C}$-endofunctor.
A \emph{distributive law}  $\lambda: ST\implies TS$ of the functor $S$ over the monad $T$, \ie natural transformation which satisfies extra conditions (see \eg \cite{mulry:mfps1993} for details):

\resizebox{0.4\textwidth}{!}{
\begin{tikzpicture}
\node (FTX1) at (0,0) {$STX$};
\node (TFX1) at (2,0) {$TSX$};
\node (FX1) at (1,1) {$SX$};

\draw[->] (FX1) -- (FTX1) node[pos=.5,left] {\tiny $S\eta$};
\draw[->] (FX1) -- (TFX1) node[pos=.5,right] {\tiny $\eta$};
\draw[->] (FTX1) -- (TFX1) node[pos=.5,below] {\tiny $\lambda$};
\hspace{4cm}
\node (FTX2) at (0,0) {$STX$};
\node (TFX2) at (4,0) {$TSX$};
\node (FT2X2) at (0,1) {$ST^2 X$};
\node (TFTX2) at (2,1) {$TST X$};
\node (T2FX2) at (4,1) {$T^2SX$};

\draw[->] (FT2X2) -- (FTX2) node[pos=.5,left] {\tiny $S\mu$};
\draw[->] (FTX2) -- (TFX2) node[pos=.5,below] {\tiny $\lambda$};
\draw[->] (T2FX2) -- (TFX2) node[pos=.5,right] {\tiny $\mu$};
\draw[->] (FT2X2) -- (TFTX2) node[pos=.5,above] {\tiny $\lambda$};
\draw[->] (TFTX2) -- (T2FX2) node[pos=.5,above] {\tiny $T\lambda$};
\end{tikzpicture}
}

A \emph{distributive law of the monad}  $(S,m,e)$ on $\mathsf{C}$ \emph{over} the monad $(T,\mu,\eta)$ \cite{barrwells:ttt} is a distributive law $\lambda:ST\implies TS$ of the underlying functor $S$ over the monad $T$ which additionally satisfies:

\resizebox{0.4\textwidth}{!}{
\begin{tikzpicture}
\node (FTX1) at (0,0) {$STX$};
\node (TFX1) at (2,0) {$TSX$};
\node (FX1) at (1,1) {$TX$};

\draw[->] (FX1) -- (FTX1) node[pos=.5,left] {\tiny $e$};
\draw[->] (FX1) -- (TFX1) node[pos=.5,right] {\tiny $Te$};
\draw[->] (FTX1) -- (TFX1) node[pos=.5,below] {\tiny $\lambda$};

\hspace{4cm}
\node (FTX2) at (0,0) {$STX$};
\node (TFX2) at (4,0) {$TSX$};
\node (FT2X2) at (0,1) {$SST X$};
\node (TFTX2) at (2,1) {$STS X$};
\node (T2FX2) at (4,1) {$TSSX$};

\draw[->] (FT2X2) -- (FTX2) node[pos=.5,left] {\tiny $m$};
\draw[->] (FTX2) -- (TFX2) node[pos=.5,below] {\tiny $\lambda$};
\draw[->] (T2FX2) -- (TFX2) node[pos=.5,right] {\tiny $Tm$};
\draw[->] (FT2X2) -- (TFTX2) node[pos=.5,above] {\tiny $S\lambda$};
\draw[->] (TFTX2) -- (T2FX2) node[pos=.5,above] {\tiny $\lambda$};
\end{tikzpicture}
}

Let $\lambda:ST\implies TS$ be a distributive law of a functor $S:\mathsf{C}\to \mathsf{C}$ over a monad $(T,\mu,\eta)$ on $\mathsf{C}$. This allows us to define a functor $\overline{S}:\kl(T)\to \kl(T)$ as follows. Any object $X\in \kl(T)$ is mapped onto $SX\in \kl(T)$. Any morphism $f:X\rightdcirc Y=X\to TY$ is mapped onto $\overline{S}f:\overline{S}X\rightdcirc \overline{S}Y=SX\to TSY$ given by:
$\overline{S}f\defeq SX\stackrel{S f}{\to} STY\stackrel{\lambda_Y}{\to} TSY.$
 We then say that $S:\mathsf{C}\to\mathsf{C}$ \emph{lifts to}   $\overline{S}:\kl(T)\to\kl(T)$ via $\lambda$.

If $\lambda:ST\implies TS$ is a distributive law of a \emph{monad}  $(S,m,e)$ over the monad $(T,\mu,\eta)$ then this allows to introduce a monadic structure $(\overline{S},\overline{m},\overline{e})$ on the lifting $\overline{S}$ of the functor $S$ to $\kl(T)$ by putting $ \overline{e}_X\defeq \eta_X \circ e_X$ and  $\overline{m}_X\defeq \eta_{SX}\circ m_{X}$.
We then say that the monad $(S,m,e)$ on $\mathsf{C}$ \emph{lifts to} the monad $(\overline{S},\overline{m},\overline{e})$ on $\kl(T)$ via $\lambda$.

This also yields a monadic structure on $TS:\mathsf{C}\to \mathsf{C}$ with $\kl(TS) = \kl(\overline{S})$, where the composition $g\cdot f$ is given in $\mathsf{C}$ for $f\colon X\to TSY$ and $g\colon Y\to TSZ$ by:

\begin{center}
\resizebox{0.7\textwidth}{!}{

	\begin{tikzpicture}[auto, xscale=1.8,font=\small,baseline=(current bounding box.center)]
		\node (n0) at (.2,0) {$X$};
		\node (n1) at (1,0) {$TSY$};
		\node (n2) at (2,0) {$(TS)^2Z$};
		\node (n3) at (3,0) {$T^2S^2Z$};
		\node (n5) at (4,0) {$T^2SZ$};
		\node (n6) at (5,0) {$TSZ$};
		\draw[->,rounded corners] (n0) -- +(0,.7) -| node[pos=.25] {$g \cdot f$} (n6);
		\draw[->] (n0) to node[swap] {$f$} (n1);
		\draw[->] (n1) to node[swap] {$TSg$} (n2);
		\draw[->] (n2) to node[swap] {$T\lambda_{SZ}$} (n3);
		\draw[->] (n3) to node[swap] {$T^2 m_Z$} (n5);
		\draw[->] (n5) to node[swap] {$\mu_{SZ}$} (n6);
	\end{tikzpicture}
}
\end{center}

\subsubsection{Free monads and their liftings}
 A \emph{free monad} \cite{barrwells:ttt} over a functor $F:\mathsf{C}\to \mathsf{C}$ is a monad $(F^\ast,m,e)$ together with a natural transformation $\nu:F\implies F^\ast$ such that for any monad $S=(S,m',e')$ on $\mathsf{C}$ and a natural transformation $s:F\to S$ there is a unique monad morphism $\overline{s}:F^\ast\to S$ such that $\overline{s}\circ \nu=s$. The free monad $F^\ast$ over $F$ has an explicit construction in terms of free $F$-algebras as follows.  Let $\mathsf{C}$ admit binary coproducts $+$ with the cotupling denoted by $[-,-]$ and the coprojections into the first and second component of $X+Y$ by $\mathsf{inl}:X\to X+Y$ and $\mathsf{inr}:Y\to X+Y$ respectively. Assume $F$ has an initial $F(-)+X$ algebra (=free $F$-algebra over $X$) for any $X$.   This allows us to define a functor $F^\ast:\mathsf{K}\to \mathsf{K}$ which maps any object $X$ onto the carrier of
the initial $F(-)+X$ algebra $i_X:FF^\ast X+X\to F^\ast X$. Moreover, this functor carries a monadic structure $(F^\ast,m,e)$ which arises from universal properties of the initial algebras $i_X$ and turns $(F^\ast,m,e)$ into a free monad over $F$ with a natural transformation $\nu:F\implies F^\ast$ on its $X$-component given by:
$$
FX\stackrel{F\mathsf{inr}}{\to} F(FF^\ast X + X)\stackrel{F i_X}{\to} FF^\ast X\stackrel{\mathsf{inl}}{\to}FF^\ast X+X \stackrel{i_X}{\to} F^\ast X.
$$

 Additionally, if $F:\mathsf{C}\to\mathsf{C}$ is a funtor that lifts to $\kl(T)$ and admits a free monad $F^\ast:\mathsf{C}\to \mathsf{C}$ then the monad $F^\ast$ lifts to a monad $\overline{F^\ast}:\kl(T)\to \kl(T)$ which is a free monad over the lifting $\overline{F}$ \cite{brengos2015:lmcs}.
\begin{example}
Our two running examples of monads, namely $\Sigma^\ast\times \mathcal{I}d$ and $T_\Sigma$, are both free monad over the functors $\Sigma\times \mathcal{I}d$ and $\mathcal{I}d\times \Sigma\times \mathcal{I}d$ respectively. Hence, since the functors $\Sigma\times \mathcal{I}d$ and $\mathcal{I}d\times \Sigma\times \mathcal{I}d$ lift to $\kl(\mathcal{P})$ \cite{hasuo07:trace}, the monads $\Sigma^\ast\times \mathcal{I}d$ and $T_\Sigma$ also do. Their distributive laws over $\mathcal{P}$ are given in Example \ref{example:kleisli_powerset} and Section \ref{section:examples}. The statement remains true if we change $\mathcal{P}$ into $\mathcal{P_Q}$ or $\mathcal{M_S}$.

\end{example}

\subsection{Eilenberg-Moore algebras}

Given a monad $(T,\mu,\eta)$ on a category $\mathsf{C}$ we say  that a $T$-algebra $a:TX\to X$ is an \emph{Eilenberg-Moore algebra} if $a\circ Ta= a \circ \mu_X$ and $id_X = a\circ \eta_X$:
$$
\begin{tikzpicture}
\node(x1) at (0,0) {$X$};
\node (a1) at (2,0) {$T X$};
\node(fx1) at (0,1) {$T X$};
\node (fa1) at (2,1) {$T^2 X$};
\draw[->] (a1) -- (x1) node[pos=.5,above] {$a$};
\draw[->] (fa1) -- (a1) node[pos=.5,right] {$Ta$};
\draw[->] (fx1) -- (x1) node[pos=.5,left] {$a$};
\draw[->] (fa1) -- (fx1) node[pos=.5,above] {$\mu$};

\node(1x1) at (4,1) {$X$};
\node (1a1) at (6,1) {$T X$};
\node (1fa1) at (6,0) {$X$};
\draw[->] (1a1) -- (1fa1) node[pos=.5,right] {$a$};
\draw[->] (1x1) -- (1a1) node[pos=.5,above] {$\eta$};
\draw[->] (1x1) -- (1fa1) node[pos=.5,below] {$\mathsf{id}$};
\end{tikzpicture}
$$
The collection of all Eilenberg-Moore algebras for the monad $T$ as objects with $T$-algebra homomorphisms as morphisms forms the category of Eilenberg-Moore algebras denoted by $\mathcal{EM}(T)$. For any object $X$ the map $\mu_X:TTX\to TX$ is an Eilenberg-Moore algebra over $X$. Moreover, given an Eilenberg-Moore algebra $a:TA\to A$ and a morphism $h: X\to A$ there is a unique morphism $\overline{h}:TX\to A$ which is a homomorphism from $\mu_X$ to $a$ satisfying $\overline{h} \circ \eta_X = h$. This turns the algebra $\mu_X:T^2 X\to TX$ with $\eta:X\to TX$ into a free Eilenberg-Moore algebra over $X$.

\begin{theorem}\label{theorem:EM_on_kleisli_to_EM_on_base}
Assume $(S,m,e)$ is a monad on $\mathsf{C}$ that lifts to a monad $(\overline{S},\overline{m},\overline{e})$ on the Kleisli category for $(T,\mu,\eta)$ on $\mathsf{C}$ via $\lambda$. This induces a functor $|-|:\mathcal{EM}(\overline{S})\to \mathcal{EM}(S)$ which maps any $\overline{S}$-algebra $a:\overline{S}X\rightdcirc X = SX\to TX$ onto  $$|a|=STX\stackrel{\lambda}{\to}TSX\stackrel{Ta}{\to} TTX \stackrel{\mu}{\to} TX$$ and any algebra homomorphism $h:X\rightdcirc Y=X\to TY$ between the algebras $a:\overline{S}X\rightdcirc X$ and $b:\overline{S}Y\rightdcirc Y$ onto the map $|h| = TX\stackrel{Th}{\to} TTY\stackrel{\mu_Y}{\to} TY.$
\begin{center}
\resizebox{0.6\textwidth}{!}{
\begin{tikzpicture}
\node(sx) at (0,1) {$\overline{S}X$};
\node (x) at (0,0) {$X$};
\node(sy) at (3,1) {$\overline{S} Y$};
\node (y) at (3,0) {$Y$};
\draw[->] (x) -- (y) node[pos=.5,below] {$h$};
\draw[->] (sx) -- (x) node[pos=.5,left] {$a $};
\draw[->] (sy) -- (y) node[pos=.5,right] {$b$};
\draw[->] (sx) -- (sy) node[pos=.5,above] {$\overline{S} h$};
\filldraw[fill=white, draw=black]  (0,0.5) circle (0.08);
\filldraw[fill=white, draw=black]  (3,0.5) circle (0.08);
\filldraw[fill=white, draw=black]  (1.5,0) circle (0.08);
\filldraw[fill=white, draw=black]  (1.5,1) circle (0.08);
\node (ineq) at (3.7,0.5) {$\implies$};
\node(sx1) at (6.5,1) {$STX$};
\node (x1) at (6.5,0) {$TX$};
\node(sy1) at (8.5,1) {$STY$};
\node (y1) at (8.5,0) {$TY$};
\draw[->] (x1) -- (y1) node[pos=.5,below] { $\mu_Y\circ Th$};
\draw[->] (sx1) -- (x1) node[pos=.5,left] {$\mu_X\circ Ta\circ \lambda_X$};
\draw[->] (sy1) -- (y1) node[pos=.5,right] {$|b|$};
\draw[->] (sx1) -- (sy1) node[pos=.5,above] { $S |h|$};
\end{tikzpicture}
}
\end{center}
\end{theorem}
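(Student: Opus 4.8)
The plan is to turn the two Eilenberg–Moore laws for an $\overline{S}$-algebra $a\colon\overline{S}X\rightdcirc X$, which are equations in $\kl(T)$, into equations in $\mathsf{C}$, and then read off the claim. Unfolding Kleisli composition and using the unit laws of $T$ together with naturality of $\eta$, the unit law $a\cdot\overline{e}_X=\mathsf{id}_X$ becomes $a\circ e_X=\eta_X$, while the associativity law $a\cdot\overline{S}a=a\cdot\overline{m}_X$ becomes
\[
 a\circ m_X \;=\; \mu_X\circ Ta\circ\lambda_X\circ Sa,
\]
which I will call $(\star)$. Both reductions are short routine calculations. Dually, $h\colon X\rightdcirc Y$ being an $\overline{S}$-algebra homomorphism from $a$ to $b$ unfolds to $\mu_Y\circ Th\circ a=\mu_Y\circ Tb\circ\lambda_Y\circ Sh$ in $\mathsf{C}$.

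Next I would check that $|a|=\mu_X\circ Ta\circ\lambda_X\colon STX\to TX$ is an Eilenberg–Moore algebra for $(S,m,e)$ on the object $TX$. The unit law $|a|\circ e_{TX}=\mathsf{id}_{TX}$ is immediate: the distributive-law axiom $\lambda_X\circ e_{TX}=Te_X$ rewrites the left-hand side as $\mu_X\circ T(a\circ e_X)$, which by $a\circ e_X=\eta_X$ and the right unit law of $T$ equals $\mathsf{id}_{TX}$. For the multiplication law $|a|\circ m_{TX}=|a|\circ S|a|$, I would compute the left-hand side by pushing $m_{TX}$ through $\lambda_X$ via the distributive-law axiom $\lambda_X\circ m_{TX}=Tm_X\circ\lambda_{SX}\circ S\lambda_X$, then applying $T$ to $(\star)$, and finally using associativity of $\mu$ and naturality of $\mu$ at $a$; this brings it to the form $\mu_X\circ Ta\circ\mu_{SX}\circ T\lambda_X\circ TSa\circ\lambda_{SX}\circ S\lambda_X$. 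Expanding the right-hand side $|a|\circ S|a|$ directly and using the other distributive-law axiom $\lambda_X\circ S\mu_X=\mu_{SX}\circ T\lambda_X\circ\lambda_{TX}$ gives $\mu_X\circ Ta\circ\mu_{SX}\circ T\lambda_X\circ\lambda_{TX}\circ STa\circ S\lambda_X$. The two expressions coincide because $TSa\circ\lambda_{SX}=\lambda_{TX}\circ STa$ is precisely naturality of $\lambda$ at $a\colon SX\to TX$.

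It remains to treat morphisms. Here $|h|=\mu_Y\circ Th$ is nothing but the image of $h$ under the canonical functor $\kl(T)\to\mathcal{EM}(T)$ that sends each object to its free algebra $\mu_X\colon T^2X\to TX$; that this assignment preserves identities and Kleisli composition is standard, so $|-|$ will be a functor once we know it lands in $\mathcal{EM}(S)$ on morphisms as well as on objects. That is, one checks $|h|\circ|a|=|b|\circ S|h|$ starting from the homomorphism identity $\mu_Y\circ Th\circ a=\mu_Y\circ Tb\circ\lambda_Y\circ Sh$ of the first paragraph; this is a chase strictly parallel to the multiplication-law case, again powered by naturality of $\mu$, associativity of $\mu$, the axiom $\lambda_Y\circ S\mu_Y=\mu_{SY}\circ T\lambda_Y\circ\lambda_{TY}$, and naturality of $\lambda$ at $h$.

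The only genuinely delicate point is the multiplication-law calculation of the second paragraph: it is a longer diagram chase in which one must keep careful track of the objects at which $\lambda$, $\mu$ and $m$ are instantiated, and it essentially needs both distributive-law compatibility axioms — the one with $m$ to handle $m_{TX}$, and the one with $\mu$ to handle the nested $|a|$ — on top of naturality of $\lambda$. Everything else is bookkeeping with the monad laws for $T$.
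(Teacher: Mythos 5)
Your proposal is correct and follows essentially the same route as the paper: unfold the Eilenberg--Moore and homomorphism equations from $\kl(T)$ into $\mathsf{C}$ and verify the two $S$-algebra laws for $|a|$ by a chase using both distributive-law axioms, naturality of $\lambda$ and $\mu$, and the monad laws of $T$ (the paper presents exactly this as two pasted diagrams whose $\diamond$-cells encode the unfolded laws for $a$). The only difference is cosmetic: you work equationally rather than diagrammatically, and you spell out the morphism/functoriality half that the paper leaves to the reader.
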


\begin{proof}(Theorem \ref{theorem:EM_on_kleisli_to_EM_on_base})
At first we need to show that if $a:\overline{S}X\rightdcirc X=SX\to TX$ is an Eilenberg-Moore algebra  for $\overline{S}$ then $|a|$ is in $\mathcal{EM}(S)$. This is indeed the case since the following diagrams commute in $\mathsf{C}$:
\begin{center}
\begin{tikzpicture}
\node(STX) at (0,0) {$STX$};
\node(TSX) at (2,0) {$TSX$};
\node(TX2) at (0,-1) {$TX$};
\node(TTX2) at (2,-1) {$TTX$};
\node(TX3) at (2,-2) {$TX$};

\node(diam) at (1.2,-0.7) {$\diamond$};
\draw[->] (STX) -- (TSX) node[pos=.5,above] {\tiny $\lambda$}; 
\draw[->] (TX2) -- (STX) node[pos=.5,left] {\tiny $e$}; 

\draw[->] (TX2) -- (TSX) node[pos=.5,left] {\tiny $Te$};

\draw[->] (TX2) -- (TTX2) node[pos=.5,below] {\tiny $\eta$};

\draw[->] (TSX) -- (TTX2) node[pos=.5,right] {\tiny $Ta$};

\draw[->] (TTX2) -- (TX3) node[pos=.5,right] {\tiny $\mu$};
\draw[->] (TX2) -- (TX3) node[pos=.5,below] {\tiny $\mathsf{id}$};
\end{tikzpicture}
\\
\begin{tikzpicture}
\node(STX) at (0,0) {$STX$};
\node(TSX) at (2,0) {$TSX$};
\node(TTX) at (4,0) {$TTX$};
\node(TX) at (6,0) {$TX$};
\draw[->] (STX) -- (TSX) node[pos=.5,above] {\tiny $\lambda$};  
\draw[->] (TSX) -- (TTX) node[pos=.5,above] {\tiny $Ta$};
\draw[->] (TTX) -- (TX) node[pos=.5,above] {\tiny $\mu$};
\node(TTTX2) at (4,-1) {$TTTX$};
\node(TTX2) at (6,-1) {$TTX$}; 
\draw[->] (TTTX2) -- (TTX2) node[pos=.5,above] {\tiny $\mu$};
\draw[->] (TTX2) -- (TX) node[pos=.5,right] {\tiny $\mu$};
\draw[->] (TTTX2) -- (TTX) node[pos=.5,left] {\tiny $T\mu$};
\node(diam) at (3,-1.5) {$\diamond$};
\node(TTSX3) at (4,-2) {$TTSX$};
\node(TSX3) at (6,-2) {$TSX$}; 
\draw[->] (TTSX3) -- (TSX3) node[pos=.5,below] {\tiny $\mu$};
\draw[->] (TTSX3) -- (TTTX2) node[pos=.5,left] {\tiny $T^2 a$};
\draw[->] (TSX3) -- (TTX2) node[pos=.5,right] {\tiny $Ta$};
\node(TSSX4) at (2,-3) {$TSSX$};
\node(TSTX4) at (4,-3) {$TSTX$};
\draw[->] (TSSX4) -- (TSTX4) node[pos=.5,below] {\tiny $TSa$};
\draw[->] (TSTX4) -- (TTSX3) node[pos=.5,right] {\tiny $T\lambda$};
\draw[->] (TSSX4) -- (TSX) node[pos=.5,left] {\tiny $Tm$};
\node(SSTX5) at (0,-4) {$SSTX$};
\node(STSX5) at (2,-4) {$STSX$};
\node(STTX5) at (4,-4) {$STTX$};
\node(STX5) at (6,-4) {$STX$};
\draw[->] (SSTX5) -- (STX) node[pos=.5,left] {\tiny $m$};
\draw[->] (SSTX5) -- (STSX5) node[pos=.5,below] {\tiny $S\lambda$};
\draw[->] (STSX5) -- (STTX5) node[pos=.5,below] {\tiny $STa$};
\draw[->] (STTX5) -- (STX5) node[pos=.5,below] {\tiny $S\mu$};
\draw[->] (STX5) -- (TSX3) node[pos=.5,right] {\tiny $\lambda$};
\draw[->] (STTX5) -- (TSTX4) node[pos=.5,right] {\tiny $\lambda$};
\draw[->] (STSX5) -- (TSSX4) node[pos=.5,right] {\tiny $\lambda$};
\end{tikzpicture}
\end{center}

\noindent The parts denoted by ($\diamond$) commute since $a:\overline{S}X\rightdcirc X = SX\to TX$ is in $\mathcal{EM}(\overline{S})$. The proof that $|h|=\mu_Y\circ Th$ is a morphism between $|a|$ and $|b|$ in $\mathcal{EM}(S)$ if $h$ is a morphism between $a$ and $b$ in $\mathcal{EM}(\overline{S})$ is straightforward and left to the reader.
\end{proof}

\subsection{Lawvere theories}\label{subsubsection:Lawvere_theories}

 Formally, a \emph{Lawvere theory}, or simply \emph{theory}, is a category whose objects are natural numbers $n\geq 0$ such that each $n$ is an $n$-fold coproduct of $1$.  For any element $i\in n$ let $i_n:1\to n$ denote the $i$-th coproduct injection and let the map $[ f_1, \ldots ,f_k ]:n_1+\ldots + n_k\to n$ be the cotuple of the family $\{f_l:n_l\to n\}_{l}$. Any morphism $k\to n$ of the form
$[i^1_n,\ldots,i^k_n]:k\to n$ for $i^j\in n$ is called \emph{base morphism} or \emph{base map}. Finally, let $!:n\to 1$ be defined by $!\defeq [1_1,1_1,\ldots,1_1]$.  A theory $\mathbb{T}$ is \emph{finitary} if $\mathbb{T}(m,n)$ is finite for any $n,m\geq 0$.

For any two theories $\mathbb{T}$ and $\mathbb{T}'$ a \emph{theory morphism} $h:\mathbb{T}\to \mathbb{T}'$ is an identity-on-objects functor. Let $\mathsf{Law}$ denote the category of theories as objects and theory morphisms as maps.

Any monad $S$ on $\Set$ induces a theory $\mathbb{T}_S$ associated with it by restricting the Kleisli category $\kl(S)$ to objects  $n$ for any $n\geq 0$.   Conversely, for any theory $\mathbb{T}$ there is a $\Set$ based monad $M_{\mathbb{T}}$ the theory is associated whose theory $\mathbb{T}_{M_\mathbb{T}}$ is isomorphic to $\mathbb{T}$ in $\mathsf{Law}$.  Hence, without any loss of generality we may assume $M_\mathbb{T}$ satisfies $M_\mathbb{T} (p)= \mathbb{T}(1,p)$ for any $p\geq 0$. Moreover, the assignments $S\mapsto \mathbb{T}_S$ and $\mathbb{T}\mapsto M_\mathbb{T}$ extend to functors $\mathsf{Mnd}\to \mathsf{Law}$ and $\mathsf{Law}\to \mathsf{Mnd}$\footnote{Here, $\mathsf{Mnd}$ denotes the category of all monads on $\Set$ as objects and monad morphisms as arrows. A natural transformation $h:T\implies T'$ is called a \emph{monad morphism} provided that it preserves unit and multiplication of the monad $T$, \ie $\eta' = h\circ \eta$ and $h\circ \mu = \mu'\circ hh$. }. Any theory morphism $h:\mathbb{T}\to \mathbb{T}'$ induces a monad morphism between the associated monads $M_\mathbb{T}$ and $M_\mathbb{T'}$ whose $p$-component  equals $h$, \ie it maps any $x\in M_\mathbb{T}(p)=\mathbb{T}(1,p)$ to $h(x)\in M_\mathbb{T'}(p)=\mathbb{T}'(1,p)$.  Conversely, any monad morphism $h:S\implies S'$ gives rise to a theory morphism which maps any $f:k\to Sl\in \mathbb{T}_S(k,l)$ to $  h_l\circ f:k\to S'l\in \mathbb{T}_{S'}(k,l)$ Since, in general, it is not the case that $M_{\mathbb{T}_S}$ is isomorphic to $S$ in $\mathsf{Mnd}$\footnote{ The monad $M_{\mathbb{T}_S}$ is isomorphic to $S$ whenever $S$ is \emph{finitary}, \ie when for any $x\in TX$ there exists a finite subset $X_0\subseteq X$ such that $x\in TX_0$.} this pair of functors does \emph{not} form an equilvalence of the categories $\mathsf{Law}$ and $\mathsf{Mnd}$. See \eg \cite{hyland:power:2007} for details.

\begin{example}\label{example:basic_theories}
Consider the theory associated with the LTS monad $\mathcal{P}(\Sigma^\ast\times \mathcal{I}d):\Set\to \Set$ from Example \ref{example:lts_monoid_lifting}. For any $i\in n$ the morphism $i_n:1\to \mathcal{P}(\Sigma^\ast\times n)$  in $\mathbb{T}_{\mathcal{P}(\Sigma^\ast\times \mathcal{I}d)}$ is given by $i_n(1) = \{(\varepsilon,i)\}$ and $!:n\to \mathcal{P}(\Sigma^\ast\times 1); i\mapsto \{(\varepsilon,1)\}$.  If $\mathbb{T}=\mathbb{T}_S$ is a theory associated with a $\Set$-based monad $(S,m,e)$ then $i_n:1\to n$ and $!:n\to 1$ in $\mathbb{T}_S$ are $\Set$-based maps $i_n:1\to Sn; 1\mapsto e_n(i)$ and $!:n\to S1; i\mapsto e_1(1)$. In general, all base morphisms in $\mathbb{T}_S$ are of the form $k\stackrel{f}{\to} n\stackrel{e_n}{\to}Sn$ for a $\Set$-map $f$.
\end{example}
\subparagraph{Models of Lawvere theory}\label{subsubsection:models_of_lawvere}  \emph{A model of a theory $\mathbb{T}$} is any product preserving functor $A:\mathbb{T}^{op}\to \Set$.  The category of models of $\mathbb{T}$ as objects and natural transformations between them as morphisms forms the category $\mathsf{Mod}\mathbb{T}$ of models of $\mathbb{T}$.  This category is known to be equivalent to $\mathcal{EM}(M_{\mathbb{T}})$ \cite{hyland:power:2007}.

\section{Omitted proofs}

\begin{proof}(Theorem \ref{lemma:preimage_homomorphism})
Assume $h$ is an algebra homomorphism from $m_X$ to $b$. Then we have:
\begin{center}
\resizebox{0.6\textwidth}{!}{
\begin{tikzpicture}[font=\scriptsize]
\node (xx) at(-1,0) {$SX$};
\node(sx) at (0,1) {$S^2X$};
\node (x) at (0,0) {$SX$};
\node(sy) at (2,1) {$S Y$};
\node (y) at (2,0) {$Y$};
\draw[->] (xx) -- (sx) node[pos=.2,above] {$Se$};
\draw[->] (xx) -- (x) node[pos=.5,below] {$\mathsf{id}$};
\draw[->] (x) -- (y) node[pos=.5,below] {$h$};
\draw[->] (sx) -- (x) node[pos=.5,right] {$m$};
\draw[->] (sy) -- (y) node[pos=.5,right] {$b$};
\draw[->] (sx) -- (sy) node[pos=.5,above] {$S h$};
\node (ineq) at (2.7,0.5) {${\implies}$};
\node (xx1) at(3,0) {$SX$};
\node(sx1) at (4,1) {$S^2X$};
\node (x1) at (4,0) {$SX$};
\node(sy1) at (6,1) {$S Y$};
\node (y1) at (6,0) {$Y$};
\draw[->] (sx1) -- (xx1) node[pos=.5,above] {$S e_-$};
\draw[->] (x1) -- (xx1) node[pos=.5,below] {$\mathsf{id}$};
\draw[->] (y1) -- (x1) node[pos=.5,below] {$h_-$};
\draw[->] (x1) -- (sx1) node[pos=.5,right] {$m_-$};
\draw[->] (y1) -- (sy1) node[pos=.5,right] {$b_-$};
\draw[->] (sy1) -- (sx1) node[pos=.5,above] {$S h_-$};
\end{tikzpicture}
}
\end{center}
A simple diagram chase gives us:
\begin{align*}
&h_-= S\left( (e_X)_-)\circ S ( h_-\right ) \circ b_-= S\left( (e_X)_-)\circ   h_-\right ) \circ b_-=S\left( (h\circ e_X)_-\right ) \circ b_-.
\end{align*}
Conversely, take $f:Y\to X$ and consider $h:SX\to Y=SX\stackrel{S f_-}{\to }SY \stackrel{b}{\to} Y$.  It is easy to prove that $h$ is an algebra homomorphism from $m_X$ to $b$. Moreover, $h_- = Sf \circ b_-$.  This completes the proof.
\end{proof}

\begin{proof}(Proposition \ref {proposition:duality_free_monads})
Consider any EM-algebra $a:F^\ast X\to X$. Then the map $\underline{a} = a\circ \nu:FX\to X$ satisfies $\nu \circ\underline{a}_- = \nu \circ \nu_- \circ a_- \leq a_-$ and, hence, we have:
$$
(\nu \circ a_-)^\ast \leq (a_-)^\ast = a_-.
$$
By our assumptions the map $(\nu\circ \underline{a}_-)^\ast_-:F^\ast X\to X$ is an Eilenberg-Moore algebra for the monad $F^\ast$. Since by our assumptions $a:F^\ast X\to X$ was the least EM-algebra greater than $\underline{a}\circ \nu_-= a\circ \nu\circ \nu_-$ we have $(\nu \circ a_-)^\ast _- = a_-$. This completes the proof.
\end{proof}

\begin{proof}(Theorem \ref{theorem:regular_maps_form_subtheory})
Let us first prove the following lemma:

\begin{lemma}\label{lemma:cotupling_regular}
Let $r_1,\ldots,r_p:1\rightdcirc \overline{S}q= 1\to TSq$ be regular morphisms. Then the map $[r_1,\ldots,r_p] \colon p\rightdcirc \overline{S}q$ is of the form  $[r_1,\ldots,r_p] = p\stackrel{\psi}{\rightdcirc} m \stackrel{\alpha}{\rightdcirc} \overline{S}m\stackrel{\overline{S}\chi}{\rightdcirc} \overline{S}q$ for  $\alpha\in \mathcal{SAT}(\overline{S})$ and $\kl(T)$ map $\chi$ and a base morphism $\psi$ for $T$.
\end{lemma}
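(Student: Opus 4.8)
The plan is to build the required data for the cotuple $[r_1,\dots,r_p]$ by taking a disjoint sum of the witnesses for the individual $r_j$'s and then routing the initial state selection through a base map. First I would unpack the hypothesis: each $r_j\colon 1\rightdcirc\overline S q$ is regular, so by definition~(\ref{exp:regular}) there are $n_j<\omega$, a state $i^{(j)}\in n_j$ (equivalently a map $i_{n_j}\colon 1\rightdcirc n_j$), a coalgebra $\alpha_j\colon n_j\rightdcirc\overline S n_j$ in $\mathcal{SAT}(\overline S)$, and a $\kl(T)$-map $\chi_j\colon n_j\rightdcirc q$ such that $r_j = \overline S\chi_j\cdot\alpha_j\cdot i_{n_j}$. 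Set $m\defeq n_1+\dots+n_p$ and let $\mathsf{in}_j\colon n_j\to m$ be the coproduct injections (base maps). The candidate data are: the coalgebra $\alpha\defeq [\,\overline S\mathsf{in}_1\cdot\alpha_1,\dots,\overline S\mathsf{in}_p\cdot\alpha_p\,]\colon m\rightdcirc\overline S m$ obtained by iterating assumption~(\ref{assumption:reg2.5}); the map $\chi\defeq[\chi_1,\dots,\chi_p]\colon m\rightdcirc q$; and the base morphism $\psi\colon p\rightdcirc m$ sending the $j$-th generator of $p$ to $\mathsf{in}_j(i^{(j)})\in m$, i.e.\ $\psi = e_m\circ f$ for the $\Set$-map $f\colon p\to m$, $j\mapsto \mathsf{in}_j(i^{(j)})$.

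Next I would verify that $\alpha\in\mathcal{SAT}(\overline S)$: this is exactly an induction on $p$ using assumption~(\ref{assumption:reg2.5}) for the binary case, together with the fact (needed for the base case when some $n_j$ is reached via associativity of $+$) that $\mathcal{SAT}(\overline S)$ is closed under the relevant reassociation isomorphisms — but those are base maps and $\mathcal{SAT}(\overline S)$-morphisms, so no extra work is required beyond~(\ref{assumption:reg2.5}). Then I would check the equation $[r_1,\dots,r_p] = \overline S\chi\cdot\alpha\cdot\psi$ coordinate-wise: precomposing both sides with the $j$-th injection $j_p\colon 1\rightdcirc p$, the left side is $r_j$; on the right side, $\psi\cdot j_p = e_m\cdot (\mathsf{in}_j(i^{(j)}))_{\!*}$ picks out $\mathsf{in}_j\cdot i_{n_j}$ (using that $\psi$ is a base map and the unit laws in $\kl(T)$), so $\alpha\cdot\psi\cdot j_p = \alpha\cdot\overline S? \dots$; more precisely one uses that $\alpha$ restricted along $\mathsf{in}_j$ is $\overline S\mathsf{in}_j\cdot\alpha_j$ by construction of the cotuple, hence $\alpha\cdot(\mathsf{in}_j\cdot i_{n_j}) = \overline S\mathsf{in}_j\cdot\alpha_j\cdot i_{n_j}$, and finally $\overline S\chi\cdot\overline S\mathsf{in}_j = \overline S(\chi\cdot\mathsf{in}_j) = \overline S\chi_j$ by functoriality of $\overline S$ and $\chi\cdot\mathsf{in}_j = \chi_j$. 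Chaining these gives $\overline S\chi_j\cdot\alpha_j\cdot i_{n_j} = r_j$, as desired.

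The main obstacle I anticipate is not any single step but the bookkeeping around the two distinct uses of the injections $\mathsf{in}_j$: once at the level of $\kl(T)$-morphisms inside the definition of $\alpha$ and $\chi$, and once as honest base maps that $\psi$ factors through. One must be careful that the coproduct $m = \sum n_j$ is a coproduct in the theory $\mathbb T_{TS}=\kl(\overline S)$ restricted to finite objects, that $[-,\dots,-]$ and the injections there agree with the $\Set$-level ones composed with units, and that $\overline S$ preserves these coproducts well enough for $\overline S\chi\cdot\overline S\mathsf{in}_j = \overline S\chi_j$ to hold on the nose — all of which follows from $\overline S$ being a lifting of a $\Set$-monad and from assumption~(\ref{assumption:reg2.5}), but it needs to be stated cleanly. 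Once the lemma is in place, the theorem follows by combining it with closure under $\kl(T)$-composition (assumption~(\ref{assumption:reg3})) and the presence of the units $\overline e_n$ in $\mathcal{SAT}(\overline S)$ (assumption~(\ref{assumption:reg2})), giving that regular maps contain identities and are closed under cotupling and composition, hence form a subtheory of $\mathbb T_{TS}$.
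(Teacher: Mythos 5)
Your construction is exactly the paper's: form $m=n_1+\dots+n_p$, take $\alpha$ to be the $p$-fold sum of the $\alpha_j$'s followed by the cotuple of the $\overline S$-images of the injections (in $\mathcal{SAT}(\overline S)$ by iterating assumption~(\ref{assumption:reg2.5})), take $\chi=[\chi_1,\dots,\chi_p]$, and route the state selection through the base map $\psi$ built from the injections. The coordinate-wise verification you spell out is left implicit in the paper, so your proposal is correct and follows essentially the same route.
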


\begin{proof}(Lemma \ref{lemma:cotupling_regular})
We take $r_i = 1\stackrel{\psi_i}{\rightdcirc } n_i \stackrel{\alpha_i} {\rightdcirc } \overline{S}n_i \stackrel{\overline{S} \chi_i}{\rightdcirc }\overline{S}q$, where $\alpha_i\in \mathcal{SAT}(\overline{S})$ for $i=1,\ldots, p$ and $\psi_i$ is a base morphism for $T$ and we show that there is $\alpha\in \mathcal{SAT}(\overline{S})$ such that $[r_1,\ldots,r_p] \stackrel{\diamond}{=} p\stackrel{\psi}{\rightdcirc} m \stackrel{\alpha}{\rightdcirc} \overline{S}m\stackrel{\overline{S}\chi}{\rightdcirc} \overline{S}q$. Define $m= n_1+\ldots+ n_p$ and put $\alpha:m\to \overline{S}m$ to be given by 
$$
m=n_1+\ldots +n_p \stackrel{\alpha_1+\ldots + \alpha_p}{\rightdcirc}\overline{S}n_1+\ldots + \overline{S}n_p \stackrel{[\overline{S}\mathsf{in}^{n_i}_{m}]_i}{\rightdcirc}\overline{S}(n_1+\ldots + n_p) = \overline{S}m
$$
By assumption (\ref{assumption:reg2.5}) $\alpha$ is a member of $\mathcal{SAT}(\overline{S})$. Moreover, take $\psi = [\psi_1,\ldots,\psi_p]$ and $\chi = [\chi_1,\ldots,\chi_p]$. Then $\chi$ is in $\kl(T)$, $\psi$ is a base morphism and the equality ($\diamond$) holds. 
\end{proof}
Note that by (\ref{assumption:reg2}) the identity maps in $\mathbb{T}_{TS}$, i.e. $\overline{e}_n:n\rightdcirc\overline{S}n=n\to TSn$ are regular. Moreover, by Lemma \ref{lemma:cotupling_regular} regular maps are closed under coptupling. Finally, if two maps $r_1:n\rightdcirc \overline{S}k$ and $r_2:k\rightdcirc\overline{S} q$ are regular then their composition in $\kl(\overline{S})=\kl(TS)$ is given by
$$
n\stackrel{r_1}{\rightdcirc} \overline{S}k\stackrel{\overline{S}r_2}{\rightdcirc }\overline{S}^2 q \stackrel{\overline{m}}{\rightdcirc}\overline{S}q 
$$
and is regular by (\ref{assumption:reg3}). This completes the proof of Theorem \ref{theorem:regular_maps_form_subtheory}.
\end{proof}

\begin{proof}(Theorem \ref{lemma:from_regular_to_morphisms}) In the first part of the proof we show that if a map $L:1\rightdcirc \overline{S}p=1\to \mathcal{P}Sp\in \mathbb{T}_{\mathcal{P}S}(1,p)$ is regular then set $\{f:1\to Sp\in \mathbb{T}_S(1,p) \mid f(1) \subseteq L(1)\}$ is recognizable. We present  a construction of a theory morphism $\mathbb{T}_S\to \mathbb{T}'$ which recognizes a given regular map (\ref{exp:regular}). We have:
$$
\infer{|\mathsf{Alg}(\alpha)|:S\mathcal{P}n\to \mathcal{P}n \in \mathcal{EM}(S), }{\infer{\mathsf{Alg}(\alpha):Sn\to \mathcal{P}n }{\infer{\mathsf{Alg}(\alpha)\defeq \alpha_-:\overline{S}n\rightdcirc n \quad  \in \mathcal{EM}(\overline{S})}{{\alpha:n\rightdcirc \overline{S}n \quad \in \mathcal{SAT}(\overline{S})}}}}
$$
where $|\mathsf{Alg}(\alpha)| \defeq S\mathcal{P}n \stackrel{\lambda}{\to}\mathcal{P}Sn \stackrel{\mathcal{P}\mathsf{Alg}(\alpha)}{\to} \mathcal{P}\mathcal{P}n \stackrel{\bigcup}{\to}\mathcal{P}n $. By Theorem~\ref{theorem:EM_on_kleisli_to_EM_on_base} the algebra $|\mathsf{Alg}(\alpha)|$ is,  indeed, an Eilenberg-Moore algebra for the monad $S$ on $\Set$.  By Subsection \ref{subsubsection:models_of_lawvere} any algebra in $\mathcal{EM}(S)$ induces a model in $\mathsf{Mod}\mathbb{T}_S$. So, we continue:
$$
\infer{A:\mathbb{T}_S^{op} \to \Set\quad  \in \mathsf{Mod}\mathbb{T}_S}{|\mathsf{Alg}(\alpha)|:S\mathcal{P}n\to \mathcal{P}n \quad  \in \mathcal{EM}(S) }
$$
The explicit recipe for the model $A$ is as follows.
\begin{center}
\resizebox{0.8\textwidth}{!}{
\begin{tikzpicture}
\node (ak) at (0.5,0) {$k\to \mathcal{P}n$};
\node (k) at (0.5,1) {$k$};
\draw[|->] (k) -- (ak) node[pos=.5,left] {\tiny $A$};

\node (af) at (3.5,0) {$f_A:(l\to \mathcal{P}n)\to (k\to \mathcal{P}n)$};
\node (f) at (3.5,1) {$f:k\to Sl \in \mathbb{T}_S(k,l)$};
\draw[|->] (f) -- (af) node[pos=.5,left] {\tiny $A$};
\node(where) at (6,0.5) { \small where };

\node (kpn) at (8,0.1) {$k\stackrel{f}{\to} Sl \stackrel{Sx}{\to} S\mathcal{P}n \stackrel{|\mathsf{Alg}(\alpha)|}{\to} \mathcal{P}n$};
\node (lpn) at (8,1) {$x:l\to \mathcal{P}n$};
\draw[|->] (lpn) -- (kpn) node[pos=.5,left] {\tiny $f_A$};
\end{tikzpicture}
}
\end{center}
Now consider the variety $\mathsf{V}(A)$ generated by $A$ which consists of all models of $\mathbb{T}_S$ obtained from $A$ by applying three types of operators: $\mathsf{H}, \mathsf{S}$ and $\mathsf{P}$. We know that $\mathsf{V}(A) = \mathsf{HSP}(A)$. The category $\mathsf{V}(A)$ admits all free objects $F_A(X):\mathbb{T}_S^{op}\to \Set$ (\eg \cite{bs81}) with their explicit  description left as an exercise for the reader in \eg \cite[Exercise 11.5]{bs81}. Whenever $X=m$ these algebras are described explicitly by:
\vspace{-0.3cm}
\begin{center}
\resizebox{0.7\textwidth}{!}{
\begin{tikzpicture}
\node (ak) at (0,0) {$\{g_A:(m\to \mathcal{P}n)\to (k\to \mathcal{P}n)\mid g:k\to Sm\in \mathbb{T}_S(k,m)\}$};
\node (k) at (0,1) {$k$};
\draw[|->] (k) -- (ak) node[pos=.5,left] {\tiny $F_{A}(m)$};

\node (kpn) at (0,-2) {$F_{A}(m)(l)\to F_{A}(m)(k); g_A\mapsto (g\diamond f)_A\stackrel{\dagger}{=} f_A\circ g_A$};
\node (lpn) at (0,-1) {$f:k\to Sl \in \mathbb{T}_S(k,l)$};
\draw[|->] (lpn) -- (kpn) node[pos=.5,left] {\tiny $F_{A}(m)$};
\end{tikzpicture}
}
\end{center}
where in the above, $\diamond$ denotes the composition in $\mathbb{T}_S$ and the identity marked with $\dagger$ follows by a simple diagram chase and is proven in Theorem \ref{theorem:theory_from_free_algebra}. The free algebra functor $F_A(-):\Set\to \mathsf{V}(A)$ induces a $\Set$-monad which induces a theory. This theory is explicitly described in the following statement.

\begin{theorem}\label{theorem:theory_from_free_algebra}
Let $\mathbb{T}_{F_A}$ be the category whose objects are $n$ for $n\geq 0$ and morphisms from $k$ to $l$ are $\mathbb{T}_{F_A}(k,l)=F_A(l)(k)$ with the composition of the morphisms $(h_1)_A\in \mathbb{T}_{F_A}(k,l)$ and  $(h_2)_A\in \mathbb{T}_{F_A}(l,m)$ given by $$(h_2\diamond h_1)_A \stackrel{\dagger}{=} (h_1)_A\circ (h_2)_A\in \mathbb{T}_{F_A}(k,m).$$ Then $\mathbb{T}_{F_A}$ is a well defined Lawvere theory.
\end{theorem}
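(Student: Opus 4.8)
(Theorem~\ref{theorem:theory_from_free_algebra})
The plan is to reduce everything to the identity marked $\dagger$. Recall that the elements of $\mathbb{T}_{F_A}(k,l)=F_A(l)(k)$ are honest functions $A(l)\to A(k)$, namely the maps $g_A$ induced by the model $A$, and that the composition prescribed for $\mathbb{T}_{F_A}$ is ordinary composition of functions: composing $(h_1)_A\colon k\to l$ with $(h_2)_A\colon l\to m$ gives $(h_1)_A\circ(h_2)_A\colon A(m)\to A(k)$. Consequently associativity is automatic, and what remains is to check: (i) the class $\{g_A\mid g\in\mathbb{T}_S(k,m)\}$ is closed under this composition; (ii) $\mathsf{id}_{A(k)}$ lies in $\mathbb{T}_{F_A}(k,k)$ and is a two-sided unit; (iii) each $n$ is an $n$-fold coproduct of $1$ in $\mathbb{T}_{F_A}$. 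Abstractly, $\mathbb{T}_{F_A}$ is the Lawvere theory associated, in the sense of Subsection~\ref{subsubsection:Lawvere_theories}, with the $\Set$-monad of the adjunction $F_A\dashv U$, where $U\colon\mathsf{V}(A)\to\Set$ is the forgetful functor; from that angle the statement is automatic, but the direct check is short.

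For (i) and (ii) I would first record the shape of the free objects. By the standard construction of free algebras in a variety generated by a single algebra (\eg \cite[Exercise~11.5]{bs81}), the derived operations of $\mathsf{V}(A)=\mathsf{HSP}(A)$ are exactly the $\mathbb{T}_S$-terms, and two of them are identified in $F_A(m)$ iff they agree in every algebra of $\mathsf{HSP}(A)$, equivalently --- since $\mathsf{HSP}(A)$ and $A$ satisfy the same identities --- iff they agree in $A$; hence $F_A(m)(k)=\{g_A\mid g\colon k\to Sm\in\mathbb{T}_S(k,m)\}$, with functorial action $g_A\mapsto(g\diamond f)_A$ along $f$. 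Now $\dagger$ reads: for $h_1\colon k\to Sl$ and $h_2\colon l\to Sm$ in $\mathbb{T}_S$, $(h_2\diamond h_1)_A=(h_1)_A\circ(h_2)_A$. Unfolding $g_A(x)=|\mathsf{Alg}(\alpha)|\circ Sx\circ g$ and the Kleisli composite $h_2\diamond h_1$ in $\mathbb{T}_S$, and using naturality of the unit $e$ and multiplication $m$ of $S$, both sides collapse to the same composite after a single application of the Eilenberg--Moore associativity law $|\mathsf{Alg}(\alpha)|\circ m_{\mathcal{P}n}=|\mathsf{Alg}(\alpha)|\circ S|\mathsf{Alg}(\alpha)|$, which holds because $|\mathsf{Alg}(\alpha)|$ is an Eilenberg--Moore algebra for $S$ by Theorem~\ref{theorem:EM_on_kleisli_to_EM_on_base}. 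This yields (i): the composite $(h_1)_A\circ(h_2)_A$ equals $(h_2\diamond h_1)_A$, and $h_2\diamond h_1\in\mathbb{T}_S(k,m)$. For (ii), the same kind of chase using the Eilenberg--Moore unit law gives $(e_k)_A=\mathsf{id}_{A(k)}$ for the identity $e_k\colon k\to Sk$ of $\mathbb{T}_S$, and the unit laws for $\diamond$ in $\mathbb{T}_S$ show $(e_k)_A$ is a two-sided unit in $\mathbb{T}_{F_A}$.

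For (iii), $F_A(l)$ is a model, hence a product-preserving functor $\mathbb{T}_S^{op}\to\Set$; since $k$ is a $k$-fold coproduct of $1$ in $\mathbb{T}_S$ we get $F_A(l)(k)\cong F_A(l)(1)^k$, and unwinding this isomorphism shows that it is precisely evaluation at the arrows $(i_k)_A\in\mathbb{T}_{F_A}(1,k)$ coming from the coprojections $i_k\colon 1\to k$ of $\mathbb{T}_S$. Naturality in $l$ follows from associativity of $\diamond$ (equivalently from $\dagger$), so each $n$ is an $n$-fold coproduct of $1$. This establishes that $\mathbb{T}_{F_A}$ is a well-defined Lawvere theory; it is moreover finitary whenever $n$ is finite, since then $A$ is finite and every $F_A(m)$ embeds into a finite power of $A$.

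The main obstacle is the description $F_A(m)(k)=\{g_A\mid g\in\mathbb{T}_S(k,m)\}$ of the free objects that underlies (i)--(iii): it relies on the universal-algebraic fact that $\mathsf{HSP}(A)$ and $A$ have the same identities, together with the explicit free-algebra construction. Everything else --- in particular the identity $\dagger$ --- follows directly from the Eilenberg--Moore axioms for $|\mathsf{Alg}(\alpha)|$ and from the corresponding facts already available in $\mathbb{T}_S$.
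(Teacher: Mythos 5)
Your proposal is correct and follows essentially the same route as the paper: the heart of the argument is the identity $\dagger$, which you verify (as the paper does, via its commuting diagram) by unfolding $g_A(x)=|\mathsf{Alg}(\alpha)|\circ Sx\circ g$, naturality, and the Eilenberg--Moore associativity law for $|\mathsf{Alg}(\alpha)|$ from Theorem~\ref{theorem:EM_on_kleisli_to_EM_on_base}; the remaining checks (unit via the EM unit law, coproducts via product-preservation of the models $F_A(l)$, and the explicit description of the free algebras from \cite{bs81}) are exactly the ``simple verification'' and preliminary facts the paper invokes, which you merely spell out in more detail.
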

\begin{proof}
At first we will prove that $$(h_2\diamond h_1)_A = (h_1)_A\circ (h_2)_A.$$ For  $x:m\to \mathcal{P}n$ the image $(h_2\diamond h_1)_A(x)$ equals $$(h_2\diamond h_1)_A(x) = k\stackrel{h_1}{\to} Sl\stackrel{Sh_2}{\to} S^2 m \stackrel{m}{\to}Sm\stackrel{Sx}{\to} S\mathcal{P}n\stackrel{|\mathsf{Alg}(\alpha)|}{\to} \mathcal{P}n.$$
The desired equality follows by commutativity of the diagram below.
\begin{center}
\resizebox{0.7\textwidth}{!}{
\begin{tikzpicture}
\node(k) at (0,-2) {$k$};
\node(Sl) at (2,-2) {$Sl$};
\node(SSm) at (4,0) {$S^2 m$};
\node(Sm) at (6,0) {$S m$};

\node(SPn) at (6,-3) {$S\mathcal{P}n$};
\node(Pn) at (6,-4) {$\mathcal{P}n$};
\node(SSPn) at (4,-3) {$S^2 \mathcal{P}n$};
\node(SPn2) at (4,-4) {$S\mathcal{P}n$};

\draw[->] (k) -- (Sl) node[pos=.5,above] {\tiny $h_1$};
\draw[->] (Sl) -- (SPn2) node[pos=.5,left] {\tiny $S(|\mathsf{Alg}|\circ Sx\circ h_2)$};
\draw[->] (Sl) -- (SSm) node[pos=.5,above] {\tiny $Sh_2$};

\draw[->] (SSm) -- (Sm) node[pos=.5,above] {\tiny $m$};
\draw[->] (SSPn) -- (SPn) node[pos=.5,above] {\tiny $m$};
\draw[->] (Sm) -- (SPn) node[pos=.5,right] {\tiny $Sx$};
\draw[->] (SSm) -- (SSPn) node[pos=.5,right] {\tiny $S^2x$};
\draw[->] (SPn) -- (Pn) node[pos=.5,right] {\tiny $|\mathsf{Alg}(\alpha)|$};
\draw[->] (SPn2) -- (Pn) node[pos=.5,above] {\tiny $|\mathsf{Alg}(\alpha)|$};
\draw[->] (SSPn) -- (SPn2) node[pos=.5] {\tiny $S|\mathsf{Alg}(\alpha)|$};
\end{tikzpicture}
}
\end{center}
This proves that $\mathbb{T}_{F_A}$ is a well-defined category. A simple verification leads to a conclusion that it admits coproducts and that $n=1+\ldots+1$ for any object $n$. Hence, $\mathbb{T}_{F_A}$ is a theory.
\end{proof}

Now let us get back to the proof of Theorem \ref{lemma:from_regular_to_morphisms}. Note that $\mathbb{T}_{F_A}$is a finitary theory which comes with a theory morphism $h:\mathbb{T}_S\to \mathbb{T}_{F_A}$ mapping $n$ to $n$ and  \begin{align}
f:k\to Sl \in \mathbb{T}_S(k,l) \text{ to }h(f)=f_A\in \mathbb{T}_{F_A}(k,l).\tag{TM} \label{equation:theory_morphism}
\end{align}

Recall that $\chi:n\to \mathcal{P}p$, consider $\chi_-:p\to \mathcal{P}n$ and take the set \begin{align*}
&T_\chi\subseteq \mathbb{T}_{F_A}(1,p)= \{g_A:(p\to \mathcal{P}n)\to (1\to \mathcal{P}n)\mid g:1\to Sp\in \mathbb{T}_S(1,p)\}
\end{align*}
defined by
$
T_\chi = \{g_A\mid g_A(\chi_-)=i_n \},
$
where $i_n:1\to \mathcal{P}n$ maps the unique element of  $1$ onto  $\{i\}$ and is the lifting of  the $\Set$-map $1\to n; 1\mapsto i$ to $\kl(\mathcal{P})$. For $f:1\to Sp\in \mathbb{T}_S(1,p)$  we then have the following chain of equivalences:
$f\in h^{-1}(T_\chi)$ iff $f_A(\chi_-)=i_n$ iff
\begin{center}
\resizebox{0.7\textwidth}{!}{
$$
\infer{f\in \{f':1\to Sp\mid f'(1) \subseteq \overline{S}\chi\bullet \alpha\bullet i_n (1) \} }{\infer{\overline{S}\chi\bullet \alpha \bullet i_n  \geq f^\sharp}{
\infer{(\overline{S}\chi\bullet \alpha \bullet i_n  )_- \geq (\overline{S}\chi\bullet \alpha \bullet i_n)_- \bullet f^\sharp\bullet f^\sharp_- = (i_n)_- \bullet i_n \bullet f^\sharp_-\geq f^\sharp_-}{
\infer{(i_n)_- \bullet (\overline{S}\chi\bullet \alpha )_- \bullet f^\sharp\bullet f^\sharp_- = (i_n)_- \bullet i_n \bullet f^\sharp_-}{
\infer{(\overline{S}\chi\bullet \alpha )_- \bullet f^\sharp\bullet f^\sharp_- = i_n \bullet f^\sharp_-}{\infer{(\overline{S}\chi\bullet \alpha)_- \bullet f^\sharp = i_n }{\infer{\alpha_- \bullet \overline{S}\chi_- \bullet f^\sharp = i_n }{\infer{\mathsf{Alg}(\alpha)\bullet \overline{S}\chi_-\bullet f^\sharp = i_n , \quad\in \kl(\mathcal{P})}{\infer{\bigcup \circ \mathcal{P}\mathsf{Alg}(\alpha)\circ \lambda\circ S\chi_-\circ f=i_n}{|\mathsf{Alg}(\alpha)|\circ S\chi_-\circ f = i_n \quad \in \Set}}}}}}}}}
$$
}
\end{center}
\noindent In the above $\bullet$ denotes the composition in $\kl(\mathcal{P})$. This completes the first part of the proof of Theorem \ref{lemma:from_regular_to_morphisms}.

Now, in order to see the converse assume there is a finitary theory $\mathbb{T}'$ and a theory morphism $h:\mathbb{T}_S\to \mathbb{T}'$. Consider a set $T\subseteq \mathbb{T}'(1,p)$. Our aim is to represent $h^{-1}(T)$ in terms of an expression of the form (\ref{exp:regular}).
 As mentioned in Subsection~\ref{subsubsection:Lawvere_theories}, without loss of generality, we may assume that
 $M_{\mathbb{T}_S} p=\mathbb{T}_S(1,p)$ and $M_\mathbb{T'} p = \mathbb{T}'(1,p)$. Since $S$ was taken to be finitary, the monad $M_{\mathbb{T}_S}$ is isomorphic to $S$ in $\mathsf{Mnd}$. Hence, in what follows we will slightly abuse the notation and denote the multiplication and unit of $M_{\mathbb{T}_S}$ by $m$ and $e$ respectively.

  The theory morphism $h:
 \mathbb{T}_S\to \mathbb{T}'$ induces a monad morphism, which will be denoted by
 $$\overline{h}:M_{\mathbb{T}_S}\implies M_{\mathbb{T}'}.$$ Following Subsection
 \ref{subsubsection:Lawvere_theories}, the $p$-component of the morphism $\overline{h}$ is
 $h_p:M_{\mathbb{T}_S}p=\mathbb{T}_S(1,p) \to M_{\mathbb{T}'}p = \mathbb{T}'(1,p); f\mapsto h(f)$. Consider the
 $p$-component of the multiplication $m'$ of $M_\mathbb{T'}$ and note that the $M_{\mathbb{T}_S}$-algebra
\begin{align*}
a:M_{\mathbb{T}_S}M_{\mathbb{T}'}(p)\stackrel{\overline{h}_{M_{\mathbb{T}'}p}}{\to} M_{\mathbb{T}'}M_{\mathbb{T}'}(p)\stackrel{m'_p}{\to} M_{\mathbb{T}'}p=\mathbb{T}'(1,p).
\end{align*}
is, in fact, a member of $\mathcal{EM}(M_{\mathbb{T}})$. This follows by the properties of the monad morphism $\overline{h}$ and a simple diagram chase. By the same properties we have the following:
\begin{center}
\resizebox{0.7\textwidth}{!}{
\begin{tikzpicture}
\node(sx) at (-4,2) {$M_{\mathbb{T}_S}M_{\mathbb{T}_S}p=M_{\mathbb{T}_S}\mathbb{T}_S(1,p)$};
\node (x) at (-4,0) {$M_{\mathbb{T}_S}p=\mathbb{T}_S(1,p)$};
\node(st') at (2,2) {$M_{\mathbb{T}_S}M_{\mathbb{T}'}p=M_{\mathbb{T}_S}\mathbb{T}'(1,p)$};
\node(rt') at (2,1) {$M_{\mathbb{T}'}M_{\mathbb{T}'}p=M_{\mathbb{T}'}\mathbb{T}'(1,p)$};
\node (t') at (2,0) {$M_{\mathbb{T}'}p=\mathbb{T}'(1,p)$};
\draw[->] (sx)--(st') node[pos=.5,above] {\tiny $M_{\mathbb{T}_S}h$};
\draw[->] (x) -- (t') node[pos=.5,below] {\tiny $h$};
\draw[->] (sx) -- (x) node[pos=.5,left] {\tiny $m$};
\draw[->] (st') -- (rt') node[pos=.5,right] {\tiny $\overline{h}$};
\draw[->] (rt') -- (t') node[pos=.5,right] {\tiny $m'$};
\end{tikzpicture}
}
\end{center}
Consider the lifting $a^\sharp\defeq \{-\}\circ a: M_{\mathbb{T}_S}\mathbb{T}'(1,p)\to \mathcal{P}\mathbb{T}'(1,p)$ of $a$ to $\kl(\mathcal{P})$. The map $a^\sharp$ is an $\overline{M_{\mathbb{T}_S}}$-algebra which is a member of $\mathcal{EM}(\overline{M_{\mathbb{T}_S}})$.  By the diagram above we have:
$$
a^\sharp \bullet \overline{M_{\mathbb{T}_S}}h^\sharp = h^\sharp\bullet m^\sharp.
$$
Theorem \ref{lemma:preimage_homomorphism} implies:
$h^\sharp_- = \overline{M_\mathbb{T}}(e^\sharp_- \bullet h^\sharp_-)\bullet \mathsf{CoAlg}(a^\sharp)$.
Hence, for any $t\in T\subseteq \mathbb{T}'(1,p)$ we have
\begin{align*}
&h^{-1}(t) = h^\sharp_-(t) = h^\sharp_-\bullet t_{\mathbb{T}'}^\sharp(1) =\overline{M_{\mathbb{T}_S}}(e^\sharp_- \bullet h_-)\bullet \mathsf{CoAlg}(a^\sharp)\bullet t_{\mathbb{T}'}^\sharp(1),
\end{align*}
where $t_{\mathbb{T}'}:1\to \mathbb{T}'(1,p)$ maps the unique element of $1$ to $t\in \mathbb{T}'(1,p)$.
This exactly shows that the map $1\to \mathcal{P}Sp$ which assigns to the unique element of $1$ the set $h^{-1}(t)\subseteq 1\to Sp$ is regular. Now, since $T$ is finite we have $h^{-1}(T) = h^{-1}(t_1)\cup\ldots h^{-1}(t_n)$ for $T=\{t_1,\ldots,t_n\}$. Hence, by the assumption (\ref{assumption:reg2}) and (\ref{assumption:reg3}) we get that maps of the form $1\stackrel{\psi}{\rightdcirc }n\stackrel{\alpha}{\rightdcirc}\overline{S}n\stackrel{\overline{S}\chi}{\rightdcirc }\overline{S}p$ are regular for any $\psi,\chi$ in $\kl(\mathcal{P})$ and $\alpha\in \mathcal{SAT}(\overline{S})$. This precisely means that regular maps are closed under finite unions. Therefore, we get the desired conclusion. This ends the proof of Theorem \ref{theorem:main}.

\end{proof}

\section{Examples}
\label{section:appendix_examples}

In this appendix we illustrate the generality of the results presented by listing some representative examples of models fitting our framework besides our running example of non-deterministic automata and regular languages.

\subsection{Tree automata and their languages}
\label{section:appendix_examples_tree}

Here, we focus on \emph{non-deterministic tree automaton}, \ie a tuple $(Q,\Sigma,\delta,\mathcal{F})$, where $\delta:Q\times \Sigma \to \mathcal{P}(Q\times Q)$ and the rest is as in the case of standard non-deterministic automata (\eg \cite{pin:automata}).

\subparagraph{Trees}
Formally, a \emph{binary tree} or simply \emph{tree} with nodes in $A$ is a function $t: P \to A$, where $P$ is a non-empty prefix closed subset of $\{l,r\}^\ast$. The set $P\subseteq \{l,r\}^\ast$ is called the \emph{domain} of $t$ and is denoted by $\mathsf{dom}(t)\defeq P$. Elements of $P$ are called \emph{nodes}. For a node $w\in P$ any node of the form $wx$ for $x\in \{l,r\}$ is called a \emph{child} of $w$. A tree is called \emph{complete} if all nodes have either two children or no children.  A \emph{height} of a tree $t$ is $\max \{ |w| \mid w\in \mathsf{dom}(t)\}$. A tree $t$ is \emph{finite} if it is of a finite height. The \emph{frontier} of a tree $t$ is $\mathsf{fr}(t) \defeq \{ x\in \mathsf{dom}(t)\mid x \{l,r\} \cap P = \varnothing \}$. Elements of $\mathsf{fr}(t)$ are called \emph{leaves}. Nodes from $\mathsf{dom}(t)\setminus \mathsf{fr}(t)$ are called \emph{inner nodes}. The \emph{outer frontier} of $t$ is defined by $\mathsf{fr}^+(t) \defeq \mathsf{dom}(t)\{l,r\}\setminus \mathsf{dom}(t)$. \ie it consists of all the words $wi\notin \mathsf{dom}(t)$ such that $w\in \mathsf{dom}(t)$ and $i\in \{l,r\}$. Finally, set $\mathsf{dom}^+(t) \defeq \mathsf{dom}^+(t) \cup \mathsf{fr}^+(t)$.

Let $T_{\Sigma}X$ denote the set of all finite complete trees $t:P\to \Sigma+X$ with inner nodes taking values in $\Sigma$ and which have leaves from the set $X$. Note that trees from $T_{\Sigma}X$ of height $0$ can be thought of as elements of $X$. Hence, we may write $X\subseteq T_\Sigma X$.

\subparagraph{Languages}
Let $\mathcal{Q}=(Q,\Sigma,\delta,\mathcal{F})$ be a tree automaton. A \emph{run} of the automaton $\mathcal{Q}$ on a finite tree $t\in T_\Sigma(1)$ starting at the state $s\in Q$ is a map $\mathfrak{r}:\mathsf{dom}^+(t)\to Q$ such that $\mathfrak{r}(\varepsilon)=s$ and for any $x\in \mathsf{dom}(t)\setminus \mathsf{fr}(t)$ we have
$
(\mathfrak{r}(xl),\mathfrak{r}(xr))\in \delta(\mathfrak{r}(x),t(x)).
$
We say that the run $\mathfrak{r}$ is \emph{successful} if $\mathfrak{r}(w)\in \mathcal{F}$ for any $w\in \mathsf{fr}^+(t)$ for the tree $t$. The set of finite trees recognized by a state $s$ in $\mathcal{Q}$ is defined as the set of finite trees  $t\in T_\Sigma (1)$ for which there is a run  in $\mathcal{Q}$ starting at $s$ which is succesful on the tree $t$.

\subparagraph{Tree automata and regular behaviours}

Let  $\mathcal{Q}=(Q,\Sigma,\delta,\mathcal{F})$ be a finite non-deterministic tree automaton. Since $Q$ is finite without any loss of generality we may assume $Q=n$ for some $n\geq 0$. Moreover, in this case, $\delta:Q\times \Sigma\to \mathcal{P}(Q\times Q)$ can be given in terms of
$$
\alpha:n\to \mathcal{P}(n\times \Sigma \times n); i\mapsto \{(j,a,k)\mid (j,k)\in \delta(a,i)\}.
$$
Hence, any non-deterministic tree automaton can be viewed as a pair $(\alpha:n\to \mathcal{P}(n\times \Sigma\times n),\mathcal{F})$. Since $n\times \Sigma \times n$ is, in fact, the set of binary trees from $T_\Sigma$ of height one, the pair becomes $(\alpha:n\to \mathcal{P}T_\Sigma n, \mathcal{F})$. By \cite{brengos2018:concur}, the
map $L(\alpha,\mathcal{F},i):1\to \mathcal{P}T_\Sigma 1 = 1\stackrel{i_n}{\rightdcirc} n \stackrel{\alpha^\ast}{\rightdcirc} \overline{T_\Sigma} n\stackrel{\overline{T_\Sigma}\chi_{\mathcal{F}}}{\rightdcirc} \overline{T_\Sigma} 1$, where $\chi_\mathcal{F}$ is as in Section \ref{section:classical_regular_revisited}, satisfies:
\begin{align*}
&t\in L(\alpha,\mathcal{F},i)(1) \iff t \text{ is accepted by the state $i$} \text{ in the tree automaton }(\alpha,\mathcal{F}).
\end{align*}

The rest of this section will be devoted to considering an example of a regular tree language which we will characterize in terms of Lawvere theory morphism recognition. This characterisation will be given in more details (compared to the sketch presented in Section \ref{section:examples}). We assume the reader is familiar with the proof of Theorem \ref{lemma:from_regular_to_morphisms}. Let $\Sigma = \{a,b\}$ and consider a two state automaton whose transition map $\alpha:2\to \mathcal{P}(2\times \Sigma \times 2)$ is defined by $1\mapsto \varnothing,  2\mapsto \{(2,a,2),(1,b,1)\}$
and $\mathcal{F}=\{1\}$. It is easy to see that $L(\alpha,\mathcal{F},2)(1)\in \mathcal{P}T_\Sigma 1$ consists of binary trees of height $>0$ whose nodes preceding the leaf nodes are all $b$ and the remaining non-leaf nodes are $a$. Indeed,  the saturated map $\alpha^\ast:2\to \mathcal{P}(T_\Sigma 2)$ satisfies $\alpha^\ast(1) = \{1\}$ and $\alpha^\ast(2)$ contains the trees: $2$, $(1,b,1)$ and satisfies the following implication: if $t,t'\in \alpha^\ast(2)$ then $(t,a,t')$ is in $\alpha^\ast(2)$. The morphism $L(\alpha,\mathcal{F},2):1\rightdcirc \overline{T}_\Sigma 1=1\to \mathcal{P}T_\Sigma 1$ is exactly as stated above since
$L(\alpha,\mathcal{F},2) = 1\stackrel{2_2}{\rightdcirc} 2\stackrel{\alpha^\ast}{\rightdcirc }\overline{T_\Sigma}2 \stackrel{\overline{T_\Sigma}\chi_{\{1\}}}{\rightdcirc} \overline{T_\Sigma}1.$

The map $\mathsf{Alg}(\alpha^\ast)=\alpha^\ast_-:T_\Sigma 2\to \mathcal{P}2$ assigns $1\mapsto \{1\}$, $2\mapsto \{2\}$ and for any $t\in \alpha^\ast(2)$, $t\mapsto \{2\}$. The remaining trees from $T_\Sigma 2$ are assigned to $\varnothing$.  This yields a $4$-element algebra $|\mathsf{Alg}(\alpha^\ast)|:T_\Sigma \mathcal{P}2 \to \mathcal{P} 2$ which takes any tree in $T_\Sigma \mathcal{P}2$ with a leaf equal $\varnothing$  to $\varnothing$ and any other tree $T\in T_\Sigma \mathcal{P}2$ is mapped onto
$
T\mapsto \bigcup \{ \mathsf{Alg}(\alpha^\ast)(t)=\alpha^\ast_-(t)\mid t \ll T \},
$
where $t\ll T$ means that $t$ is obtained from $T$ by replacing all subset leaves of $T$ by one of their elements. In particular,  $(x,a,\{1\}),(\{1\},a,x),(x,\sigma,\varnothing),(\varnothing,\sigma,x),(\{2\},b,\{2\}) \mapsto \varnothing$ for any $x,x'\in \mathcal{P}2$, $(x,a,x')\mapsto \{2\}$ if $2\in x,x'$ and $(x,b,x')\mapsto \{1\}$ if $1\in x,x'$. Since, $|\mathsf{Alg}(\alpha^\ast)|$ is an Eilenberg-Moore algebra for $T_\Sigma$, all other values are uniquely determined by these.

Now, the carrier of the free algebra $F_A(1)$ over $1$, \ie $F_A(1)(1)$, is a submonoid of the monoid of maps $(1\to \mathcal{P}2)\to (1\to\mathcal{P}2)$ (or equivalently, of maps $\mathcal{P}2\to \mathcal{P}2$)  generated by the assignments:
$a_A:\mathcal{P}2\to \mathcal{P}2$ and $b_A:\mathcal{P}2\to \mathcal{P}2$  defined by $a_A(x) = (x,a,x)$ and $b_A(x)=(x,b,x)$. It turns out that it has $4$ elements which are given in the following table.
\begin{wrapfigure}[5]{l}{0.5\textwidth}
\vspace{-0.4cm}
\resizebox{0.5\textwidth}{!}{
$\begin{array}{c|c|c|c|c|}
 & \mathsf{id} & a_A=a_A^2 & b_A=a_A\circ b_A  & b_A^2=b_A\circ a_A\\\hline
\varnothing &\varnothing &\varnothing  &\varnothing &\varnothing\\ \hline
\{1\} & \{1\} & \varnothing & \{2\} &  \varnothing  \\ \hline
\{2\} & \{2\} & \{2\} & \varnothing  & \varnothing  \\ \hline
\{1,2\} & \{1,2\} & \{2\} & \{2\} &\varnothing  \\ \hline
\end{array}$
}
\end{wrapfigure}
Now, in our case, $T_\chi$ consists of maps from the above table that assign to $\{1\}$ the value $\{2\}$, \ie $T_\chi=\{b_A\}$, and the induced Lawvere theory morphism $\mathbb{T}_{T_\Sigma}\to \mathbb{T}_{F_A}$ restricted to $\mathbb{T}_{T_\Sigma}(1,1)$ maps the tree $1$ to $\mathsf{id}$, any tree $t$ such that after the composition with the tree $(1,b,1)$ in $\mathbb{T}_{T_\Sigma}$ the result is in $L(\alpha,\mathcal{F},2)$ to $a_A$, any tree from $L(\alpha,\mathcal{F},2)$ onto $b_A$ and any other to $b^2_A$. We see that it is a monoid homomorphism\footnote{Note that the composition in $\mathbb{T}_{F_A}(1,1)$ is defined to be the composition of $F_A(1)(1)$ in the reversed order. (\emph{conf.} Theorem \ref{theorem:theory_from_free_algebra}).} and that $L(\alpha,\mathcal{F},2)=h^{-1}(\{b_A\})$.

\subsection{Weighted automata and their languages}
\label{section:appendix_examples_weighted}

\subparagraph{Generalised countable multisets}

A semiring $(\mathcal{S},+,0,\cdot,1)$ is said to be 
\emph{positively ordered} whenever it can be equipped with a partial
order $(\mathcal{S},\leq)$ such that the unit $0$ is the bottom element of this ordering and semiring operations are monotonic in both components \ie $x \leq y$  implies $x \mathrel{\diamond} z \leq y \mathrel{\diamond} z$ and 
$z \mathrel{\diamond} x \leq z \mathrel{\diamond} y$ for $\diamond \in 
\{+,\cdot\}$ and $x,y,z\in \mathcal{S}$.  
A semiring is positively ordered if and only if it is \emph{zerosumfree} \ie $x + y = 0$ implies $x = y = 0$; the natural order $x \lhd y
\iff \exists z. x + z = y$ is the weakest one rendering $\mathcal{S}$ positively
ordered.

A positively ordered semiring is said to be \emph{$\omega$-complete} if it has
countable sums given as
$\sum_{i \leq \omega} x_i = \sup\{
		\sum_{j\in J} x_j \mid
		J \subset \omega \}
$. It is called \emph{$\omega$-continuous} if suprema of ascending
$\omega$-chains exist and are preserved by both operations
\ie:
$y \mathrel{\diamond} \bigvee_{i} x_i = \bigvee_{i} y \mathrel{\diamond} x_i$
and
$\bigvee_{i} x_i \mathrel{\diamond} z	= \bigvee_{i} x_i \mathrel{\diamond} z$
for $\diamond \in \{+,\cdot\}$ and $x,y,z\in \mathcal{S}$.
Examples of such semirings are:
the boolean semiring, the arithmetic semiring of non-negative real numbers with infinity and the tropical semiring. 

Henceforth we assume $(\mathcal{S},+,0,\cdot,1,\leq)$ to be a positive, $\omega$-complete, $\omega$-continuous semiring.

A countable $\mathcal{S}$-multiset is a pair $(X,\phi)$ where $X$ is a set and $\phi\colon X \to \mathcal{S}$ is a function such that the set $\supp(\phi) = \{x \mid \phi(x) > 0 \}$ (called \emph{support}) is countable. We will abuse the notation and simply write $\phi$ for the multiset $(X,\phi)$. We often write $\mathcal{S}$-multiset as formal sums \ie sums of expressions of form $s \bullet x$ where $x$ is an element and $s$ its membership degree.

The $\mathcal{S}$-multiset functor $\mathcal{M_S}\colon \Set \to \Set$ assigns to every set $X$ the set $\mathcal{M_S}X$ of $\mathcal{S}$-multisets with universe $X$, and to every function $f\colon X \to Y$ the function mapping each $(X,\phi)$ to $(X, \sum_{x \in \supp(\phi)} \phi(x) \bullet f(x))$. 
This functor carries a monad structure $(\mathcal{M_S},\mu,\eta)$ whose multiplication $\mu$ and unit $\eta$ are given on each set $X$ by the mappings:
\[(\mathcal{M_S}X,\psi) \mapsto \left(X,\sum (\phi(x) \cdot \psi(\phi)) \bullet x\right)
\qquad x \mapsto (X,x \bullet 1)
\text{.}
\]
The probability distribution monad $\mathcal{D}$ is a submonad of $\mathcal{M}_{[0,\infty]}$ \cite{brengos2015:jlamp}.

\subparagraph{Weighted languages and automata}
Fix an alphabet $\Sigma$ and a semiring $\mathcal{S}$. A weighted language (of finte words) is an $\mathcal{S}$-multiset with universe $\Sigma^\star$ and a weighted automaton $\mathcal{A}$ is (up to minor notational changes to the classical presentation \cite{DBLP:journals/iandc/Schutzenberger61b,DBLP:journals/iandc/BonchiBBRS12}) is a tuple $(m,\alpha,\chi)$ where $n$ is the set of states, $\alpha\colon n \to \mathcal{M_S}(\Sigma \times n)$ is the transition function (\ie a ternary $\mathcal{S}$-relation on $n\times\Sigma\times n$) and $(n,\chi\colon n \to \mathcal{S}$) is the multiset of final states. 
Given a weighted automaton $\mathcal{A} = (n,\alpha,\chi)$, the language accepted by a state $i$ of $\mathcal{A}$ is the multiset $L(\mathcal{A},i)$ with universe $\Sigma^\ast$ and membership function:
\[
	L(\mathcal{A},i)(\sigma_1\dots\sigma_k) = \sum \left\{\chi(j_k)\cdot \prod_{p< k} \hat\alpha(j_{p})(\sigma_{p+1},j_{p+1}) \,\middle|\, j_0 = i, j_1,\dots,j_k \in n \right\}
	\text{.}
\]

\subparagraph{Weighted automata and duality}

The multiset monad $\mathcal{M_S}$ and its Kleisli category do not meet conditions $(A)-(C)$ because of the cardinality constraint on multisets (which is ultimately due to $\mathcal{S}$ not admitting sums of arbitrary cardinality). There are two workarounds to this problem: one is to extend the definition of $\mathcal{M_S}$ to cover functions of arbitrary supports, the other is to realize that when dealing with regular behaviours as in (\ref{exp:regular}) we actually focus on systems over a finite state space. 
Hence, we restrict w.l.o.g.~to the subcategory identified by finite sets.
In this settings, the functor $(-)_-$ is readily defined by ``swapping columns and rows'' (maps in $\kl(\mathcal{M_S})$ can be regarded as $\mathcal{S}$):
\[
  f_-(y)(x) \defeq f(x)(y)\text{.}
\]
If $f$ is in the image of $\Set$ then the grade of each $f(x)$ is $1$. It follows that $f\circ f_- \leq \mathsf{id}$ and $f_-\circ f \geq \mathsf{id}$.

The free monad $\Sigma^\ast$ lifts to a monad $\overline{\Sigma^\ast}$ on $\kl(\mathcal{M_S})$ by the unique extension of the distributive law of the functor $\Sigma\times \mathcal{I}d$ over the monad $\mathcal{M_S}$ given by the assignment $\Sigma\times (X,\phi) \mapsto (\Sigma \times X, \sum\phi(x)\bullet(\sigma,x))$ \cite{brengos2015:jlamp}.
In this case, if $\Sigma^\ast$ is countable then any Eilenberg-Moore algebra $a\colon \overline{\Sigma^\ast} n \rightdcirc n=\Sigma^\ast\times n \to \mathcal{M_S}n$ on $\kl(\mathcal{M_S})$ yields a saturated system $a_-\colon  n\rightdcirc \overline{\Sigma^\ast} n = n\to \mathcal{M_S}({\Sigma^\ast \times n})$ by simple currying and uncurrying. Since $\overline{\Sigma^\ast}$ is the free monad over $\overline{\Sigma}$ on $\kl(\mathcal{M_S})$ \cite{brengos2015:lmcs}, the Eilenberg-Moore algebra $a$ is uniquely determined by $\underline{a}\colon \Sigma\times n \to \mathcal{M_S}n$ (\emph{conf} Subsec. \ref{subsection:the_duality}). Hence, so is its dual $a_-$.

\subsection{Fuzzy automata and their languages}
\label{section:appendix_examples_fuzzy}

\subparagraph{Quantales and fuzzy sets}

Let $(\mathcal{Q},\cdot ,1,\leq)$ be a \emph{unital quantale}, \ie. a relational structure with the property that: 
\begin{itemize}
\item $(\mathcal{Q},\cdot,1)$ is a monoid,
\item $(\mathcal{Q},\leq)$ is a complete lattice,
\item arbitrary suprema are preserved by the monoid multiplication.
\end{itemize}
In other words, a unital quantale is a monoid in the category $\Sup$ of join-preserving homomorphisms between complete join semi-lattices.
In the sequel we will often write $\perp_\mathcal{Q}$ or simply $\perp$ for the supremum of the empty set and denote a quantale $(\mathcal{Q},\cdot ,1,\leq)$ by its carrier set $\mathcal{Q}$, provided the associated structure is clear from the context.
Examples are given by booleans $(\{\bot,\top\},\land,\top,\implies)$, the real unit interval $([0,1],\cdot,1,\leq)$. More generally, for a monoid $(M,\odot,e)$ equipped with an order $\leq$ that is preserved by $\cdot$, the set $\mathcal{P}_{\downarrow}M$ of downward closed subsets of $M$ carries a unital quantale structure $(\mathcal{P}_{\downarrow}M,\cdot,1,\subseteq)$ where $1$ is the downward cone with cusp $e$:
\[ 
	1 = \{m \in M \mid m \leq e \}
\]
and $\cdot$ is the downward closure of the pointwise extension of $\odot$ to subsets of $M$:
\[
	X \cdot Y = \{m \mid \exists x \in X, \exists y \in Y (m \leq x \odot y)\}
	\text{.}
\]
A $\mathcal{Q}$-fuzzy set is a pair $(X,\phi)$ where $X$ is a set (called universe of discourse and often left implicit) and $\phi\colon X \to \mathcal{Q}$ is a a membership function. The value $\phi(x)$ is called \emph{grade of membership} of $x$. Taking the boolean quantale as $\mathcal{Q}$ yields ordinary sets and taking the real unit interval interval yields fuzzy sets in the classical sense.

The \emph{$\mathcal{Q}$-fuzzy powerset} functor $\mathcal{P_Q}\colon \Set \to \Set$ assigns to every set $X$ the set $\mathcal{P_Q}X = (X \to \mathcal{Q})$ of all fuzzy sets with universe $X$, and to every function $f\colon X \to Y$ the function mapping each $(X,\phi)$ to 
$(Y,\lambda y.\bigvee_{x:f(x)=y} \phi(x))$.
This functor carries a monad structure $(\mathcal{P_Q},\bigcup_\mathcal{Q},\{-\}_\mathcal{Q})$ when equipped with flattening and embedding into singletons. In particualr, the components of monadic multiplications and unit at $X$ are given by the assignments:
\[
  (\mathcal{P_Q}X,\psi) \mapsto (X,\lambda x.\vee_{\phi \in \mathcal{P_Q}X} \phi(x) \cdot \psi(\phi))
  \qquad
  (x \mapsto (X,\lambda y.\text{if } x=y \text{ then } 1 \text{ else } \bot)
  \text{.}
\]
The powerset monad $\mathcal{P}$ is a special case of the above where $\mathcal{Q}$ is the boolean quantale. 

\subparagraph{Fuzzy languages and automata}
Fix an alphabet $\Sigma$ and a quantale $\mathcal{Q}$.
A fuzzy language of (finite words) is a $\mathcal{Q}$-fuzzy set with universe $\Sigma^\ast$ and a fuzzy automaton (on finite words over $\Sigma$) is (up to minor notational changes to the classical presentation \cite{mordeson:2002fuzzy,doostfatemeh:2005fuzzy}) a tuple $\mathcal{A} = (n,\alpha,\chi)$ where $n$ is the state space, $\alpha\colon n \to \mathcal{P_Q}(\Sigma \times n)$ is the transition function (\ie a fuzzy ternary relation on $n\times\Sigma,\times n$), and $(n,\chi\colon n \to \mathcal{Q})$ is the fuzzy set of final states.
Given a fuzzy automaton $\mathcal{A} = (n,\alpha,\chi)$, the language accepted by a state $i$ of $\mathcal{A}$ is the fuzzy set $L(\mathcal{A},i)$ with universe $\Sigma^\ast$ and membership function:
\[
L(\mathcal{A},i)(\sigma_1\dots\sigma_k) = \bigvee \left\{\chi(j_k)\cdot \prod_{p< k} \alpha(j_{p})(\sigma_{p+1},j_{p+1}) \,\middle|\, j_0 = i, j_1,\dots,j_k \in n \right\}\text.
\]

\subparagraph{Fuzzy automata and duality}
The fuzzy powerset monad and its Kleisli category meet conditions $(A)-(C)$ from \cref{section:duality} when $\Set$ is taken as $\cat{J}$. This follows by the simple observation that the Kleisli category of $\mathcal{P_Q}$ is isomorphic to the category $\cat{Mat}-\mathcal{Q}$ of $\mathcal{Q}$-valued matrices. The functor $(-)_-$ is readily defined by ``swapping columns and rows'':
\[
  f_-(y)(x) \defeq f(x)(y)\text{.}
\]
If $f$ is in the image of $\Set$ then the grade of each $f(x)$ is $1$. It follows that $f\circ f_- \leq \mathsf{id}$ and $f_-\circ f \geq \mathsf{id}$.

The free monad $\Sigma^\ast$ lifts to a monad $\overline{\Sigma^\ast}$ on $\kl(\mathcal{P_Q})$ by the unique extension of the distributive law of the functor $\Sigma\times \mathcal{I}d$ over the monad $\mathcal{P_Q}$ given by the assignment $(\Sigma \times (X,\phi)) \mapsto (\Sigma \times X, \lambda (\sigma,x).\phi(x))$. The monad $\overline{\Sigma^\ast}$ meets condition $(D)$ from \cref{section:duality} \cite{brengos2019:lmcs}. In particular, saturation is defined on every $\alpha$ as $\alpha^\ast(x)(y) = \bigvee_{k < \omega} \alpha^k$ (where $\alpha^0 = \mathsf{id}$ and $\alpha^{k+1} = \alpha \circ \alpha^k$).
It follows from \cref{proposition:duality_free_monads} that every Eilenberg-Moore algebra for $\overline{\Sigma^\ast}$ is dual to the saturations of morphism of the form  $X\to \mathcal{P_Q}(\Sigma \times X)\hookrightarrow \mathcal{P_Q}(\Sigma^\ast \times X)$ and it follows by construction of $\overline{\Sigma^\ast}$ that any $\alpha \in \mathcal{SAT}(\overline{\Sigma^\ast})$ h map is determined is equivalently defined as a map $\hat{\alpha}\colon n\to \mathcal{P_Q}(\Sigma \times n)$ \ie the transition map of a fuzzy automaton. As a consequence, the language $L(\mathcal{A},i)$ accepted by a state $i$ of a fuzzy automaton $\mathcal{A} = (n,\alpha,\chi)$ is given by the regular map:
\[
 L(\mathcal{A},i) = 1\rightdcirc n \stackrel{(\nu_X \circ \hat{\alpha})^\ast}{\rightdcirc } \overline{\Sigma^\ast}n \stackrel{\overline{\Sigma^\ast}\chi}{\rightdcirc }\overline{\Sigma^\ast}1
\]
where $\nu$ is the canonical embedding of $\mathcal{P_Q}(\Sigma\times \mathcal{I}d)$ into $\mathcal{P_Q}(\Sigma^\ast\times \mathcal{I}d)$.

\end{document}